\newcommand{\remove}[1]{}
\newcommand{\unTAP}{$A_{TAP}$}
\newcommand{\unAug}{$A_{Aug}$}
\newcommand{\weTAP}{$A_{wTAP}$}
\newcommand{\unAugTag}{$A'_{Aug}$}
\newtheorem{theorem}{Theorem}[section]
\newtheorem{lemma}[theorem]{Lemma}
\newtheorem{corollary}[theorem]{Corollary}
\newtheorem{claim}[theorem]{Claim}
\newtheorem{definition}{Definition}[section]
\newenvironment{theorem-repeat}[1]{\begin{trivlist}
\item[\hspace{\labelsep}{\textcolor{darkgray}{$\blacktriangleright$}\nobreakspace\sffamily\bfseries Theorem \ref{#1}.}]\em }%
{\end{trivlist}}
\title{Fast Distributed Approximation for TAP \\and 2-Edge-Connectivity\thanks{A preliminary version of this paper appeared in OPODIS 2017.}}
\author{Keren Censor-Hillel\footnote{Technion, Department of Computer Science, \texttt{\{ckeren,smichald\}@cs.technion.ac.il}. Supported in part by the Israel Science Foundation (grant 1696/14).}
\and Michal Dory\footnotemark[2]{}
}
\begin{document}

\begin{titlepage}

\maketitle

\begin{abstract}
The \emph{tree augmentation problem (TAP)} is a fundamental network design
problem, in which the input is a graph $G$ and a spanning tree $T$ for it,
and the goal is to augment $T$ with a minimum set of edges $Aug$ from $G$,
such that $T \cup Aug$ is 2-edge-connected.

TAP has been widely studied in the sequential setting. The best known
approximation ratio of 2 for the weighted case dates back to the work of 
Frederickson and J{\'{a}}J{\'{a}}, SICOMP 1981. 
Recently, a 3/2-approximation was 
given for unweighted TAP by Kortsarz and Nutov, TALG 2016.
Recent breakthroughs give an approximation of 1.458 for unweighted TAP 
[Grandoni et al., STOC 2018], and approximations better than 2 for bounded
weights [Adjiashvili, SODA 2017; Fiorini et al., SODA 2018].

In this paper, we provide the first fast \emph{distributed} approximations
for TAP. We present a distributed $2$-approximation for weighted TAP which
completes in $O(h)$
rounds, where $h$ is the height of $T$. When $h$ is large, we show a much
faster 4-approximation algorithm for the unweighted case, completing in
$O(D+\sqrt{n}\log^*{n})$ rounds, where $n$ is the number of vertices and
$D$ is the diameter of $G$.

Immediate consequences of our results are an $O(D)$-round 2-approximation
algorithm for the minimum size 2-edge-connected spanning subgraph, which 
significantly improves upon the running time of previous approximation 
algorithms, and an $O(h_{MST}+\sqrt{n}\log^{*}{n})$-round 3-approximation 
algorithm for the weighted case, where $h_{MST}$ is the height of the MST 
of the graph. Additional applications are algorithms for verifying 
2-edge-connectivity and for augmenting the connectivity of any connected 
spanning subgraph to 2.

Finally, we complement our study with proving lower bounds for distributed
approximations of TAP.

\end{abstract}

\thispagestyle{empty} 
\end{titlepage}

\section{Introduction}

The tree augmentation problem (TAP) is a central problem in network design. In TAP, the input is a 2-edge-connected\footnote{A graph $G$ is 2-edge-connected if it remains connected after the removal of any single edge.} graph $G$ and a spanning tree $T$ of $G$, and the goal is to augment $T$ to be 2-edge-connected by adding to it a minimum size (or a minimum weight) set of edges from $G$. Augmenting the connectivity of $T$ makes it resistant to any single link failure, which is crucial for network reliability. TAP is extensively studied in the sequential setting, with several classical 2-approximation algorithms \cite{frederickson1981approximation,khuller1993approximation,goemans1994improved,jain2001factor}, as well as recent advances with the aim of achieving better approximation factors \cite{kortsarz2016simplified, DBLP:journals/corr/FioriniGKS17, adjiashvili2017beating, cheriyan2015approximating, DBLP:conf/stoc/0001KZ18}. 

TAP is part of a wider family of \emph{connectivity augmentation} problems.
Finding a minimum spanning tree (MST) is another prime example for a problem in this family, but, 
although an MST is a low-cost backbone of the graph, it cannot survive even one link failure. Hence, in order to guarantee stronger reliability, it is vital to find subgraphs with higher connectivity. The motivation for considering TAP is for the case that adding any new edge to the backbone incurs a cost, and hence if we are already given a subgraph with some connectivity guarantee then we would naturally like to augment it with additional edges of minimum number or weight, rather than to compute a well-connected low-cost subgraph from scratch. 
Connectivity augmentation problems also serve as building blocks in other connectivity problems, such as computing the minimum $k$-edge-connected subgraph. A natural approach is to start with building a subgraph that satisfies some connectivity guarantee (e.g., a spanning tree), and then augment it to have stronger connectivity. 

Since the main motivation for TAP is improving the reliability of distributed networks, it is vital to consider TAP also from the distributed perspective. 
In this paper, we initiate the study of distributed connectivity augmentation and present the first distributed approximation algorithms for TAP. We do so in the CONGEST model~\cite{peleg2000distributed}, in which vertices exchange messages of $O(\log{n})$ bits in synchronous rounds, where we show fast algorithms for both the unweighted and weighted variants of the problem.
In addition to fast approximations for TAP, our algorithms have the crucial implication of providing efficient algorithms for approximating the minimum 2-edge-connected spanning subgraph, as well as  
several related problems, such as verifying 2-edge-connectivity and augmenting the connectivity of any spanning connected subgraph to 2. 
Finally, we complement our study with proving lower bounds for distributed approximations of TAP.

\subsection{Our Contributions}

\subsubsection*{Distributed approximation algorithms for TAP}

Our first main contribution is the first distributed approximation algorithm for TAP.  In particular, our algorithm provides a 2-approximation for weighted TAP in the CONGEST model, summarized as follows.

\begin{restatable}{theorem}{wTAP} \label{wTAP}
There is a distributed 2-approximation algorithm for weighted TAP in the CONGEST model that runs in $O(h)$ rounds, where $h$ is the height of the tree $T$. 
\end{restatable}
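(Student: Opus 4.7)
The plan is to give a distributed implementation of the classical Frederickson--J\'{a}J\'{a} 2-approximation for weighted TAP, reducing TAP to an auxiliary \emph{up-TAP} problem on $T$ and solving the latter optimally via a distributed DP. First, I would root $T$ at an arbitrary vertex via leader election on $T$, having every vertex learn its depth and parent in $O(h)$ rounds. Then each non-tree edge $(u,v)$ would compute $\mathrm{lca}(u,v)$ in $T$ by pipelining its two endpoints' ancestor sequences up $T$ until they meet; run in parallel for all non-tree edges, this completes in $O(h)$ rounds. Next, I would replace each non-tree edge $e=(u,v)$ by two directed ``up-halves'' $(u \to \mathrm{lca}(u,v))$ and $(v \to \mathrm{lca}(u,v))$, each of weight $w(e)$. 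The resulting up-TAP instance asks for a minimum-weight set $S$ of up-halves such that, for every non-root vertex $x$, at least one up-half in $S$ has tail in $T_x$ and head a strict ancestor of $x$ (equivalently, it crosses the tree edge above $x$).

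I would then solve up-TAP optimally by a bottom-up distributed DP on $T$ followed by a top-down reconstruction, each completed in $O(h)$ rounds. For each vertex $v$, the DP maintains a compact summary $g_v(\cdot)$ of the minimum cost of covering $v$'s strict descendants, parameterized by the highest reach above $v$ attained by any chosen up-half with tail in $T_v$. Since $g_v$ is monotone non-increasing in its argument, it is encoded by at most $O(h)$ breakpoints, which permits pipelined propagation along tree edges and streaming merges of children's summaries at a common parent without congestion. Having computed an optimal up-TAP solution $S^*$, the algorithm outputs $F := \{e \in E(G)\setminus T : \text{at least one up-half of } e \text{ is in } S^*\}$.

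For correctness, $F$ is feasible because every tree edge is crossed by some up-half in $S^*$, and the original non-tree edge contributing it covers that tree edge in $T \cup F$. For the approximation ratio, doubling any optimal TAP solution $F^*$ (taking both halves of each edge) yields a feasible up-TAP solution of weight $2\cdot w(F^*)$, so $w(S^*) \leq 2\cdot\mathrm{OPT}$; since $F$ pays for each original non-tree edge at most once regardless of how many of its halves lie in $S^*$, we get $w(F)\leq w(S^*)\leq 2\cdot\mathrm{OPT}$. The hardest part will be implementing the up-TAP DP within $O(h)$ CONGEST rounds: a naive approach propagates $\Theta(h)$-sized tables for $\Omega(h^2)$ rounds total, so the core technical work will be proving that the summaries $g_v$ admit a sufficiently compact, monotone representation and scheduling their transmission so that the total load on any single tree edge throughout the protocol stays within $O(h)$.
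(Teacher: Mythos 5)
Your proposal is, in substance, the same route the paper takes: you reduce weighted TAP to an auxiliary ``up-edge'' instance on $T$ (the paper's virtual graph $G'$, obtained by splitting every non-tree edge at its LCA), solve that instance optimally by a bottom-up pass over $T$ followed by a top-down reconstruction, and then convert back to $G$, paying a factor 2 because each original edge was split in two. Your final approximation-ratio argument (doubling an optimal TAP solution is feasible for up-TAP, and converting $S^*$ back to $F$ never increases cost) is exactly Lemma~\ref{corr} together with Claims~\ref{claim1} and~\ref{claim2}. The LCA computation by exchanging ancestor sequences over non-tree edges of $G$, pipelined over $O(h)$ rounds, is a valid if heavier alternative to the paper's $O(\log n)$-bit LCA labeling scheme of Alstrup et al.; both take $O(h)$ rounds.

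Where you stop short is precisely where the paper's contribution lies. You observe that a naive DP table $g_v(\cdot)$ has size $\Theta(h)$ and would give $\Theta(h^2)$ rounds, and you assert (without proof) that a ``compact, monotone representation'' plus careful scheduling brings this down to $O(h)$ -- calling this ``the hardest part.'' The paper's resolution is the \emph{altered-weights} mechanism: instead of propagating the full value $g_v(u)$ for every ancestor $u$, each vertex $v$ subtracts $\min_v := w_v(p(v))$ before forwarding, so it only propagates marginal reach costs $w_v(u) - \min_v$. This is what makes the streaming merge at a parent a simple pointwise $\min$ over children (rather than a sum-plus-min requiring all children's base costs first), and it is also what enables a short, clean optimality proof: the quantities $\min_v$ are assigned as costs $c(t)$ to the tree edges, and Claims~\ref{c1}--\ref{c4} show the sum of these costs simultaneously equals the algorithm's solution value and lower-bounds any augmentation of $G'$. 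Your sketch points at the right problem and the right message structure (send summaries in increasing order of ancestor height), but the specific reweighting that makes both the $O(h)$ pipelining and the optimality argument go through is the missing ingredient you flag rather than supply.
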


The approximation ratio of our algorithm matches the best approximation ratio for weighted TAP in the sequential setting. Its round complexity of $O(h)$ is tight if $h = O(D)$, where $D$ is the diameter of $G$. This happens, for example, when $T$ is a BFS tree, and follows from a lower bound of $\Omega(D)$ rounds which we show in Section~\ref{sec:lower}. 

However, the height $h$ of the spanning tree $T$ may be large, even if the diameter of $G$ is small, which raises the question of whether the dependence on $h$ is necessary. We address this question by providing an algorithm for \emph{unweighted} TAP that has a round complexity of $O(D+\sqrt{n}\log^*{n})$ rounds, which is significantly smaller for large values of $h$. This only comes at the price of a slight increase in the approximation ratio, from $2$ to $4$. 

\begin{restatable}{theorem}{uTAPtwo} \label{uTAPtwo}
There is a distributed 4-approximation algorithm for unweighted TAP in the CONGEST model that runs in $O(D+\sqrt{n}\log^*{n})$ rounds.
\end{restatable}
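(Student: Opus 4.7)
The plan is to establish the theorem by combining a simple combinatorial lower bound with an efficient distributed construction. Let $L$ denote the number of leaves of the input spanning tree $T$ and let $\mathrm{OPT}$ be the optimal unweighted TAP value. For every leaf $v$ of $T$, the unique tree edge incident to $v$ must be covered by some non-tree edge having $v$ as an endpoint; since each non-tree edge has only two endpoints and therefore covers at most two leaf-incident tree edges, any feasible augmentation has size at least $\lceil L/2 \rceil$, so $\mathrm{OPT} \geq L/2$. A $4$-approximation thus follows if the algorithm outputs a feasible augmentation of size at most $2L$.

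To achieve size at most $2L$, I would let each leaf $v$ of $T$ select a constant number of ``extremal'' non-tree edges incident to it, a natural candidate being the non-tree edge whose other endpoint lies as high as possible in $T$ (closest to the root). The key combinatorial claim is that the union of these choices is feasible: any tree edge $e$ left uncovered would split $T$ into two subtrees each containing at least one leaf, and the extremal edge at such a leaf would have to cross $e$, a contradiction. Proving this claim, and choosing precisely which edges to include to stay within a budget of $2L$, is the combinatorial heart of the argument.

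The main algorithmic challenge is carrying out the selection in $O(D + \sqrt{n}\log^* n)$ rounds, since $T$ may have height $\Omega(n)$ and so any bottom-up computation along $T$ is forbidden. My plan is to avoid $T$ entirely for aggregation: first construct a BFS tree $T_{\mathrm{BFS}}$ of $G$ in $O(D)$ rounds, and then cast the selection as a fragment-merging task on a leaf-skeleton of $T$, solved via a Kutten--Peleg style Boruvka computation. While fragments are small ($\leq \sqrt{n}$ vertices), each merging phase costs $O(D)$ rounds of communication through $T_{\mathrm{BFS}}$; once the number of surviving fragments drops below $\sqrt{n}$, a pipelined broadcast along $T_{\mathrm{BFS}}$ completes the computation in $O(D + \sqrt{n})$ additional rounds, and the $\log^* n$ factor arises from a symmetry-breaking sub-step invoked within a phase.

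The hardest part, I expect, will be the distributed bookkeeping: computing, for each leaf $v$, the identity of the highest $T$-ancestor reached by a non-tree edge at $v$, and later certifying coverage of every internal tree edge, entirely without routing messages along $T$ itself. This calls for short labels that encode ancestor and LCA relations in $T$ and can be compared using $O(\log n)$-bit messages, together with a careful pipelining so that many leaves are processed in parallel rather than one at a time.
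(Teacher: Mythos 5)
Your proposal has a fundamental gap in both halves of the argument, so the overall strategy cannot be repaired without essentially changing the approach.

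\textbf{The approximation bound fails.} You argue $\mathrm{OPT} \ge L/2$ (number of leaves over two) and propose to output a feasible augmentation of size at most $2L$. But $\mathrm{OPT}$ can be arbitrarily larger than $L$: if $T$ is a Hamiltonian path of $G$ then $L=1$, yet instances exist (e.g.\ the graph $G_1$ in Section 6.1 of the paper) where every feasible augmentation has size $\Theta(n)$. Hence no algorithm can output a feasible set of size $O(L)$ in general, and the ratio $|A|/\mathrm{OPT}$ cannot be controlled by comparing $|A|$ to $L$ alone. The leaf bound is a valid but weak lower bound, and the paper uses it only for one piece of its output (edges whose covering path passes a branching vertex, Lemma \ref{lem1}); bounding the rest of the output requires a separate mapping into $\mathrm{OPT}$ itself (Lemma \ref{lem2}).

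\textbf{The feasibility claim fails.} Your ``key combinatorial claim'' — that choosing, at each leaf, the non-tree edge reaching highest toward the root, covers every tree edge — is false, and the proof sketch is the wrong implication. If a tree edge $e=\{w,p(w)\}$ is uncovered, 2-edge-connectivity of $G$ guarantees \emph{some} edge crossing $e$, but that edge need not be incident to any leaf, and the maximal edge chosen at a leaf below $w$ need not reach above $w$. Concretely: $T$ is the path $r-a-b-c$ (leaf $c$), with non-tree edges $\{a,c\}$ and $\{r,b\}$. The leaf $c$'s maximal edge is $\{a,c\}$, which leaves $\{r,a\}$ uncovered; the edge $\{r,b\}$ needed to cover it has no leaf endpoint. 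Feasibility requires selections at internal vertices too, and once those are added, the size bound $2L$ is out of reach (cf.\ the Hamiltonian-path example). The paper handles this by running a variant of $A_{\mathrm{Aug}}$ in three coordinated stages (leaf edges, inter-fragment edges via the fragment tree $T_F$, then intra-fragment edges), and the resulting $4$-approximation is proved by comparing $|A|$ to $|A^*|$ directly rather than to $\ell$. The algorithmic sketch you give (Boruvka merging on a ``leaf skeleton'', $\log^* n$ from symmetry breaking) also does not obviously compute the quantities the algorithm needs, but those issues are secondary to the two structural problems above.
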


\subsubsection*{Applications}

The key application of our TAP approximation algorithm is an $O(D)$-round 2-approximation algorithm for the minimum size 2-edge-connected spanning subgraph problem (2-ECSS), which is obtained by building a BFS tree and augmenting it to a 2-edge-connected subgraph using our algorithm.

\begin{restatable}{theorem}{ECSS}
There is a distributed 2-approximation algorithm for unweighted 2-ECSS in the CONGEST model that completes in $O(D)$ rounds. 
\end{restatable}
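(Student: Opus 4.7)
The plan is to combine distributed BFS with the weighted TAP algorithm of Theorem~\ref{wTAP}. First, elect the minimum-identifier vertex in $O(D)$ rounds, then construct a BFS tree $T$ rooted there in $O(D)$ additional rounds via standard pipelined flooding; the resulting $T$ has $n-1$ edges and height $h \le D$. Second, invoke Theorem~\ref{wTAP} on $(G, T)$ with unit edge weights. Since $h \le D$, this phase completes in $O(h) = O(D)$ rounds and returns an augmentation $Aug \subseteq E \setminus T$ with $|Aug| \le 2 \cdot OPT_{TAP}(T)$ such that $T \cup Aug$ is 2-edge-connected, where $OPT_{TAP}(T)$ denotes the minimum number of non-tree edges required to augment $T$ to 2-edge-connectivity. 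The output is $H := T \cup Aug$. Total round complexity is $O(D)$, and 2-edge-connectivity of $H$ is guaranteed by Theorem~\ref{wTAP}.

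For the approximation ratio, let $OPT$ denote the size of a minimum 2-ECSS of $G$. Two elementary facts drive the analysis: (a) every 2-edge-connected spanning subgraph has minimum degree at least $2$, hence at least $n$ edges, so $OPT \ge n$ and in particular $|T| = n-1 < OPT$; and (b) for any 2-ECSS $H^*$, the set $H^* \setminus T$ is a valid TAP augmentation of $T$, because $T \cup (H^* \setminus T) \supseteq H^*$ is already 2-edge-connected; this yields $OPT_{TAP}(T) \le |H^*|$ and hence $OPT_{TAP}(T) \le OPT$. Combining these with $|Aug| \le 2 \cdot OPT_{TAP}(T)$ gives
\[
|H| \;=\; (n-1) + |Aug| \;\le\; (n-1) + 2 \cdot OPT_{TAP}(T) \;\le\; 2 \cdot OPT,
\]
which is the claimed 2-approximation.

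The main obstacle is tightening the final inequality. A naive substitution of $n-1 \le OPT$ and $OPT_{TAP}(T) \le OPT$ only yields a $3$-approximation, $(n-1) + 2 \cdot OPT_{TAP}(T) \le OPT + 2 \cdot OPT = 3 \cdot OPT$. To reach the sharper $2 \cdot OPT$ bound, one additionally exploits the complementary inequality $OPT \le (n-1) + OPT_{TAP}(T)$, which holds because $T$ together with its own optimal TAP augmentation is itself a 2-ECSS, and balances this lower bound on $OPT_{TAP}(T)$ against the upper bound $OPT_{TAP}(T) \le OPT$ through the factor $2$ coming from $|Aug| \le 2 \cdot OPT_{TAP}(T)$. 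Carrying out this trade-off cleanly for the BFS tree is the most delicate part of the argument.
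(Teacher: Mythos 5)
The first part of your proposal (building a BFS tree, invoking a TAP algorithm on it, total round complexity $O(D)$) matches the paper's approach; using the weighted algorithm {\weTAP} with unit weights rather than {\unTAP} is an inessential variation. The gap is in the approximation argument, and you have correctly located it but not closed it.

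Concretely, you need $(n-1) + |Aug| \le 2\cdot OPT$, and the only upper bound on $|Aug|$ you invoke is $|Aug| \le 2\cdot OPT_{TAP}(T)$. Combining this with $OPT_{TAP}(T)\le OPT$ and $n-1<OPT$ yields only a factor of $3$, as you observe. The ``complementary inequality'' $OPT \le (n-1)+OPT_{TAP}(T)$ is a \emph{lower} bound on $OPT_{TAP}(T)$; a lower bound on a quantity cannot tighten an upper bound on $(n-1)+2\cdot OPT_{TAP}(T)$, so the proposed balancing does not do anything. In fact, from just the three facts $OPT_{TAP}(T)\le OPT$, $OPT\ge n$, and $OPT\le(n-1)+OPT_{TAP}(T)$ one cannot derive $(n-1)+2\cdot OPT_{TAP}(T)\le 2\cdot OPT$: take abstract values $n-1=5$, $OPT_{TAP}(T)=6$, $OPT=6$, which satisfy all three constraints yet give $5+12=17>12$. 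The missing step is real, not merely delicate.

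The ingredient you are missing is a \emph{structural} bound on the output, independent of the TAP approximation ratio: both {\unTAP} and {\weTAP} add at most one augmentation edge per non-root vertex (each $v\neq r$ is responsible for covering $\{v,p(v)\}$ and inserts at most one edge for it), so $|Aug|\le n-1$ unconditionally. This is what the paper uses: $|T\cup Aug|\le (n-1)+(n-1)=2(n-1)<2n\le 2\cdot OPT$, where $OPT\ge n$ is exactly your observation (a). So keep (a), but pair it with $|Aug|\le n-1$ rather than with the ratio bound $|Aug|\le 2\cdot OPT_{TAP}(T)$, and the $2$-approximation follows immediately.
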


The time complexity of our algorithm improves significantly upon the time complexity of previous approximation algorithms for 2-ECSS, which are $O(n)$ rounds for a $\frac{3}{2}$-approximation \cite{krumke2007distributed} and $O(D+\sqrt{n}\log^*{n})$ rounds for a 2-approximation \cite{thurimella1995sub}. 

In addition, our weighted TAP algorithm implies a 3-approximation for \emph{weighted} 2-ECSS.
Other applications of our algorithms are an $O(D)$-round algorithm for verifying 2-edge-connectivity, and an algorithm for augmenting the connectivity of any connected spanning subgraph $H$ of $G$ from $1$ to $2$. 

\subsubsection*{Lower bounds}

We complement our algorithms by presenting lower bounds for TAP. We first show that approximating TAP is a global problem which requires $\Omega(D)$ rounds even in the LOCAL model\cite{Linial92}, where the size of messages is not bounded.

\begin{restatable}{theorem}{local}
\label{local-lb}
Any distributed $\alpha$-approximation algorithm for weighted TAP takes $\Omega(D)$ rounds in the LOCAL model, where $\alpha \geq 1$ can be any polynomial function of $n$. This holds also for unweighted TAP, if $1 \leq \alpha < \frac{n-1}{2c}$ for a constant $c>1$.
\end{restatable}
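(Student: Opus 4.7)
The plan is to apply a standard indistinguishability argument in the LOCAL model, constructing two TAP instances whose $t$-neighborhoods agree at every vertex of a large ``middle band'' but whose optima differ by more than a factor $\alpha$, so that no single $o(D)$-round algorithm can meet the $\alpha$-approximation guarantee on both.

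The graph I would use is the ``squared cycle'' on $n = 4D$ vertices $v_0,\ldots,v_{n-1}$: its edges are the cycle edges $(v_i, v_{i+1 \bmod n})$ together with the skip edges $(v_i, v_{i+2 \bmod n})$, and the spanning tree $T$ is the Hamiltonian path obtained by deleting the single cycle edge $e^{\star} = (v_0, v_{n-1})$. A direct check shows $G$ is 2-edge-connected and has diameter $\Theta(D)$, since every walk advances at most two positions along the cycle per step. From this common base I would create two instances. For the weighted case both instances use $G$ and $T$; I would give every skip edge the common weight $W_2 = \Theta(\alpha/n)$, a polynomial in $n$, and vary only the weight of $e^{\star}$, setting $w(e^{\star}) = 1$ in $I_1$ and $w(e^{\star}) = \Theta(\alpha)$ in $I_2$. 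For the unweighted case I would take $I_1 = (G,T)$ and $I_2 = (G\setminus\{e^{\star}\}, T)$; the skip edges alone still keep $G$ 2-edge-connected. In either setting $e^{\star}$ alone covers the entire tree path, so $\mathrm{OPT}(I_1) = 1$ (weight or size), whereas in $I_2$ every cover must consist of skip edges and each covers only two tree edges, forcing $\mathrm{OPT}(I_2) = \Omega(n)$ or $\Omega(n\cdot W_2)$; my parameter choice makes $\mathrm{OPT}(I_2)/\mathrm{OPT}(I_1)$ strictly larger than $\alpha$.

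The indistinguishability step is to show that for every vertex $v$ in the middle band $B = \{v_i : D < i < 3D\}$ and every $t < D/2$, the $t$-neighborhood of $v$ in $G$ is identical in $I_1$ and $I_2$: both endpoints of $e^{\star}$ lie at graph-distance at least $\lceil D/2\rceil > t$ from every $v \in B$, so no short walk out of $v$ can see $e^{\star}$ or detect its weight. Consequently any $o(D)$-round deterministic algorithm (and, by averaging over random strings, any randomized one) outputs the same set of incident edges at every $v \in B$ in both instances. For the output to be a valid augmentation of $I_2$, the fixed decisions in $B$ must cover the $\Omega(n)$ tree edges strictly inside $B$; each such tree edge is covered only by skip edges with both endpoints in $B$, and each such skip edge covers only two of these tree edges, so the fixed output must include $\Omega(n)$ of them. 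These same edges then appear in the algorithm's solution on $I_1$, whose cost consequently is $\Omega(n\cdot W_2)$ in the weighted case and $\Omega(n)$ in the unweighted case, strictly exceeding $\alpha\cdot \mathrm{OPT}(I_1) = \alpha$ by our choice of parameters. This contradicts the $\alpha$-approximation guarantee, so $t = \Omega(D)$ is necessary. The main subtlety will be the careful distance computation needed to verify that $G$ has diameter $\Theta(D)$ and that the modification near $\{v_0, v_{n-1}\}$ cannot leak into $B$ via short detours (through either skip edges or $e^{\star}$) within $o(D)$ rounds.
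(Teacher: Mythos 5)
Your framework---an indistinguishability argument on a long path-like tree with short local skip edges and one long shortcut $e^\star$---is exactly the paper's idea, but the concrete graph you build has escape hatches that defeat it. The crucial claim, that every tree edge strictly inside the middle band $B$ ``is covered only by skip edges with both endpoints in $B$,'' is false for your squared cycle. In the weighted case you keep $e^\star$ in both $I_1$ and $I_2$ and vary only its weight; but $e^\star$ covers the entire tree path, and $w(e^\star)=\Theta(\alpha)$ in $I_2$ is of the same order as $\mathrm{OPT}(I_2)$, so an algorithm that always outputs just $\{e^\star\}$ (a decision made at $v_0$ or $v_{n-1}$, not inside $B$) is an $O(1)$-approximation on both instances while every vertex of $B$ outputs nothing. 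Independently, because your skip edges are taken modulo $n$, the wrap-around pair $\{v_{n-1},v_1\}$ and $\{v_{n-2},v_0\}$ is present even in $G\setminus\{e^\star\}$; relative to the Hamiltonian path $T$ these two edges together cover all $n-1$ tree edges, yet they are incident only to vertices near the ends of the path, far from $B$. In particular $\mathrm{OPT}(I_2)=2$, not $\Omega(n)$, and again the output inside $B$ may legitimately be empty. Since nothing forces any vertex of $B$ to output anything, indistinguishability gives no contradiction.

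The paper closes both holes by (a) using a path rather than a cycle, with strictly local skip edges $\{v_{2i},v_{2(i+1)}\}$ for $0\le i<k$ that never wrap around and each cover exactly two tree edges, so that no non-tree edge other than the long edge $\{v_0,v_{2k}\}$ reaches across the middle; and (b) actually deleting the long edge in one instance $G_1$ rather than re-weighting it, which makes the augmentation in $G_1$ unique---every skip edge is required---so every middle vertex of even index is forced to output its two incident skip edges. Indistinguishability then propagates those forced outputs to $G_2$, where $\mathrm{OPT}=1$, and either the forced weight ($\ge\alpha+1$ per skip edge in the weighted variant) or the forced cardinality ($\Omega(k)$ skip edges in the unweighted variant with $\alpha<k/c$) exceeds $\alpha\cdot\mathrm{OPT}$, giving the contradiction. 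You need both features---the non-wrapping path and the genuine absence of the long edge in one instance---to eliminate every way for the middle band to avoid contributing.
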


Theorem~\ref{local-lb} implies that if $h=O(D)$ then our TAP approximation algorithms have an optimal round complexity. We also consider the case of $h = \omega(D)$ and show a family of graphs, based on the construction in \cite{sarma2012distributed}, for which $\Omega(h)$ rounds are needed in order to approximate weighted TAP, were $h=\Theta(\frac{\sqrt{n}}{\log{n}})$. 

\begin{restatable}{theorem}{congest}
\label{theorem:congest-lb}
For any polynomial function $\alpha(n)$, there is a $\Theta(n)$-vertex graph of diameter $\Theta(\log{n})$ for which any (even randomized) distributed $\alpha(n)$-approximation algorithm for weighted TAP with an instance tree $T \subseteq G$ of height $h=\Theta(\frac{\sqrt{n}}{\log{n}})$ requires $\Omega(h)$ rounds in the CONGEST model. 
\end{restatable}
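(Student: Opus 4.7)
The plan is to adapt the lower-bound graph framework of Das Sarma et al.~\cite{sarma2012distributed} and reduce (randomized) two-party set-disjointness to approximating weighted TAP. Alice and Bob hold $k$-bit strings $A$ and $B$ with $k = \Theta(\sqrt{n}/\log{n})$ and must decide whether the associated sets intersect; this requires $\Omega(k)$ bits of communication, even for randomized protocols. I would build a graph $G$ on $\Theta(n)$ vertices with diameter $\Theta(\log{n})$, a designated spanning tree $T \subseteq G$ of height $h = \Theta(\sqrt{n}/\log{n})$, and a cut $(V_A, V_B)$ separating an ``Alice side'' from a ``Bob side'' crossed by only $O(\log{n})$ edges of $G$, so that a $t$-round distributed TAP algorithm can be jointly simulated by Alice and Bob while exchanging only $O(t \log^2{n})$ bits.

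The reduction would embed $A$ and $B$ into the edge weights of $G$ as follows. Along the depth-$h$ ``backbone'' path of $T$, I would hang $k$ coordinate gadgets on each side; the $i$-th Alice gadget contains a candidate augmenting edge $e_i^A$ of weight $1$ if $A_i=0$ and weight $W$ if $A_i=1$, where $W$ is a polynomial in $n$ chosen so that $W > \alpha(n) \cdot \mathrm{poly}(n)$, and symmetrically for Bob. The tree is arranged so that for each coordinate $i$ there is a distinguished tree edge $\varepsilon_i$ that can be covered only by $e_i^A$, $e_i^B$, or some globally ``expensive'' edges of weight $W$. After complementing inputs, if the sets are disjoint then some index $i$ has $A_i = B_i = 0$ and a TAP solution of weight $O(k)$ exists; otherwise every feasible augmentation contains at least one edge of weight $\geq W$, so OPT $\geq W$. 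The polynomial gap lets any $\alpha(n)$-approximation distinguish the two cases and hence solve set-disjointness.

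Combining the reduction with the simulation bound yields $t \cdot O(\log^2{n}) \geq \Omega(k)$, which gives $t = \Omega(\sqrt{n}/\log^3{n})$; choosing the gadget widths so that $k = \Theta(h \log{n})$ (i.e.\ each coordinate gadget contributes a $\Theta(\log n)$-bit word to the information that must cross the cut) sharpens this to $t = \Omega(h)$, matching the claim. The extension to randomized algorithms is the standard one, invoking the randomized communication lower bound for set-disjointness and a failure-probability argument.

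The main obstacle will be constructing $G$ so that the four constraints $|V|=\Theta(n)$, diameter $\Theta(\log{n})$, tree height exactly $h=\Theta(\sqrt{n}/\log{n})$, and cut width $O(\log{n})$ all hold simultaneously, while the coordinate gadgets remain ``tight''. Tightness means that no cheap augmenting edge outside the intended gadget can cover $\varepsilon_i$, and no short tree route bypasses the designed bottleneck. I would achieve this by gluing the depth-$h$ backbone path to a balanced low-diameter ``core'' via non-tree edges that are either of weight $W$ or are incident only to endpoints of the backbone, and by assigning weight $W$ to every non-gadget candidate edge, essentially in the spirit of the weighted modifications of the Das Sarma construction that have been used for related minimum-weight distributed problems.
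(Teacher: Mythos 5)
You correctly identify the Das Sarma et al.\ framework and the reduction from two-party set-disjointness, which is exactly the route the paper takes (following \cite{sarma2012distributed, elkin2006unconditional}). However, there are two genuine gaps in your proposal.

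First and most importantly, your construction does not explain how all the \emph{non-gadget} tree edges become free to cover. Your instance tree $T$ has height $h=\Theta(\sqrt{n}/\log n)$, so most of its $\Theta(n)$ edges are long paths and core-tree edges that have nothing to do with any coordinate $i$. If, as you write, you ``assign weight $W$ to every non-gadget candidate edge,'' then every feasible augmentation must pay at least $W$ to cover those background tree edges, and the gap between ``disjoint'' ($\mathrm{OPT}=O(k)$) and ``intersecting'' ($\mathrm{OPT}\ge W$) disappears entirely. The paper handles this by placing a weight-$0$ parallel edge alongside every path edge and every edge of the low-diameter tree $S$ (and then shows how to avoid parallelism by subdividing), so that the \emph{only} tree edges that cost anything to cover are the $k$ edges $\{r,v_0^i\}$. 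Some device of this kind is essential and is missing from your sketch; it is not a matter of ``tuning'' the construction but a structural idea that the argument cannot do without. Relatedly, your statement ``if the sets are disjoint then some index $i$ has $A_i=B_i=0$'' is not the right condition --- disjointness means that \emph{every} index has at least one zero, which is what lets you cover \emph{every} $\varepsilon_i$ cheaply; ``some index with both zero'' does not follow from disjointness and would not bound the total cost.

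Second, your parameter counting is off by a $\log n$ factor. With a cut-width of $O(\log n)$ edges and $O(\log n)$-bit messages, a $t$-round algorithm exchanges $O(t\log^2 n)$ bits, and set-disjointness on $k$ bits needs $\Omega(k)$ bits; to conclude $t=\Omega(h)$ you therefore need $k=\Theta(h\log^2 n)$, not $k=\Theta(h\log n)$ as you propose. (This is exactly what the paper uses: $k=d^{p+1}pB$ with $d=O(1)$, $p=B=\Theta(\log n)$, and $h=\Theta(d^p)$, giving $k=\Theta(h\log^2 n)$.) Your choice $k=\Theta(h\log n)$ only yields $t=\Omega(h/\log n)$. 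The parenthetical remark about ``each gadget contributing a $\Theta(\log n)$-bit word'' does not repair this unless you actually reduce from disjointness on $\Theta(h\log^2 n)$ bits. The vertex budget still works out: $k\cdot h=\Theta(\sqrt{n}\log n)\cdot\Theta(\sqrt{n}/\log n)=\Theta(n)$, so increasing $k$ by the extra $\log n$ factor is consistent with a $\Theta(n)$-vertex graph.
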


Theorem~\ref{theorem:congest-lb} implies that our algorithm for weighted TAP is optimal on these graphs. In particular, there cannot be an algorithm with a complexity of $O(f(h))$ for a sublinear function $f$. 
This lower bound can also be seen as an $\widetilde{\Omega}(D+\sqrt{n})$ lower bound.

Our lower bound for weighted TAP implies a lower bound for weighted 2-ECSS, since an $\alpha$-approximation algorithm for weighted 2-ECSS gives an
$\alpha$-approximation algorithm for weighted TAP where we give to the edges of the input tree $T$ weight 0.

\subsection{Technical overview of our algorithms}

As an introduction, we start by showing an $O(h)$-round 2-approximation algorithm for \emph{unweighted} TAP, which allows us to present some of the key ingredients in our algorithms. Later, we explain how we build on these ideas and extend them to give an algorithm for the weighted case, and a faster algorithm for unweighted TAP.

\subsubsection*{Unweighted TAP} 

A natural approach for constructing a distributed algorithm for unweighted TAP could be to try to simulate the sequential $2$-approximation algorithm of Khuller and Thurimella \cite{khuller1993approximation}. In their algorithm, the input graph $G$ is first converted into a modified graph $G'$. 
Then, the algorithm finds a directed MST\footnote{A directed spanning tree of $G$ rooted at $r$, is a subgraph $T$ of $G$ such that the undirected version of $T$ is a tree and $T$ contains a directed path from $r$ to any other vertex in $V$. A directed MST is a directed spanning tree of minimum weight.} in $G'$, which induces a corresponding augmentation in $G$. 

When considered in the distributed setting, this approach imposes two difficulties. The first is that we cannot simply modify the input graph, because it is the graph that represents the underlying distributed network, whose topology is given and not under our control. The second
is in the directed MST procedure, as finding a directed MST efficiently in the distributed setting seems to be difficult. The currently best known time complexity of this problem is $O(n^2)$ for an asynchronous setting\cite{humblet1983distributed}, which is trivial in the CONGEST model. 

We overcome the above using two key ingredients. First, we bring into our construction the tool of computing lowest common ancestors (LCAs).
We show that building $G'$ and simulating a distributed computation over it can be done by an efficient computation of LCAs, and we achieve the latter by leveraging the \textit{labeling scheme} for LCAs presented in \cite{alstrup2004nearest}. 

Second, we drastically diverge from the Khuller-Thurimella framework by replacing the expensive directed MST construction by a completely different procedure. Roughly speaking, we show that the simple structure of $G'$ allows us to find an optimal augmentation in $G'$ efficiently by scanning the input tree $T$ from the leaves to the root and performing the following procedure. Each vertex sends to its parent information about edges that may be useful for the augmentation since they \emph{cover} many edges of the tree, and the vertices use the LCA labels in order to decide which edges to add to the augmentation. 

While a direct implementation of this would result in much information that is sent through the tree, we show that at most two edges need to actually be sent by each vertex. Thus, applying the labeling scheme and scanning the tree $T$ result in a time complexity of $O(h)$ rounds, where $h$ is the height of $T$.
Finally, we prove that an optimal augmentation in $G'$ gives a 2-approximation augmentation for $G$, which gives a 2-approximation for unweighted TAP in $O(h)$ rounds. 

\subsubsection*{Weighted TAP} 

Our algorithm for the unweighted case relies heavily on the fact that we can compare edges and decide which one is the best for the augmentation according to the number of edges they cover in the tree. However, once the edges have weights, it is not clear how to compare edges. This is because of the tension between light edges that cover only few edges and heavier edges that cover many edges.
Therefore, Theorem~\ref{wTAP}, which applies for the weighted case, cannot be directly obtained according to the above description. 

Nevertheless, we show how to overcome this by introducing a technique of having each vertex send to its parent edges with \emph{altered weights}. The trick here is that we modify the weight that is sent for an edge in a way that captures the cost for covering each edge of the tree. This successfully addresses the competing needs of covering as many tree edges as possible, while using the lightest possible edges, and allows focusing on a smaller number of edges that may be useful for the augmentation. 
Finally, using standard pipelining, this gives a time complexity of $O(h)$ rounds for the weighted case as well.

\subsubsection*{Faster unweighted TAP} 

Both of our aforementioned algorithms rely on scanning the tree $T$, which results in a time complexity that is linear in the height $h$ of the tree $T$. In order to avoid the dependence on $h$, one must be a able to add edges to the augmentation without scanning the whole tree. 

However, if a vertex $v$ does not get information about the edges added to the augmentation by the vertices in the whole subtree rooted at $v$, then it may add additional edges in order to cover tree edges that are already covered. But then we are no longer guaranteed to get an optimal augmentation in $G'$, or even a good approximation for it.

Nevertheless, we are still able to show a faster algorithm for unweighted TAP, which completes in $O(D+\sqrt{n}\log^*{n})$ rounds. 
The key ingredient in our algorithm is breaking the tree $T$ into fragments and applying our $2$-approximation for unweighted TAP algorithm on each fragment separately, as well as on the tree of fragments. 
Since our algorithm does not scan the whole tree, it may add different edges to cover the same tree edges, which makes the analysis much more involved. The approximation ratio analysis is based on dividing the edges to different types and bounding the number of edges of each type separately, using a subtle case-analysis. 
Although our algorithm does not find an optimal augmentation in $G'$, it gives a 2-approximation for it, which results in a 4-approximation augmentation for the original graph $G$. \\

\textbf{Roadmap:} In Section \ref{sec:app_uTAP}, we describe our $O(h)$-round 2-approximation algorithm for unweighted TAP, and in Section \ref{sec:app_wTAP} we extend it to the weighted case. In Section \ref{sec:applic}, we show applications of these algorithms, in particular for approximating 2-ECSS, and in Section \ref{sec:faster} we present our faster algorithm for unweighted TAP. We present lower bounds for TAP in Section \ref{sec:lower}, and discuss questions for future research in Section \ref{sec:dis}.  

\subsection{Related Work}

\subsubsection*{Sequential algorithms for TAP}

TAP is intensively studied in the sequential setting. Since TAP is NP-hard, approximation algorithms for it have been studied. The first 2-approximation algorithm for weighted TAP was given by Frederickson and J{\'{a}}J{\'{a}} \cite{frederickson1981approximation}, and  was later simplified by Khuller and Thurimella \cite{khuller1993approximation}.
Other 2-approximation algorithms for weighted TAP are the primal-dual algorithm of Goemans et al. \cite{goemans1994improved}, and the iterative rounding algorithm of Jain \cite{jain2001factor}. 

Recently, a new algorithm achieved an approximation of 1.5 for unweighted TAP \cite{kortsarz2016simplified}, and
recent breakthroughs give 1.458-approximation for unweighted TAP \cite{DBLP:conf/stoc/0001KZ18}, and approximations better than 2 for bounded weights \cite{DBLP:journals/corr/FioriniGKS17, adjiashvili2017beating}.
Achieving approximation better than 2 for the general weighted case is a central open question. 
See \cite{khuller1996approximation, kortsarz2010approximating} for surveys about approximation algorithms for connectivity problems. Also, the related work in \cite{DBLP:conf/stoc/0001KZ18} gives an overview of many recent sequential algorithms for TAP. 

\subsubsection*{Related work in the distributed setting}

While ours are the first distributed approximation algorithms for TAP itself, there are important related studies in the distributed setting.

~\\
\textbf{MST:} In the distributed setting, finding an MST, which is a minimum weight subgraph with connectivity $1$, is a fundamental and well studied problem (see, e.g., \cite{gallager1983distributed,garay1998sublinear,kutten1995fast,elkin2006unconditional, DBLP:conf/podc/Elkin17, pandurangan2017time}). 
The first distributed algorithm for this problem is the GHS algorithm that works in $O(n\log{n})$ time \cite{gallager1983distributed}. Following algorithms improved the round complexity to $O(D+\sqrt{n}\log^*{n})$  \cite{garay1998sublinear, kutten1995fast}. 

~\\
\textbf{$k$-ECSS:} For the minimum weight 2-edge-connected spanning subgraph (2-ECSS) problem, there is a distributed algorithm of Krumke et al. \cite{krumke2007distributed}. Their approach is finding a specific spanning tree and then augmenting it to a 2-edge-connected graph. In the unweighted case, they augment a DFS tree following the sequential algorithm of Khuller and Vishkin \cite{khuller1994biconnectivity}, which results in an $O(n)$-round $\frac{3}{2}$-approximation algorithm for 2-ECSS. In the weighted case they augment an MST and suggest a general $O(n\log{n})$-round 2-approximation algorithm for weighted TAP, which gives an $O(n\log{n})$-round $3$-approximation algorithm for 2-ECSS. Our algorithms for TAP imply faster approximations for unweighted and weighted 2-ECSS.

Another distributed algorithm for \emph{unweighted} $k$-ECSS is an $O(k(D+\sqrt{n}\log^*{n}))$-round algorithm of Thurimella \cite{thurimella1995sub} that finds a sparse $k$-edge-connected subgraph. The general framework of the algorithm is to repeatedly find maximal spanning forests in the graph and remove their edges from the graph (this framework is also described in sequential algorithms \cite{khuller1996approximation,nagamochi1992linear}). This gives a $k$-edge-connected spanning subgraph with at most $k(n-1)$ edges. Since any $k$-edge-connected subgraph has at least $\frac{kn}{2}$ edges, since the degree of each vertex is at least $k$, this approach guarantees a 2-approximation for unweighted $k$-ECSS. 

~\\
\textbf{Fault-tolerant tree structures:} Another related problem is the construction of fault-tolerant tree structures. Distributed algorithms for constructing fault tolerant BFS and MST structures are given in \cite{ghaffari2016near}, producing sparse subgraphs of the input graph $G$ that contain a BFS (or an MST) of $G \setminus\{e\}$ for each edge $e$, for the purpose of maintaining the functionality of a BFS (or an MST) even when an edge fails. However, TAP is different from these problems in several aspects. First, we augment a specific spanning tree $T$ rather then build the whole structure from scratch. In addition, since we need to preserve only connectivity when an edge fails and not the functionality of a BFS or an MST, optimal solutions for TAP may be much cheaper. 

~\\
\textbf{Additional related problems:} Another connectivity augmentation problem studied in the distributed setting is the Steiner Forest problem \cite{lenzen2014improved, khan2012efficient}. There are also distributed algorithms for finding the 2-edge-connected and 3-edge-connected components of a connected graph \cite{pritchard2005robust,pritchard2011fast}, and distributed algorithms that decompose a graph with large connectivity into many disjoint trees, while almost preserving the total connectivity through the trees \cite{censor2014distributed}.

\subsubsection*{Follow-up works}

We show here a deterministic $O(D+\sqrt{n}\log^*{n})$-round 4-approximation algorithm for \emph{unweighted} TAP and a determinstic $O(h)$-round 2-approximation algorithm for \emph{weighted} TAP. In a recent follow-up work \cite{kECSS} we show a randomized $O((D+\sqrt{n})\log^2{n})$-round $O(\log{n})$-approximation for \emph{weighted} TAP and \emph{weighted} 2-ECSS, based on different techniques. In addition, we show in \cite{kECSS} a randomized $\widetilde{O}(n)$-round $O(\log{n})$-approximation for \emph{weighted} $k$-ECSS for any constant $k$, and a randomized $O(D\log^3{n})$-round $O(\log{n})$-approximation for \emph{unweighted} 3-ECSS.

Also, a very recent work \cite{2ECSS_new} shows a deterministic $O(1)$-approximation for weighted TAP and weighted 2-ECSS, completing in $O((D+\sqrt{n})\log^2{n})$ rounds. Another very recent work \cite{un_kECSS} shows an $O(1)$-approximation for \emph{unweighted} $k$-ECSS completing in $O(k \log^{1+o(1)}{n})$ rounds. The basic approach in \cite{un_kECSS} is building $k$ ultra-sparse spanners iteratively. Since any ultra-sparse spanner has $O(n)$ edges, the total number of edges in the subgraph obtained is $O(kn)$, which gives a constant approximation for unweighted $k$-ECSS. While these recent works improve significantly the time complexity for weighted TAP and 2-ECSS, and unweighted $2$-ECSS, this comes at a price of larger approximation ratios than the ones we show here. For a detailed comparison see Table \ref{table_results}.

\begin{table}[h!]
\centering
\begin{tabular}{ |p{3.5cm}|p{2cm}|p{3cm}|p{3.5cm}|  }
 \hline
 \multicolumn{4}{|c|}{Algorithms and lower bounds for TAP} \\
 \hline
 Reference& Variant & Approximation & Time complexity\\
 \hline
 \textbf{This paper}   & weighted    &2 &  $O(h)$\\
 \textbf{This paper}   & unweighted    &4 &  $O(D+\sqrt{n}\log^*{n})$\\
 \textbf{This paper}   & unweighted    &$\alpha = O(n)$ &  $\Omega(D)$\\
 \textbf{This paper}   & weighted    &any polynomial $\alpha$&  $\widetilde{\Omega}(D+\sqrt{n}), \Omega(h)$\\
 Subsequent work \cite{kECSS}  & weighted    &$O(\log{n})$ &  $O((D+\sqrt{n})\log^2{n})$\\
 Subsequent work \cite{2ECSS_new}  & weighted    &$O(1)$ &  $O((D+\sqrt{n})\log^2{n})$\\
 \hline
 \multicolumn{4}{|c|}{} \\
 \hline
 \multicolumn{4}{|c|}{Algorithms and lower bounds for weighted 2-ECSS} \\
 \hline
 Reference& Variant & Approximation & Time complexity\\
 \hline
  Prior work \cite{krumke2007distributed}  &    &3 &  $O(n\log{n})$\\
  \textbf{This paper}   &    &3 &  $O(h_{MST} + \sqrt{n} \log^{*}{n})$\\
  \textbf{This paper}   &     &any polynomial $\alpha$&  $\widetilde{\Omega}(D+\sqrt{n})$\\
 Subsequent work \cite{kECSS}  &   &$O(\log{n})$ &  $O((D+\sqrt{n})\log^2{n})$\\
 Subsequent work \cite{2ECSS_new}  &    &$O(1)$ &  $O((D+\sqrt{n})\log^2{n})$\\
 \hline
  \multicolumn{4}{|c|}{} \\
 \hline
 \multicolumn{4}{|c|}{Algorithms for unweighted $k$-ECSS} \\
 \hline
 Reference& Variant & Approximation & Time complexity\\
 \hline
  Prior work \cite{krumke2007distributed}  &  $k=2$  &3/2 &  $O(n)$\\
  Prior work \cite{thurimella1995sub}  &  general $k$  &2 &  $O(k(D+\sqrt{n}\log^*{n}))$\\
  \textbf{This paper}   & $k=2$   &2 &  $O(D)$\\
 Subsequent work \cite{un_kECSS}  & general $k$ &$O(1)$ &  $O(k \log^{1+o(1)}{n})$\\
 \hline
\end{tabular}
 \caption{Summary and comparison of our results}
\label{table_results}
\end{table}

\subsection{Preliminaries}
For completeness, we first formally define the notion of edge connectivity.

\begin{definition}
An undirected graph $G$ is \emph{$k$-edge-connected} if it remains connected after the removal of any $k-1$ edges. 
\end{definition}

\textbf{The Tree Augmentation Problem (TAP).} In TAP, the input is an undirected 2-edge-connected graph $G$ with $n$ vertices, and a spanning tree $T$ of $G$. The goal is to add to $T$ a minimum size (or a minimum weight) set of edges $Aug$ from $G$, such that $T \cup Aug$ is 2-edge-connected. In the weighted version, each edge has a non-negative weight, and we assume that the weights of the edges can be represented in $O(\log n)$ bits. 

\begin{definition}
An edge $e$ in a connected graph $G$ is a \emph{bridge} in $G$ if $G \setminus \{e\}$ is disconnected.
\end{definition}
\begin{definition}
A non-tree edge $e=\{u,v\}$ \emph{covers} the tree edge $e'$ if $e'$ is on the unique path in $T$ between $u$ and $v$, i.e., if $e'$ is not a bridge in $T\cup \{e\}$.
\end{definition}

A graph $G$ is 2-edge-connected if and only if it does not contain bridges. Hence, augmenting the connectivity of $T$ requires covering all the tree edges.

~\\
\textbf{Models of distributed computation.} In the distributed CONGEST model \cite{peleg2000distributed}, the network is modeled as an undirected connected graph $G=(V,E)$. Communication takes place in synchronous rounds. In each round, each vertex can send a message  of $O(\log{n})$ bits to each of its neighbors. The time complexity of an algorithm is measured by the number of rounds. 
Our algorithms work in the CONGEST model, but some of our lower bounds hold also in the stronger LOCAL model \cite{Linial92}, where the size of messages is not bounded.

In the distributed setting, the input to TAP is a rooted spanning tree $T$ of $G$ with root $r$, whose height is denoted by $h$. The tree $T$ is given to the vertices locally, that is, each vertex knows which of its adjacent edges is in $T$ and which of those leads to its parent in $T$.\footnote{If a root and orientation are not given, we can find a root $r$ and orient all the edges towards $r$ in $O(h)$ rounds using standard techniques.} For each vertex $v\neq r$, we denote by $p(v)$ the parent of $v$ in $T$.
The output is a set of edges $Aug$, such that $T \cup Aug$ is 2-edge-connected. In the distributed setting it is enough that at the end of the algorithm each vertex knows which of the edges incident to it are added to $Aug$.

All the messages sent in our algorithms consist of a constant number of ids, labels and weights, hence the maximal message size is bounded by $O(\log{n})$ bits, as required in the CONGEST model. \\ 

\section{A 2-approximation for Unweighted TAP in $O(h)$ rounds} \label{sec:app_uTAP}

As an introduction, we describe an $O(h)$-round 2-approximation algorithm, {\unTAP}, for unweighted TAP. 
The general structure of {\unTAP} is as follows. 

\begin{enumerate}
\itemsep0em
\item It builds a related virtual graph $G'$.
\item It finds an optimal augmentation $A'$ in $G'$.
\item It converts it to a 2-approximation augmentation $A$ in $G$.
\end{enumerate}

The graph $G'$ is defined as in \cite{khuller1993approximation}. After building $G'$, we diverge completely from the approach of \cite{khuller1993approximation} since we cannot simulate it efficiently in the distributed setting, as explained in the introduction. Instead, {\unTAP} finds an optimal augmentation in $G'$, and converts it to a 2-approximation augmentation in $G$. 
All the communication in the algorithm is on the edges of the graph $G$, since $G'$ is a virtual graph. In order to simulate the algorithm on $G$ we use labels that represent the edges of $G'$. 

In Section \ref{sec:app_build}, we describe how we build the virtual graph $G'$. Then, we show in Section \ref{sec:app_corr} that an optimal augmentation in $G'$ gives a 2-approximation augmentation in $G$. In Section 
\ref{sec:app_find}, we describe the algorithm for finding an optimal augmentation in $G'$, and we prove its correctness in Section \ref{sec:app_correct}.

\subsection{Building $G'$ from $G$} \label{sec:app_build} 

{\unTAP} starts by building a related \emph{undirected} virtual graph $G'$. Building $G'$ requires efficient computation of lowest common ancestors (LCAs), which we next explain how to obtain in the distributed setting.

\subsubsection{Computing LCAs}

We use the \textit{labeling scheme} for LCAs of Alstrup et al. \cite{alstrup2004nearest}. This labeling scheme assigns labels of size $O(\log{n})$ bits to the vertices of a rooted tree with $n$ vertices, such that given the labels of $u$ and $v$ it is possible to infer the label of their LCA. The algorithm for computing the labels takes $O(n)$ rounds in a centralized setting, and we observe that it can be implemented in $O(h)$ rounds in the distributed setting, where $h$ is the depth of the tree, as was also observed by \cite{pritchard2011fast}. This is because the algorithm consists of a constant number of traversals of the tree, from the root to the leaves or vice versa. Thus, we have:

\begin{lemma} \label{lca}
Constructing the labeling scheme for LCAs of Alstrup et al. \cite{alstrup2004nearest} takes $O(h)$ rounds.
\end{lemma}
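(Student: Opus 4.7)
The plan is to unpack the sentence already appearing in the paragraph preceding the lemma: the Alstrup et al.\ construction consists of a constant number of tree passes, and each such pass is implementable in $O(h)$ rounds in the CONGEST model. So my proof would be a short verification, not a new idea.

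First, I would recall the ingredients of the Alstrup et al.\ labeling scheme. The scheme is built on top of a heavy-path decomposition together with a pre-order numbering of the tree, from which the $O(\log n)$-bit label of each vertex is assembled. Each of these pieces is computed by a standard tree pass: subtree sizes (needed to mark heavy children) are computed by a single convergecast from the leaves to the root; the heavy-path identifiers and pre-order indices are then computed by a single broadcast from the root to the leaves; and the final label of each vertex is read off locally from the quantities it has received and computed. The whole construction is therefore a constant number of alternating convergecast and broadcast passes.

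Next, I would argue that each pass costs $O(h)$ rounds distributedly. For a convergecast, each internal vertex waits until it has received a single $O(\log n)$-bit summary from each child along the corresponding tree edge, does $O(1)$ local computation, and then forwards its own $O(\log n)$-bit summary to its parent in one round. A straightforward induction on the depth shows that a vertex at height $k$ has sent its summary by round $k$, so the root is reached within $h$ rounds. The broadcast direction is completely symmetric: in round $i$ all vertices at depth $i$ have received their messages and forward them to their children. Since tree edges are used and each message along an edge carries only one $O(\log n)$-bit field per pass, there is no congestion and the CONGEST bandwidth constraint is respected.

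The main thing to check, and the only potentially delicate point, is that every field actually transmitted along a tree edge in one of these passes fits in $O(\log n)$ bits: the subtree size, the heavy-child indicator, the heavy-path identifier, the pre-order index, and the accumulated label prefix are all bounded by $\mathrm{poly}(n)$ and can thus be encoded in $O(\log n)$ bits. Once this is verified, summing a constant number of $O(h)$-round passes yields the stated $O(h)$ bound and each vertex ends the procedure holding its own LCA label, as required.
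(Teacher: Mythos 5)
Your proof is correct and follows the same approach as the paper: the paper's own argument for this lemma is exactly the observation that the Alstrup et al.\ construction consists of a constant number of root-to-leaf or leaf-to-root traversals of the tree, each implementable in $O(h)$ rounds with $O(\log n)$-bit messages. You have merely fleshed out which passes are involved (convergecast for subtree sizes, broadcast for heavy-path identifiers and the numbering) and verified the bandwidth constraint, which is a useful elaboration but not a different route.
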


{\unTAP} starts by applying the labeling scheme, which takes $O(h)$ rounds. We next explain how we use it in order to build $G'$.

\subsubsection{The Graph $G'$}

We next describe the graph $G'$.
To simplify the presentation of the algorithm it is convenient to give an orientation to the edges of $G'$. However, we emphasize that $G'$ is an undirected graph, that is, we do not address the notion of directed connectivity.
The graph $G'$ is defined as follows (as in \cite{khuller1993approximation}). The graph $G'$ includes all the edges of $T$, and they are all oriented towards the root $r$ of $T$. For every non-tree edge $e=\{u,v\}$ in $G$ there are two cases (see Figure \ref{pic1a}):
\begin{enumerate}
\item If $u$ is an ancestor of $v$ in $T$, we add the edge $\{u,v\}$ to $G'$, oriented from $u$ to $v$.
\item Otherwise, denote $t=LCA(u,v)$. In this case we add to $G'$ the edges $\{t,u\}$ and $\{t,v\}$, oriented from $t$ to $u$ and to $v$, respectively.
\end{enumerate}
\setlength{\intextsep}{0pt}
\begin{figure}[h]
\centering
\setlength{\abovecaptionskip}{-2pt}
\setlength{\belowcaptionskip}{6pt}
\includegraphics[scale=0.6]{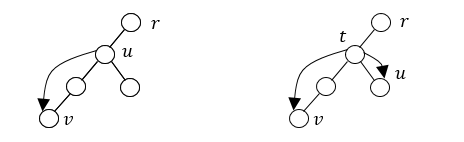}
 \caption{There are two cases for every non-tree edge in $G$. The left graph shows the first case, where the edge $\{u,v\}$ is between an ancestor and a descendant in $T$. The right graph shows the second case, where $t=LCA(u,v)$.} \label{pic1a} 
\end{figure}


Note that in the second case, the edges $\{t,u\}$ and $\{t,v\}$ added to $G'$ are not necessarily in $G$, and therefore we cannot use them for communication. Hence, the rest of the communication in the algorithm is only over the tree edges. In order to simulate the algorithm over $G'$, it is enough that each vertex knows only the tree edges incident to it (which is its input), and the labels of the non-tree edges incoming to it in $G'$. 

In order to achieve this, each vertex $v$ sends its label to all of its neighbors in $G$, and receives their labels. From them, each vertex $v$ computes the edges incoming to it in $G'$ using the labeling scheme: for each edge $e=\{u,v\}$ that is not a tree edge, $v$ uses the labels of $v$ and $u$ in order to compute $t=LCA(u,v)$. If $t=u$, i.e., $u$ is an ancestor of $v$ in $T$, the edge $\{u,v\}$ is incoming to $v$ in $G'$. Otherwise $t \neq u$, and if $t \neq v$, the edge $\{t,v\}$ is incoming to $v$ in $G'$. Since $v$ knows the labels of $u$ and $t$, using LCA computations it learns the labels of all the edges incoming to it in $G'$.

The construction of $G'$ takes $O(h)$ time, for constructing the labeling scheme by Lemma \ref{lca}. The rest of the computations take one round. This gives the following.

\begin{lemma} \label{timeb1}
Building $G'$ from $G$ takes $O(h)$ rounds.
\end{lemma}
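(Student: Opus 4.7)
The plan is to decompose the construction into two phases and bound each separately. First I would invoke Lemma~\ref{lca} to compute the LCA labeling scheme of Alstrup et al.\ in $O(h)$ rounds; after this phase, every vertex knows its own $O(\log n)$-bit label, and the labeling scheme guarantees that the label of $\mathrm{LCA}(u,v)$ can be computed from the labels of $u$ and $v$ alone, with no further communication.

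The second phase is a single round of communication on $G$: each vertex $v$ sends its label to every neighbor in $G$ and receives the labels of all its neighbors. This fits in CONGEST since each label is $O(\log n)$ bits. I would then argue that this is all the communication required, by showing that every vertex $v$ can now determine locally the labels of the endpoints of all edges of $G'$ incoming to $v$. For each non-tree edge $\{u,v\}$ in $G$, vertex $v$ already knows the label of $u$; it computes $t = \mathrm{LCA}(u,v)$ from the labeling scheme. If $t=u$, then $\{u,v\}$ is an incoming edge in $G'$ and $v$ records it with $u$'s label; otherwise if $t\neq v$, the edge $\{t,v\}$ is incoming to $v$ in $G'$, and $v$ records it with label $t$ (which it just computed). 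The tree edges of $G'$ are known locally since each vertex knows its parent in $T$.

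Summing the two phases gives $O(h) + O(1) = O(h)$ rounds, and every vertex ends the procedure knowing the labels of all edges of $G'$ incident to it (in the oriented sense needed by the algorithm), which is exactly the local information needed to simulate subsequent computations over the virtual graph $G'$ using communication only on the edges of $G$.

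The only subtle point — and what I would treat carefully in the writeup — is that endpoints $t$ appearing in $G'$ via the second case of the construction need not be neighbors of $v$ in $G$, so $v$ can never talk to $t$ directly; the proof must emphasize that the LCA labeling is precisely the mechanism that lets $v$ name $t$ (by $t$'s label) without any communication with $t$. Beyond this, the argument is routine.
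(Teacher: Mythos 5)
Your proof matches the paper's argument almost verbatim: apply Lemma~\ref{lca} for the $O(h)$-round LCA labeling, spend one more round exchanging labels with $G$-neighbors, and then observe that every vertex can locally determine the labels of its incoming $G'$-edges via LCA computations. The paper also makes the same observation you flag as the subtle point, namely that edges $\{t,v\}$ in $G'$ need not exist in $G$ and the labels are what let $v$ reason about $t$ without direct communication.
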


\subsection{The Correspondence between $G$ and $G'$} \label{sec:app_corr}
We next show that an optimal augmentation in $G'$ corresponds to an augmentation in $G$ with size at most twice the size of an optimal augmentation.

To build $G'$ from $G$, for each edge $e \in G$ that is not a tree edge, we added one or two edges to $G'$. These edges are the edges \emph{corresponding} to $e$ in $G'$. Equivalently, for each such edge $\widetilde{e} \in G'$, the edge $e$ is an edge \emph{corresponding} to $\widetilde{e}$ in $G$. An edge $\widetilde{e} \in G'$ may have several corresponding edges in $G$.
A non-tree edge $e=\{u,v\}$ in $G$ covers all the edges in the unique path in $T$ between $u$ in $v$. 
We next show that the corresponding edges to $e$ in $G'$ cover together exactly the same tree edges as $e$. This allows us to show that an optimal augmentation in $G'$ gives a 2-approximation augmentation in $G$, when we replace each edge of the augmentation in $G'$ by a corresponding edge in $G$.

\begin{claim} \label{claim1}
If the non-tree edge $e=\{u,v\}$ covers the tree edge $e'$ in $G$, then one of the edges corresponding to $e$ in $G'$ covers $e'$ in $G'$.
\end{claim}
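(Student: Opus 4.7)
The plan is to proceed by a direct case analysis following the two cases in the definition of how $e$ is mapped into $G'$. Recall that covering the tree edge $e'$ means that $e'$ lies on the unique tree path between the endpoints of the covering edge. Since the tree edges are present in both $G$ and $G'$, the relevant tree paths are the same objects in both graphs; what changes is only which non-tree edges are available.

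In the first case, $u$ is an ancestor of $v$ in $T$ and the only edge corresponding to $e=\{u,v\}$ in $G'$ is $\{u,v\}$ itself. The unique tree path between $u$ and $v$ is identical in $G$ and in $G'$, so if $e'$ lies on this path (i.e.\ $e$ covers $e'$ in $G$) then the corresponding edge $\{u,v\}\in G'$ also covers $e'$ in $G'$. This case is immediate.

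The interesting case is the second, where neither of $u,v$ is an ancestor of the other and $t=\mathrm{LCA}(u,v)$. Here I would use the standard decomposition of the $u$-to-$v$ tree path: it is the concatenation of the (vertex-disjoint except at $t$) paths $P_1$ from $u$ up to $t$ and $P_2$ from $t$ down to $v$. Since $e$ covers $e'$, the edge $e'$ lies on $P_1\cup P_2$, hence on exactly one of the two subpaths. Now observe that in $G'$ the corresponding edges are $\{t,u\}$ and $\{t,v\}$, and because $t$ is an ancestor of both $u$ and $v$, the unique tree paths between their endpoints are precisely $P_1$ and $P_2$ respectively. Therefore $\{t,u\}$ covers every tree edge on $P_1$ and $\{t,v\}$ covers every tree edge on $P_2$, so whichever subpath contains $e'$ yields a corresponding edge in $G'$ that covers $e'$.

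I do not anticipate a real obstacle here; the claim is essentially a translation between the representation of a non-tree edge by the unique tree path between its endpoints and the representation by the two ancestor-descendant segments meeting at the LCA. The only mild care needed is to justify that in $G'$ the tree path between $t$ and $u$ is exactly $P_1$ (and similarly for $v$), which holds because $T$ is identical in $G$ and $G'$ and $t$ is by definition on the unique $u$-to-$v$ path in $T$. After that, the claim follows in one or two lines per case.
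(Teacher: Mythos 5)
Your proof is correct and follows essentially the same route as the paper's: handle the ancestor--descendant case as immediate, and in the LCA case decompose the $u$-to-$v$ tree path into the two subpaths meeting at $t=\mathrm{LCA}(u,v)$, observing that $e'$ lies on one of them and is therefore covered by the corresponding edge $\{t,u\}$ or $\{t,v\}$. You spell out a bit more detail about why the subpaths are exactly the tree paths between $t$ and each endpoint, but the underlying argument is identical.
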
 

\begin{proof}
If $e$ is in $G'$ the claim is immediate. Otherwise, the edges $\{t,u\}$ and $\{t,v\}$, where $t=LCA(u,v)$, are the edges corresponding to $e$ in $G'$. The path from $u$ to $v$ in $T$ is the union of a simple path between $u$ and $t$ and another simple path from $t$ to $v$, so the edge $e'$ must be on one of these paths, hence one of the edges $\{t,u\}$ or $\{t,v\}$ covers it.
\end{proof}

\begin{claim} \label{claim2}
If the non-tree edge $\widetilde{e}$ in $G'$ covers the tree edge $e'$, and $e$ is an edge corresponding to $\widetilde{e}$ in $G$, then $e$ covers $e'$ in $G$.
\end{claim}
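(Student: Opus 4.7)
The plan is to argue by a short case analysis on how the virtual edge $\widetilde{e}$ was introduced into $G'$ during the construction described in Section~\ref{sec:app_build}. Let $e = \{u,v\}$ be the non-tree edge in $G$ of which $\widetilde{e}$ is a corresponding copy. By the definition of the construction, either (i) $u$ is an ancestor of $v$ in $T$ (or vice versa) and $\widetilde{e} = \{u,v\}$ is the single edge added to $G'$, or (ii) $t = LCA(u,v)$ is distinct from both $u$ and $v$, and $\widetilde{e}$ is one of the two edges $\{t,u\}$, $\{t,v\}$ added to $G'$.

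In case~(i) the claim is immediate: $\widetilde{e}$ and $e$ are the same pair of endpoints, and covering is defined purely in terms of the unique $T$-path between the endpoints, so the tree edges covered in $G$ and in $G'$ coincide. In case~(ii), assume without loss of generality that $\widetilde{e} = \{t,u\}$; the subcase $\widetilde{e} = \{t,v\}$ is symmetric. The key observation I would invoke is that, because $t = LCA(u,v)$, the unique $T$-path from $u$ to $v$ decomposes as the $T$-path from $u$ up to $t$ concatenated with the $T$-path from $t$ down to $v$. Hence the $T$-path between the endpoints of $\widetilde{e}$ is a subpath of the $T$-path between the endpoints of $e$. Consequently, any tree edge $e'$ lying on the former path also lies on the latter, meaning that $e$ covers $e'$ in $G$, as required.

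I do not anticipate a real obstacle here: the statement is a routine verification that the corresponding-edge mapping is covering-preserving, and it follows directly from the LCA-based definition of $G'$ together with the definition of covering as membership on the unique tree path between endpoints. The only care needed is to cleanly separate the ancestor/descendant case from the LCA case and to observe that the two subcases in the LCA case are symmetric.
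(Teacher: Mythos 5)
Your proof is correct and follows essentially the same route as the paper: both split into the ancestor/descendant case (immediate) and the LCA case, and in the latter both rely on the decomposition of the $T$-path from $u$ to $v$ into the subpath $u$--$t$ and the subpath $t$--$v$, from which the covering claim follows directly.
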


\begin{proof}
If $e=\widetilde{e}$ then the claim is immediate. Otherwise, $\widetilde{e}=\{t,u\}$ for some $t,u$, and $e=\{u,v\}$ where $t=LCA(u,v)$. The edge $\widetilde{e}$ covers $e'$ in $G'$, so $e'$ is on the unique path in $T$ between $t$ and $u$. The unique path in $T$ between $u$ and $v$ is the union of a simple path between $u$ and $t$ and another simple path from $t$ to $v$. In particular, the edge $e=\{u,v\}$ covers the edge $e'$ in $G$, as needed.
\end{proof}

Assume that $A'$ is an augmentation in $G'$, and $A$ is the set of corresponding edges in $G$, where each edge in $A'$ is replaced by a corresponding edge in $G$.

\begin{corollary}
$A$ is an augmentation in $G$.
\end{corollary}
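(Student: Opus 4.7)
The plan is to unpack what it means for $A$ to be an augmentation in $G$ (namely, that $T \cup A$ is 2-edge-connected) and then reduce the claim directly to Claim \ref{claim2}. Recall from the preliminaries that $T \cup A$ is 2-edge-connected if and only if $T \cup A$ contains no bridge, which in turn is equivalent to the statement that every tree edge $e' \in T$ is covered by some non-tree edge of $A$. So the goal reduces to: for every tree edge $e'$, there exists $e \in A$ covering $e'$ in $G$.

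To verify this, I would fix an arbitrary tree edge $e'$. Since $A'$ is an augmentation in $G'$ and $T \subseteq G'$ with all its tree edges still playing the same role, every tree edge is covered in $G' \cup A'$ by some non-tree edge of $A'$; in particular there is some $\widetilde{e} \in A'$ that covers $e'$ in $G'$. By construction of $A$, the edge $\widetilde{e}$ was replaced by a corresponding edge $e \in A \subseteq E(G)$. Then Claim \ref{claim2} applies directly and gives that $e$ covers $e'$ in $G$. Since $e'$ was arbitrary, every tree edge is covered by $A$, so $T \cup A$ is 2-edge-connected and thus $A$ is an augmentation in $G$.

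I do not expect a real obstacle here: this corollary is essentially a one-line consequence of Claim \ref{claim2}, and the only mild subtlety is to make explicit that ``$A'$ is an augmentation in $G'$'' means every tree edge is covered by $A'$ in $G'$, so that we can feed each covering edge $\widetilde{e}$ into Claim \ref{claim2}. No new combinatorial arguments, LCA computations, or case analyses are needed beyond what was already established in Claims \ref{claim1} and \ref{claim2}.
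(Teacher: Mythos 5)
Your proof is correct and follows exactly the same route as the paper's: use that $A'$ covers every tree edge in $G'$, pick a covering edge $\widetilde{e}\in A'$ for each tree edge, and apply Claim~\ref{claim2} to its corresponding edge in $A$. The paper states this in one line; you merely spell out the reduction of ``augmentation'' to ``covers every tree edge,'' which is the same argument.
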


\begin{proof}
$A'$ is an augmentation so it covers all tree edges and hence from Claim \ref{claim2}, $A$ covers all tree edges, i.e., $A$ is an augmentation in $G$. 
\end{proof}

\begin{lemma} \label{corr}
Assume that $A'$ is an $\alpha$-approximation to the optimal augmentation in $G'$, then $A$ is a $2\alpha$-approximation to the optimal augmentation in $G$. 
\end{lemma}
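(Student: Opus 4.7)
The plan is to prove the lemma via two short inequalities that relate the four quantities $|A|$, $|A'|$, $|OPT(G')|$, and $|OPT(G)|$, where $OPT(G)$ and $OPT(G')$ denote optimal augmentations in $G$ and $G'$ respectively. First I would observe that, by construction, $A$ is obtained from $A'$ by replacing each edge $\widetilde{e} \in A'$ with a single corresponding edge in $G$, so trivially $|A| \leq |A'|$. Combined with the hypothesis that $|A'| \leq \alpha \cdot |OPT(G')|$, this gives $|A| \leq \alpha \cdot |OPT(G')|$, so the only remaining task is to relate $|OPT(G')|$ to $|OPT(G)|$.

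Next I would show the key inequality $|OPT(G')| \leq 2 \cdot |OPT(G)|$. Let $OPT$ be an optimal augmentation in $G$, and let $B \subseteq E(G')$ be the union, over all $e \in OPT$, of the edges corresponding to $e$ in $G'$. By the construction of $G'$, each non-tree edge $e \in G$ contributes at most two edges to $G'$ (one edge if $e$ already joins an ancestor and a descendant, and two edges $\{t,u\}, \{t,v\}$ via the LCA $t$ otherwise), so $|B| \leq 2|OPT|$. By Claim~\ref{claim1}, every tree edge covered by some $e \in OPT$ is also covered in $G'$ by one of the corresponding edges in $B$; since $OPT$ covers all tree edges, so does $B$, and hence $B$ is a valid augmentation in $G'$. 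Therefore $|OPT(G')| \leq |B| \leq 2|OPT(G)|$.

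Chaining the two bounds yields $|A| \leq |A'| \leq \alpha \cdot |OPT(G')| \leq 2\alpha \cdot |OPT(G)|$, which is exactly the claim that $A$ is a $2\alpha$-approximation in $G$. (The fact that $A$ is actually an augmentation in $G$ is already established by the corollary immediately preceding the lemma, so no additional verification is needed.)

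There is no real obstacle here: the lemma is essentially a bookkeeping step on top of Claims~\ref{claim1} and~\ref{claim2}. The only subtle point worth stating carefully is that when we pass from $OPT$ in $G$ to a feasible augmentation $B$ in $G'$, the blow-up factor of $2$ comes entirely from case~2 of the construction of $G'$ (edges whose endpoints are incomparable in $T$), and that even so, $B$ remains a feasible augmentation because Claim~\ref{claim1} guarantees coverage of every tree edge covered by $OPT$.
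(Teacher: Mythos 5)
Your proof is correct and follows essentially the same route as the paper's: bound $|A|\leq|A'|$ by the one-edge-per-edge replacement, build a feasible augmentation in $G'$ from an optimal one in $G$ via the correspondence (using Claim~\ref{claim1}) to get $|OPT(G')|\leq 2|OPT(G)|$, and chain the inequalities. The only cosmetic difference is that you explicitly name $OPT(G')$ and interpose it, whereas the paper passes directly through $|A'|\leq\alpha|OPT'|$; the content is identical.
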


\begin{proof}
Note that $|A|\leq |A'|$ because each edge in $A'$ is replaced by one edge in $A$. Assume that $OPT$ is an optimal augmentation in $G$ and $OPT'$ is the set of corresponding edges in $G'$, where each edge in $G$ is replaced by the corresponding one or two edges in $G'$. $OPT$ covers all tree edges, so $OPT'$ covers all tree edges by Claim \ref{claim1}, i.e, it is an augmentation in $G'$. It holds that $|OPT'|\leq 2|OPT|$ because each edge is replaced by at most two edges. Moreover, $|A'|\leq \alpha|OPT'|$ because $A'$ is an $\alpha$-approximation to the optimal augmentation in $G'$. We conclude that $$|A|\leq |A'|\leq \alpha|OPT'| \leq 2\alpha|OPT|.$$
\end{proof}

\subsection{Finding an Optimal Augmentation in $G'$} \label{sec:app_find}

The goal of {\unTAP} now is to find an optimal augmentation in $G'$.
In $G'$ all the edges that are not tree edges are between an ancestor and a descendant of it in $T$. This allows us to compare edges and define the notion of \emph{maximal} edges. Intuitively, the notion of maximal edges would capture our goal that during the algorithm, when we cover a tree edge, we would like to cover it by an edge that reaches the highest ancestor possible, allowing us to cover many tree edges simultaneously. This motivates the following definition.
 Let $v$ be a vertex in the tree, and let $e=\{u,w\}$ and $e'=\{u',w'\}$ be two edges between ancestors $u,u'$ of $v$ and descendants $w,w'$ of $v$. We say that $e$ is the \textit{maximal} edge among $e$ and $e'$ if and only if $u$ is an ancestor of $u'$. If $u=u'$ we can choose arbitrarily one of them to be the maximal edge. Among the edges incoming to $v$, the \textit{maximal} edge is the edge $\{u,v\}$ from the ancestor $u$ of $v$ that is closest to the root. Note that using the LCA labels of such edges $e,e'$, a vertex $v$ can learn which is the maximal by computing $LCA(u,u')$. Moreover, using the labels of the edge $e$, a vertex $v$ can check if $e$ covers the tree edge $\{v,p(v)\}$ using LCA computations: it checks if $v$ is an ancestor of $w$ and if $u$ is an ancestor of $p(v)$. In our algorithm, each time a vertex sends an edge $e$, it sends the labels of $e$ which allow these computations. 
 
In order to cover all tree edges of $G'$, we assign each vertex $v\neq r$ in $G'$ with the responsibility of covering the tree edge $\{v,p(v)\}$. The idea behind the algorithm is to scan the tree $T$ from the leaves to the root,  and whenever a tree edge that is still not covered is reached, it is covered by the vertex responsible for it, using the maximal edge possible.

The algorithm {\unAug} for finding an optimal augmentation in $G'$ starts at the leaves of $T$ and works as follows:
\begin{itemize}
\item Each leaf $v$ covers the tree edge $\{v,p(v)\}$ by the maximal edge $e$ incoming to $v$, it adds $e$ to the augmentation and sends $e$ to its parent. We call this a \emph{necessary} edge.
\item Each internal vertex $v$ receives from each of its children at most 2 edges: one is necessary and one is \emph{optional}. Denote by $nec_v$ the maximal necessary edge received from $v$'s children, and denote by $opt_v$ the maximal edge among all the optional edges $v$ receives from its children and the edges incoming to $v$.
There are two cases:
\begin{enumerate}
\item The tree edge $\{v,p(v)\}$ is already covered by $nec_v$. In this case $nec_v$ is the necessary edge $v$ sends to its parent. In addition, $v$ sends to its parent $opt_v$ as an optional edge. 
\item The tree edge $\{v,p(v)\}$ is not covered by $nec_v$. In this case $v$ adds to the augmentation the edge $opt_v$. From the definition of $opt_v$, it follows that it is the maximal edge that covers $\{v,p(v)\}$. In this case $opt_v$ is the edge $v$ sends to its parent as a necessary edge, and it does not send an optional edge. If $opt_v$ is an optional edge received from one of $v$'s children, $v$ updates the relevant child that this edge is necessary and has been added to the augmentation. It also updates its other children that their edges are not necessary.
\end{enumerate} 
\item When an internal vertex receives from its parent indication if the optional edge it sent is necessary, it forwards the answer to the relevant child, if such exists.
\item At the end, each vertex knows if the maximal edge incoming to it is necessary or not.
The augmentation consists of all the necessary edges.
\end{itemize}

\subsection{Correctness Proof} \label{sec:app_correct}

Denote by $A'$ the solution obtained by {\unAug}, and by $A^*$ an optimal augmentation in $G'$.

\begin{lemma} \label{opt}
The algorithm {\unAug} finds an optimal augmentation in $G'$.
\end{lemma}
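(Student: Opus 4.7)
The plan is to prove Lemma \ref{opt} by establishing a propagation invariant that captures precisely what information each vertex has after processing, and then leveraging it to obtain both feasibility and minimum size of $A'$. For each vertex $v$ let $E_v^\uparrow$ denote the set of non-tree edges of $G'$ whose top (ancestor) endpoint is a strict ancestor of $v$ and whose bottom endpoint lies in the subtree rooted at $v$; equivalently, $E_v^\uparrow$ is exactly the set of edges that can cover the tree edge $\{v,p(v)\}$. The key invariant I would prove by bottom-up induction is:

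\emph{(P) After $v$ is processed, the edge of $E_v^\uparrow$ with the highest top is available to $v$ as one of $nec_v$, $opt_v$, or an edge incoming to $v$.}

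For the inductive step at an internal vertex $v$ with children $c_1,\ldots,c_k$, I would argue that if a child $c_i$ holds (by induction) the max-top edge $e^*$ of $E_{c_i}^\uparrow$ and $e^*.\text{top}$ is a strict ancestor of $v$, then $c_i$ necessarily forwards $e^*$ to $v$: either $e^*=nec_{c_i}$, in which case $e^*$ covers $\{c_i,v\}$ so Case~1 holds at $c_i$ and $nec_{c_i}$ is sent as necessary, or $e^*=opt_{c_i}$ (with incoming-to-$c_i$ already folded into the definition of $opt$), which is sent as optional in Case~1 or as necessary in Case~2. If instead $e^*.\text{top}=v$, then every edge of $E_{c_i}^\uparrow$ has top at most $v$, so subtree$(c_i)$ contributes nothing to $E_v^\uparrow$. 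Combining all children's forwarded edges with the incoming edges to $v$ then establishes (P).

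Feasibility is immediate from (P): since $G'$ inherits 2-edge-connectedness from $G$, $E_v^\uparrow$ is nonempty for every $v\neq r$, so the max-top edge of $E_v^\uparrow$ covers $\{v,p(v)\}$, and either $nec_v$ already covers it (and $nec_v\in A'$ via some descendant's Case~2 addition), or $nec_v.\text{top}=v$ and (P) forces $opt_v$ to be the max-top edge of $E_v^\uparrow$, which $v$ adds in Case~2. For optimality, I would charge against any augmentation $OPT$. For each $e\in A'$ let $v_e$ be the vertex that added $e$; the witness tree edges $\{v_e,p(v_e)\}$ are pairwise distinct. The key claim is that no single edge $f=\{u,w\}$ of $G'$ covers two distinct witnesses $\{v_i,p(v_i)\}$ and $\{v_j,p(v_j)\}$. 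Were this to fail, $v_i$ and $v_j$ would lie on a common root-to-leaf path and the bottom-up processing order would force, WLOG, $v_j$ to be a strict ancestor of $v_i$. By (P), $opt_{v_i}.\text{top}$ is at least as high as $u$, so $e_i$ itself covers $\{v_j,p(v_j)\}$. Tracing $e_i$ up the path from $v_i$ to $v_j$, at each intermediate vertex $v'$ the maximal necessary edge received from the child-on-the-path has top at least $u$, which is a strict ancestor of $p(v')$, so Case~1 holds all the way up and $nec_{v_j}$ covers $\{v_j,p(v_j)\}$, contradicting the addition of $e_j$ at $v_j$. Since each edge of $OPT$ covers at most one witness and $OPT$ must cover all $|A'|$ witness edges, we conclude $|OPT|\geq|A'|$.

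The hard part will be proving (P) cleanly, since the algorithm forwards at most two edges per vertex while subtrees may contain many candidate non-tree edges. The delicate point is the case analysis at each child (Case~1 vs.\ Case~2, and whether the max-top edge is $nec$, $opt$, or incoming), but the structural saving grace is that whenever an edge of $E_{c_i}^\uparrow$ has top strictly above $v$, the algorithm's rules force it to be the forwarded $nec_{c_i}$ or $opt_{c_i}$; any edge not forwarded has top $=v$ and hence contributes nothing to $E_v^\uparrow$ anyway. Once (P) is in place, the charging argument is the natural greedy-exchange argument for ``vertical'' tree augmentation on $G'$ and goes through without further surprises.
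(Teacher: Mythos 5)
Your proof is correct and takes essentially the same route as the paper: feasibility follows because $G'$ is 2-edge-connected, and optimality follows from an injective charging argument sending each $e\in A'$ to the (distinct) tree edge $t(e)$ it was added to cover and showing that no single optimal edge can cover two such witness edges. The one genuine value-add in your write-up is that you isolate and state the propagation invariant (P) explicitly; the paper's proof invokes the same fact implicitly when it asserts that $v_1$'s added edge is ``the maximal edge possible'' (a claim also planted, without justification, in the description of {\unAug}). Your tracing argument up the tree path from $v_i$ to $v_j$ is the unpacking of the paper's one-line step ``$e_1$ covers $t(e_2)$ as well, contradicting the fact that $t(e_2)\in t(A')$.'' So: same skeleton, slightly more explicit about why the algorithm always has the globally maximal covering edge in hand; this is a useful thing to pin down. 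Two small points to tighten when you flesh out (P): (i) the invariant is cleanest stated as ``the highest top among $nec_v$ and $opt_v$ equals the highest top in $E_v^\uparrow$'' (with leaves handled by treating their single sent edge as $nec_v$), which avoids the awkwardness of listing ``or an edge incoming to $v$'' when incoming edges are already folded into $opt_v$; and (ii) in the inductive step you should note that if the max-top edge of $E_{c_i}^\uparrow$ is merely some incoming edge, the quantity that actually gets forwarded is $opt_{c_i}$, which by definition has top at least as high, so the invariant propagates even if the literal edge does not.
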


\begin{proof}
First, $A'$ is an augmentation in $G'$. Consider a tree edge $e=\{v,p(v)\}$. There are edges in $G$ that cover $e$ because $G$ is 2-edge-connected, hence from Claim \ref{claim1} there are edges in $G'$ that cover $e$. Therefore, $v$ adds such an edge in order to cover $e$, if it is not already covered by $nec_v$. 

Now we show that $|A'|\leq|A^*|$, by showing a one-to-one mapping from $A'$ to $A^*$. Since $A'$ is an augmentation in $G'$, it follows that $A'$ is an optimal augmentation.

When an edge $e \in A'$ is added to $A'$ in {\unAug}, it is in order to cover some tree edge that is still
not covered, denote this edge by $t(e)$. Let $t(A')$ be all such tree edges. We map $e \in A'$ to an edge $e^* \in A^*$ that covers $t(e)$.

This mapping is one-to-one: assume to the contrary that there are two edges $e_1,e_2 \in A'$ that are mapped to the same edge $e^* \in A^*$. Note that $e^*$ is an edge between an ancestor and its descendant in $T$ that covers both $t(e_1)=\{v_1,p(v_1)\}$ and $t(e_2)=\{v_2,p(v_2)\}$. Hence, $t(e_1)$ and $t(e_2)$ are on the same path in the tree between an ancestor and its descendant. Assume that $t(e_2)$ is closer to the root $r$ on this path. Note that the tree edge $t(e_1)$ is not covered by $nec_{v_1}$ since $t(e_1) \in t(A')$. Hence, $v_1$ adds the edge $e_1$ in order to cover it, which is the maximal edge possible. Since the edge $e^*$ covers both $t(e_1)$ and $t(e_2)$, it follows that $e_1$ covers $t(e_2)$ as well, contradicting the fact that $t(e_2) \in t(A')$. This completes the proof that  $|A'|\leq|A^*|$.
\end{proof}

We complete {\unTAP} by replacing each edge in $A'$ by a corresponding edge in $G$.

\begin{lemma} \label{time}
The time complexity of {\unTAP} is $O(h)$ rounds.
\end{lemma}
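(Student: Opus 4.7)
The plan is to bound the time of each of the three phases of {\unTAP} separately and add them up. Phase~1 builds the virtual graph $G'$, and by Lemma~\ref{timeb1} this takes $O(h)$ rounds. Phase~3 replaces each edge $\widetilde{e}\in A'$ by a corresponding edge $e\in G$; since each $\widetilde{e}$ was produced from a concrete non-tree edge $e=\{u,v\}\in G$, the endpoints of $e$ already hold the necessary LCA labels from the construction of $G'$, so a single round of communication along the edges of $G$ suffices for each endpoint to decide whether the edge incident to it is included in $A$. Thus Phases~1 and~3 together cost $O(h)$ rounds.

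For Phase~2, which runs {\unAug}, the plan is to argue that the algorithm consists of one convergecast from the leaves of $T$ to the root, followed by one broadcast from the root back down to the leaves, each performed in a pipelined fashion of constant width. For the convergecast I would do an inductive argument on the depth: after $i$ rounds, every vertex at height $i$ in $T$ has received the (at most one) necessary edge and the (at most one) optional edge from each of its children, updated $nec_v$ and $opt_v$, decided what to send to its parent, and sent the designated at most two labels. Because each vertex sends to its parent at most one necessary and one optional edge, each carrying only a constant number of LCA labels and ids of total size $O(\log n)$ bits, these messages fit in a single CONGEST round, and the convergecast finishes after $h$ rounds when $r$ is reached.

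Then I would argue the symmetric downward phase: once an internal vertex $v$ learns from its parent whether the optional edge it forwarded was adopted as a necessary edge, it forwards this one-bit indication (together with the relevant label) to the unique child that originally supplied that optional edge. Again each message is $O(\log n)$ bits and each vertex sends at most one such message per round, so the information reaches every vertex in at most $h$ additional rounds. At that point each vertex knows whether the maximal edge incoming to it in $G'$ is in $A'$, which is exactly what Phase~3 needs. Summing, Phases~1,~2,~3 each cost $O(h)$ rounds, giving the claimed total of $O(h)$.

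The only nontrivial point to justify is the CONGEST constraint: namely, that at every round each vertex forwards only a constant number of labels rather than an ever-growing list of candidate edges. This is guaranteed by the design of {\unAug}, where $v$ aggregates all incoming necessary edges into the single maximum $nec_v$ and all incoming optional edges together with its own incident edges into the single maximum $opt_v$, using LCA comparisons on the stored labels. I expect this to be the main point worth spelling out, since the rest of the time analysis is a standard pipelined-convergecast-plus-broadcast argument.
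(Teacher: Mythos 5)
Your proof is correct and follows the same approach as the paper, which simply observes that \unAug{} consists of two tree traversals (leaves-to-root, then root-to-leaves), each $O(h)$ rounds, on top of the $O(h)$ rounds from Lemma~\ref{timeb1}. The extra detail you supply — that the CONGEST constraint holds because each vertex forwards only the aggregated $nec_v$ and $opt_v$ rather than a growing list, and that the downward pass is a single one-message-per-edge broadcast — is exactly the right justification, though note that this is a plain two-pass convergecast/broadcast rather than a pipeline (pipelining only becomes necessary in the weighted version, where a vertex sends up to $h$ messages).
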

\begin{proof}
Building $G'$ from $G$ takes $O(h)$ rounds by Lemma \ref{timeb1}. Finding an optimal augmentation in $G'$ takes $O(h)$ rounds as well: the algorithm {\unAug} consists of two traversals of the tree, from the leaves to the root, and vice versa. Hence, the total time complexity of {\unTAP} is $O(h)$ rounds.
\end{proof}

\begin{restatable}{theorem}{uTAP} \label{uTAP}
There is a distributed 2-approximation algorithm for unweighted TAP in the CONGEST model that runs in $O(h)$ rounds, where $h$ is the height of the tree $T$. 
\end{restatable}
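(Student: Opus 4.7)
The plan is to assemble the pieces that have already been developed in Sections~\ref{sec:app_build}--\ref{sec:app_correct} into a single proof of Theorem~\ref{uTAP}. The algorithm witnessing the theorem is exactly {\unTAP}, which proceeds in three phases: (i) build the virtual graph $G'$ from $G$, (ii) run {\unAug} to obtain an augmentation $A'$ of $G'$, (iii) replace every edge of $A'$ by a corresponding edge of $G$ to obtain the output $A$.

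First I would address correctness and the approximation ratio. Lemma~\ref{opt} gives that $A'$ is an \emph{optimal} augmentation in $G'$, i.e., it is a $1$-approximation in $G'$. Plugging $\alpha=1$ into Lemma~\ref{corr} immediately yields that the corresponding edge set $A$ in $G$ is an augmentation of $T$ with $|A|\le 2\cdot|\mathrm{OPT}|$, where $\mathrm{OPT}$ is an optimal augmentation in $G$. Thus $A$ is a $2$-approximate solution to the given unweighted TAP instance.

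Second I would bound the round complexity. Lemma~\ref{timeb1} shows that the construction of $G'$, including the LCA labelling scheme of Lemma~\ref{lca} and the single round of label exchange with neighbours in $G$, takes $O(h)$ rounds. Lemma~\ref{time} then gives that {\unAug} consists of at most two traversals of $T$ (leaves-to-root to accumulate necessary/optional edges, root-to-leaves to propagate the final decisions), each of which is $O(h)$ rounds, and the final replacement step is purely local (each endpoint of an edge in $A'$ already knows the label identifying a corresponding edge in $G$ from the construction of $G'$, so no additional communication is needed). Summing, {\unTAP} terminates in $O(h)$ rounds.

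Finally I would verify that the output is produced in the proper distributed form: since each edge added to $A'$ is either originally incident to a vertex (case~1 of the $G'$ construction) or is replaced by an edge $\{u,v\}$ whose endpoints are explicitly the endpoints of a known edge in $G$ (case~2), each vertex learns locally which of its incident edges belong to $A$, as required by the model. I do not expect a real obstacle here: every required step is already stated as a lemma, so the remaining work is just to chain Lemma~\ref{opt}, Lemma~\ref{corr}, and Lemma~\ref{time} together. The most delicate point to flag explicitly is that $\alpha=1$ in Lemma~\ref{corr} yields the factor $2$ (not $2\alpha$ with $\alpha>1$), so the $2$-approximation ratio of the theorem follows tightly from the inherent factor-$2$ loss in the $G \leftrightarrow G'$ correspondence and not from any suboptimality of {\unAug}.
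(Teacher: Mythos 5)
Your proposal is correct and matches the paper's own proof, which likewise chains Lemma~\ref{opt} (optimality of {\unAug} in $G'$), Lemma~\ref{corr} with $\alpha=1$ (factor-$2$ loss in the $G\leftrightarrow G'$ correspondence), and Lemma~\ref{time} (round complexity $O(h)$). The extra remarks you add about producing output in the proper local form are consistent with how the paper sets things up, though the paper's proof of this theorem does not restate them.
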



\begin{proof}
The algorithm {\unAug} finds an optimal augmentation in $G'$, as proven in Lemma \ref{opt}. By Lemma \ref{corr}, this corresponds to an augmentation in $G$ with size at most twice the optimal augmentation of $G$. The time complexity follows from Lemma \ref{time}.
\end{proof}

\section{A 2-approximation for Weighted TAP in $O(h)$ rounds} \label{sec:app_wTAP}

In this section, we prove Theorem \ref{wTAP}.

\wTAP*

Our algorithm for weighted TAP, {\weTAP}, has the same structure of {\unTAP}. It starts by building the same virtual graph $G'$, and then it finds an optimal augmentation in $G'$. The only difference in building $G'$ is that now each edge $e$ is replaced by one or two edges in $G'$ with the same weight that $e$ has. The proof that an optimal augmentation in $G'$ corresponds to an augmentation in $G$ with at most twice the cost of an optimal augmentation in $G$ is the same as in the unweighted case.

The difference is in finding an optimal augmentation in $G'$. In the unweighted case, for each vertex $v$, the only edge incoming to $v$ in $G'$ that was useful for the algorithm was the maximal edge. However, when edges have weights, potentially all the edges incoming to $v$ may be useful for the algorithm, and we can no longer use the notion of \emph{maximal} edges in order to compare edges. 
This is because of the tension between heavy edges that cover many edges of the tree, and light edges that cover less edges of the tree.
To overcome this obstacle, we introduce a new technique of \emph{altering} the weights of the edges we send in the algorithm. 

Let $min_v$ be the weight of the minimum weight edge that covers $\{v,p(v)\}$.
The intuition behind our approach is that in order to cover the tree edge $\{v,p(v)\}$ we must pay at least $min_v$. Thus, $min_v$ captures the cost of covering this tree edge. 
Therefore, before sending to its parent information about relevant edges, $v$ alters their weights by reducing from them the weight $min_v$. 
We show that altering the weights is crucial for selecting which edges to add to the augmentation, and allows to divide the weight of an edge in a way that captures the cost for covering each tree edge. In addition, we show that using this approach, sending information about at most $h$ edges from each vertex to its parent suffices for selecting the best edges for the augmentation.

In Section \ref{sec:app_alg}, we describe our algorithm for finding an optimal augmentation in $G'$. In Section \ref{sec:app_wcorrect}, we prove the correctness of the algorithm, and in Section \ref{sec:app_analysis}, we analyze its time complexity. 

\subsection{Finding an Optimal Augmentation in $G'$} \label{sec:app_alg}

Our algorithm consists of two traversals of the tree: from the leaves to the root and vice versa. As in {\unAug}, each vertex $v$ is responsible for covering the tree edge $\{v,p(v)\}$.

In the first traversal, each vertex $v$ computes the weight $min_v$ of the minimum weight edge that covers the tree edge $\{v,p(v)\}$ according to the weights of the edges it receives from its children, and the weights of the edges incoming to it. It also computes the weights of the minimum weight edges that cover the path from $v$ to each of its ancestors $u$, according to the weights $v$ receives in the algorithm. Then, $v$ subtracts $min_v$ from the weights of these edges, and sends them to its parent with the altered weights.

In the second traversal, we scan the tree from the root to the leaves. Each child $v$ of $r$ adds to the augmentation the edge having weight $min_v$. It informs the relevant child who sent it, if exists, and informs its other children it did not add their edges. Each internal vertex $v$ receives from its parent a message that indicates whether one of the edges it sent was added to the augmentation by one of its ancestors or not. In the former case, $v$ learns that this edge was added to the augmentation and forwards the message to the relevant child who sent it, if such exists. Otherwise, the tree edge $\{v,p(v)\}$ is still not covered, and $v$ adds to the augmentation the edge having weight $min_v$. It informs the relevant child who sent it, if exists, and informs its other children that their edges were not added to the augmentation.

A description of the algorithm is given in Algorithm \ref{alg}. For simplicity of presentation, we start by describing an algorithm which takes $O(h^2)$ rounds. Later, in Section \ref{sec:app_analysis}, we explain how using pipelining we improve the time complexity to $O(h)$ rounds. \\

\begin{algorithm}
\caption{Finding an Optimal Augmentation in $G'$}\label{alg}
\begin{algorithmic}[1]
\Statex
\Statex The code is for every vertex $v \neq r$
\Statex
\State \underline{Initialization:} 
\State $e_{v,u} \gets$ the minimum weight edge incoming to $v$ that covers the path between $v$ and its ancestor $u$ or $\bot$ if there is no such edge. 
\State $w_v(u) \gets w(e_{v,u})$ for each ancestor $u$ of $v$ such that $e_{v,u} \neq \bot$, and $w_v(u) \gets \infty$ otherwise. 
\State $A_v \gets$ the union of $v$ and its children in $T$.
\State $Aug_v \gets \emptyset$
\Statex
\State \underline{First Traversal:} 
\If {$v$ is a leaf}
\State for each ancestor $u$ of $v$:
$sender_v(u) \gets v$
\Else
\State \textbf{wait} for receiving $w_{v'}(u)$ for all ancestors $u$ of $v$, from each child $v'$ of $v$
\State for each ancestor $u$ of $v$: $w_v(u) \gets min_{v' \in A_v} {w_{v'}(u)}$, 
$sender_v(u) \gets argmin_{v' \in A_v} {w_{v'}(u)}$\label{l1} 
\EndIf 
\State $min_v \gets w_v(p(v))$ \label{line}
\State for each ancestor $u$ of $v$:
$w_v(u) \gets w_v(u)-min_v$
\State for each ancestor $u \neq p(v)$ of $v$ \textbf{send} $(u,w_v(u))$ to $p(v)$
\Statex
\State \underline{Second Traversal:} 
\State $u \gets p(v)$
\If {$v$ is not a child of $r$}
\State \textbf{wait} for a message $m$ from $p(v)$
\If {$m \neq \bot$} $u \gets m$
\EndIf
\EndIf
\State $s \gets sender_v(u)$
\If {$s=v$} 
\State $Aug_v \gets Aug_v \cup \{e_{v,u}\}$ \label{take}
\Else 
\State send $u$ to $s$
\EndIf
\State for each child $v' \neq s$ of $v$ \textbf{send} $\bot$ to $v'$
\end{algorithmic}
\end{algorithm} 

\subsubsection*{Technical Details:}

We assume in the algorithm that each vertex knows all the ids of its ancestors in $T$. We justify it in the next claim.
Note that when we construct $G'$, if $\{u,v\}$ is an edge between an ancestor $u$ and its descendant $v$ in $T$, $v$ learns the label of $u$ according to the LCA labeling scheme and not the id of $u$. However, once $v$ learns about the ids and labels of all its ancestors, it knows the id of $u$ as well, and can use it in the algorithm.

\begin{claim}
All the vertices can learn the ids and labels of all their ancestors in $O(h)$ rounds.
\end{claim}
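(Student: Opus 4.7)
The plan is to use a standard top-down pipelining argument over the tree $T$, in which each vertex forwards the identities and LCA labels of its ancestors to its descendants one hop at a time. Since each ancestor's (id, label) pair fits in $O(\log n)$ bits, it is a single CONGEST message, and the depth of $T$ is $h$, so the natural target is $O(h)$ rounds.

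More concretely, I would have every vertex maintain a FIFO queue of ancestor (id, label) pairs that still need to be forwarded to its children along the tree. Initially each vertex's queue contains only its own (id, label). In every round, each non-root vertex $v$ performs two synchronized actions: (i) it pops the head of its queue and sends it over each tree edge to its children, and (ii) it enqueues whatever (id, label) pair it receives from $p(v)$ (after recording that ancestor for itself). The root behaves the same way except it has no parent to receive from. By induction on $i$, I would show that at the end of round $i$ every vertex $v$ whose depth is at least $i-1$ has sent to its children the (id, label) of its ancestor at depth $d(v) - (i-1)$, and symmetrically every vertex $v$ with $d(v) \geq i$ has learned the (id, label) pairs of its $i$ nearest ancestors (itself and its $i-1$ proper ancestors). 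Taking $i = h$ gives that every vertex knows all of its ancestors.

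The main thing to check — and the only mild subtlety — is that the pipeline does not stall or cause congestion: at round $i$, each tree edge $\{p(v), v\}$ transmits exactly one (id, label) pair, namely the one that is $i-1$ levels above $v$, so a single $O(\log n)$-bit message suffices per edge per round, matching the CONGEST bound. After $h$ rounds the queues are empty and every vertex at any depth $d \leq h$ has received (id, label) pairs for all $d$ of its strict ancestors, as required. Hence the total time is $O(h)$ rounds, which proves the claim.
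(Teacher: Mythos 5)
Your proof is correct and follows essentially the same pipelining argument as the paper: each vertex first sends its own (id, label) pair to its children, and in each subsequent round forwards the pair it received from its parent, so after $h$ rounds every vertex has learned all its ancestors. The queue formalism and induction you add are just a more explicit bookkeeping of the same idea.
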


\begin{proof}
In order to do this, at the first round each vertex sends to its children its id and label. In the next round, each vertex sends to its children the id and label it received in the previous round, and we continue in the same way until each vertex learns about all its ancestors. Clearly, after $h$ rounds each vertex learns all the ids and labels of all its ancestors.
\end{proof}

\begin{claim}
If a vertex $v$ adds $e_{v,u}$ to $Aug_v$ in line \ref{take} of its algorithm, then $e_{v,u} \neq \bot$.
\end{claim}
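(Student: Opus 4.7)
The approach is to show, by induction on the depth of $v$ in $T$, that whenever $v$ reaches line~\ref{take} the quantity $w_v(u)$ produced by line~\ref{l1} in the first traversal is finite; once this is established the claim is immediate. Indeed, $v$ executes line~\ref{take} only when $sender_v(u)=v$, so $v$'s own contribution to the minimum $\min_{v'\in A_v}w_{v'}(u)$ attains that minimum. By line~3, $v$'s contribution is either $w(e_{v,u})$ or $\infty$; if the minimum is finite but $w(e_{v,u})=\infty$, then some child would offer a strictly smaller finite value and the $\arg\min$ would be that child rather than $v$. So a finite minimum forces $w(e_{v,u})<\infty$, i.e.\ $e_{v,u}\neq\bot$.

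As a preliminary I would verify that $min_v=w_v(p(v))<\infty$ for every non-root $v$; this is where the 2-edge-connectivity of $G$ is used. The tree edge $\{v,p(v)\}$ is covered by some non-tree edge of $G$, so by Claim~\ref{claim1} it is also covered by some edge $\tilde e$ of $G'$, and by the construction of $G'$ the lower endpoint of $\tilde e$ lies in the subtree rooted at $v$. A short sub-induction from that endpoint up to $v$ then shows that $\tilde e$ contributes a finite value to line~\ref{l1} at every ancestor on the way, the essential point being that subtracting the finite quantity $min_{v'}$ in line~13 preserves finiteness. Hence the line-\ref{l1} minimum at $v$ for the ancestor $p(v)$ is finite, giving $min_v<\infty$.

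The main induction then splits into two situations. In the base case ($v$ a child of $r$) and in the sub-case where $v$ receives $\bot$ from its parent, we have $u=p(v)$, and the preliminary directly supplies finiteness of the line-\ref{l1} minimum. For the inductive step, suppose $v$ receives a non-$\bot$ message $m$ from $p(v)$ and sets $u=m$. Then $p(v)$ forwarded $u$ to $v$ precisely because $sender_{p(v)}(u)=v$, so $v$'s altered contribution $w_v(u)-min_v$ equals the line-\ref{l1} minimum at $p(v)$ for this same $u$. Applying the induction hypothesis to $p(v)$, whose target is this same $u$, gives that this minimum is finite; combined with $min_v<\infty$ from the preliminary, this shows that the unaltered $w_v(u)$ produced at $v$ by line~\ref{l1} is finite, and the opening observation then gives $e_{v,u}\neq\bot$.

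The main obstacle in making this rigorous will be the bookkeeping: line~\ref{l1} takes a minimum over quantities on two different ``scales''---the vertex's own unaltered line-3 weight together with the altered values received from its children---and the induction must carry enough information to guarantee that the finiteness witnessed by 2-edge-connectivity at the bottom of the tree survives both the subtraction of $min_{v'}$ in line~13 and this scale-mixing as it propagates upward through the tree.
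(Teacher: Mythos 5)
Your proof is correct and relies on the same key fact the paper uses---that 2-edge-connectivity of $G'$ (via Claim~\ref{claim1}) makes $min_v$ finite for every $v\neq r$---but where the paper's argument is a one-line ``it follows,'' you work out the algorithmic bookkeeping explicitly via a height induction for the preliminary $min_v<\infty$ and a depth induction tracking which ancestor $u$ each vertex receives in the second traversal. The one point to tighten (which you already flag) is that the preliminary's ``short sub-induction'' must be an induction on the height of vertices in $T$ so that using $min_{v'}<\infty$ at descendants $v'$ while proving $min_v<\infty$ is not circular.
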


\begin{proof}
Since $G'$ is 2-edge-connected, we can cover all tree edges by edges from $G'$. Hence, the minimum weight of an edge that covers some tree edge is never infinite. It follows that if a vertex $v$ adds $e_{v,u}$ to $Aug_v$, then $e_{v,u} \neq \bot$.
\end{proof}

\subsection{Correctness Proof} \label{sec:app_wcorrect}

The challenge in establishing the correctness of our algorithm lies in the fact that the vertices use altered weights rather than the original ones. Nevertheless, we show that our intuition behind choosing these altered weights faithfully captures the essence of finding an augmentation in the weighted case.

\begin{lemma} \label{wcor}
Algorithm \ref{alg} finds an optimal augmentation in $G'$.
\end{lemma}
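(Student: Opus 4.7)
The plan is to establish two things about the output $A'$ of Algorithm~\ref{alg}: that it is an augmentation of $G'$ and that its total weight equals the quantity $\sum_{v \neq r} min_v$, which is simultaneously a lower bound on the weight of any augmentation of $G'$. Optimality follows immediately. For coverage, I would analyze the second traversal as partitioning $V \setminus \{r\}$ into disjoint \emph{chains}. A chain starts at a vertex $v_0$ that is either a child of $r$ or has just received $\bot$ from its parent; in both cases the chain's ``target ancestor'' is $u = p(v_0)$. It then extends to $v_{i+1} = sender_{v_i}(u)$ (a child of $v_i$) and terminates at the first $v_k$ with $sender_{v_k}(u) = v_k$; this $v_k$ adds the single edge $e_{v_k,u}$ to the augmentation. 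Its tree path from $v_k$ to $u$ contains precisely the tree edges $\{v_i, p(v_i)\}$ for $i = 0, \ldots, k$. Since every non-root vertex is processed exactly once by the second traversal (either starting a chain or continuing one), the chains partition $V \setminus \{r\}$ and their edges cover every tree edge.

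For the weight of $A'$, denote by $W_v(u)$ the value of $w_v(u)$ at vertex $v$ immediately after line~\ref{l1} and before the alteration. The update rule of line~\ref{l1} immediately gives the identity: if $sender_v(u) = v'$ is a child of $v$, then $W_{v'}(u) = W_v(u) + min_{v'}$. Telescoping this along a chain $v_0, \ldots, v_k$ with target $u = p(v_0)$, starting from $W_{v_0}(u) = W_{v_0}(p(v_0)) = min_{v_0}$, yields $W_{v_k}(u) = \sum_{i=0}^{k} min_{v_i}$. Because the chain terminates with $sender_{v_k}(u) = v_k$, we have $W_{v_k}(u) = \hat{w}(v_k, u) = w(e_{v_k,u})$, so the added edge has weight exactly $\sum_{i=0}^{k} min_{v_i}$. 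Summing over all chains and using the partition, $w(A') = \sum_{v \neq r} min_v$.

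For the lower bound, I would prove that for every non-tree edge $e = \{x, y\}$ in $G'$ with $x$ an ancestor of $y$, writing $y = v_0, v_1, \ldots, v_k = x$ for the tree path, the inequality $\sum_{i=0}^{k-1} min_{v_i} \leq w(e)$ holds. Since $e$ is incoming to $v_0$ and covers the whole path, $\hat{w}(v_0, v_k) \leq w(e)$, so $W_{v_0}(v_k) \leq w(e)$. The altered value $W_{v_0}(v_k) - min_{v_0}$ reaches $v_1 = p(v_0)$ and enters the min defining $W_{v_1}(v_k)$, giving $W_{v_1}(v_k) \leq w(e) - min_{v_0}$; iterating up the path yields $W_{v_{k-1}}(v_k) \leq w(e) - \sum_{i=0}^{k-2} min_{v_i}$, and the claim follows from $min_{v_{k-1}} = W_{v_{k-1}}(v_k)$. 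Then, for any augmentation $A^*$ of $G'$, every tree edge $\{v, p(v)\}$ is covered by at least one $e \in A^*$, hence
\[
w(A^*) \;=\; \sum_{e \in A^*} w(e) \;\geq\; \sum_{e \in A^*} \sum_{\{v, p(v)\} \text{ covered by } e} min_v \;\geq\; \sum_{v \neq r} min_v \;=\; w(A').
\]

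The main obstacle is the bookkeeping of altered versus original weights: at each internal $v$, the value $W_v(u)$ mixes $v$'s own initial value $\hat{w}(v, u)$ with altered values received from children, and both the upper bound on $w(A')$ and the inductive lower bound on $w(e)$ hinge on the same telescoping identity $W_{v'}(u) = W_v(u) + min_{v'}$ for the sender child. Making this identity precise, and checking that chains indeed partition $V \setminus \{r\}$ so that each $min_v$ is counted exactly once on both sides, is what makes the two sums collapse onto $\sum_v min_v$ and force optimality.
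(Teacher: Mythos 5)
Your proof is correct and takes essentially the same approach as the paper's: both assign the cost $min_v$ to the tree edge $\{v,p(v)\}$, show by telescoping the altered weights that the algorithm's output has total weight exactly $\sum_{v\neq r} min_v$, and show this quantity is a lower bound on the weight of any augmentation. Your ``chains partitioning $V\setminus\{r\}$'' is a reframing of the paper's disjoint tree-path decomposition of $E(T)$ in Claim~\ref{c3} (the two are in bijection via $v \leftrightarrow \{v,p(v)\}$), and your telescoping identity $W_{v'}(u)=W_v(u)+min_{v'}$ is the mechanism underlying the paper's Claims~\ref{c1} and~\ref{c2}. The one place you are mildly cleaner is the lower bound: where the paper's Claim~\ref{c4} first constructs a greedy disjoint path decomposition of $T$ from edges of $A^*$ and then applies Claim~\ref{c2} to each piece, you simply sum the per-edge inequality $w(e)\geq\sum_{\{v,p(v)\}\text{ covered by }e}min_v$ over all $e\in A^*$ and observe each tree edge is covered at least once; this sidesteps the decomposition bookkeeping, since over-counting only strengthens the bound.
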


\begin{proof}
Note that the solution obtained by the algorithm is an augmentation of $G'$ because each vertex $v$ adds an edge in order to cover the tree edge $\{v,p(v)\}$ if it is not already covered by an edge which one of its ancestors decides to add to the augmentation.

We next show that the augmentation is optimal. The key ingredient we use in our proof is giving costs to the edges of $T$ such that the sum of the costs is equal to both the cost of the solution obtained by the algorithm and the cost of an optimal augmentation of $G'$. 
Hence, we conclude that the cost of the solution obtained by the algorithm is optimal.

\paragraph*{Giving costs to the edges of $T$:$\ $} Fix a vertex $v \neq r$ and let $t=\{v,p(v)\}$. We define $c(t)=min_v$ (the value of $w_v(p(v))$ in line \ref{line} of the algorithm).

For an edge $e=\{u,x\}$ that covers $t$, such that $u$ is an ancestor of $x$ in $T$, let $P$ be the path of tree edges between $x$ and $p(v)$ in $T$. Note that the path $P$ is defined with respect to $t$ and $e$. For a vertex $v'$ such that $\{v',p(v')\} \in P$, let $P_{v'}$ be the path of tree edges between $x$ and $v'$. 
Note that $min_v$ is the weight of the minimum weight edge covering the tree edge $t=\{v,p(v)\}$ according to the weights $v$ receives in the algorithm. Denote this edge by $e_v$.

\begin{claim} \label{c1}
$w(e_v)=\sum_{t' \in P} c(t')$, where $P$ is the path defined by $t=\{v,p(v)\}$ and $e_v$.
\end{claim} 

\begin{proof}
Let $e_v=\{u,x\}$, where $u$ is an ancestor of $x$ in $T$. For each vertex $v'$ on the path between $x$ and $v$, $e_v$ is the minimum weight edge covering the path between $v'$ and its ancestor $p(v)$, according to the weights $v'$ receives in the algorithm, as otherwise we get a contradiction to the definition of $e_v$. Each vertex on this path reduces $min_{v'}$ from the weight of $e_v$ it receives before sending it to its parent. Denote by $V'$ all the vertices on the path between $x$ and $v$, excluding $v$. It follows that $$c(t)=min_v=w(e_v)-\sum_{v' \in V'} min_{v'}=w(e_v)-\sum_{t' \in P_v}c(t'),$$ which gives $w(e_v)=\sum_{t' \in P} c(t')$.
\end{proof}

\begin{claim} \label{c2}
For each edge $e$ that covers $t$, it holds that $w(e) \geq \sum_{t' \in P} c(t'),$ where $P$ is the path defined by $t$ and $e$.
\end{claim}

\begin{proof}
Let $e=\{u,x\}$ be an edge that covers $t=\{v,p(v)\}$ where $u$ is an ancestor of $x$ in $T$. Denote by $P_v=\{x=v_1,...,v_k=v\}$ the path of tree edges between $x$ and $v$ in $T$. 
We prove by induction that $$w_{v_i}(p(v)) \leq w(e)-\sum_{t' \in P_{v_i}} c(t'),$$ where $w_{v_i}(p(v))$ is the value obtained in line \ref{l1} of the algorithm of $v_i$ (or at the initialization if $v_i$ is a leaf).  

For $i=1$, let $e_{v_1,p(v)}$ be the minimum weight edge incoming to $v_1$ that covers the path between $v_1$ and $p(v)$ in $T$. Note that $w(e_{v_1,p(v)}) \leq w(e)$ because $e$ is an edge incoming to $v_1$ that covers the path between $v_1$ and $p(v)$. The value of $w_{v_1}(p(v))$ is the weight of the minimum weight edge covering the path between $v_1$ and $p(v)$, according to the weights $v_1$ receives. In particular, $w_{v_1}(p(v)) \leq w(e_{v_1,p(v)})$, and therefore $w_{v_1}(p(v)) \leq w(e)$. Since $P_{v_1}$ is an empty path, we have $\sum_{t' \in P_{v_1}} c(t') = 0$, which gives $$w_{v_1}(p(v)) \leq w(e)-\sum_{t' \in P_{v_1}} c(t').$$

Assume the claim holds for $i$, and we prove it holds for $i+1$. Denote by $t_i$ the tree edge $\{v_i,v_{i+1}\}$. Note that $v_i$ sends to $v_{i+1}$ the message $(p(v), w_{v_i}(p(v))-min_{v_i})$ since it reduces $min_{v_i}$ from the value of $w_{v_i}(p(v))$ before sending it to its parent. The value of $w_{v_{i+1}}(p(v))$ is the weight of the minimum weight edge covering the path between $v_{i+1}$ and $p(v)$, according to the weights $v_{i+1}$ receives. In particular, $w_{v_{i+1}}(p(v)) \leq w_{v_i}(p(v))-min_{v_i}$.  
By the induction hypothesis $w_{v_i}(p(v)) \leq w(e)-\sum_{t' \in P_{v_i}} c(t')$, which gives $$w_{v_{i+1}}(p(v)) \leq w(e)-\sum_{t' \in P_{v_i}} c(t') - min_{v_i} = w(e) - \sum_{t' \in P_{v_{i+1}}} c(t').$$ 
For $i=k$ we get $$c(t) = w_{v}(p(v)) \leq w(e)-\sum_{t' \in P_v} c(t'),$$
which implies that $w(e) \geq \sum_{t' \in P} c(t')$, as claimed.
\end{proof}

\begin{claim} \label{c3}
The sum of the costs of the edges of $T$ is equal to the cost of the solution obtained by the algorithm.
\end{claim}

\begin{proof}
We map each edge $e$ added to the augmentation to a path $P_e$ of tree edges, such that:
\begin{enumerate}[(I)]
\item The paths that correspond to different augmentation edges are disjoint, and their union is the entire tree $T$. That is, $P_e \cap P_{e'} = \emptyset$ for $e \neq e'$, and $\cup P_e = T$.
\item The weight of $e$ is equal to the sum of costs of tree edges in the corresponding path, i.e., $w(e)=\sum_{t' \in P_e} c(t')$.
\end{enumerate}
Let $e=\{u,x\}$ be an edge added to the augmentation, such that $u$ is an ancestor of $x$ in $T$. Let $v$ be the vertex that decides to add $e$ to the augmentation. Note that $v$ decides to add $e$ to the augmentation because it covers the tree edge $\{v,p(v)\}$, which is not covered yet by an edge that one of $v$'s ancestors decides to add to the augmentation. We map $e$ to the tree path $P_e$ that consists of all the tree edges on the path between $x$ and $p(v)$. Note that $e$ covers all the edges on this path (and it may also cover other tree edges, on the path between $p(v)$ and $u$ in $T$). This divides the tree edges to disjoint paths because the vertices on the path between $x$ and $p(v)$ do not decide to add other edges to the augmentation, since all the relevant tree edges are already covered by $e$. In addition, these paths include all tree edges because the edges added to the augmentation cover all tree edges. This proves (I).

Note that $v$ adds $e$ to the augmentation because the tree edge $\{v,p(v)\}$ is not covered yet. So $v$ chooses $e$ because it is the minimum weight edge $e_v$ that covers $\{v,p(v)\}$. By Claim \ref{c1}, it holds that $w(e_v)=\sum_{t' \in P} c(t')$ where $P=P_e$ is the path of tree edges between $x$ and $p(v)$. This proves (II).
(I) and (II) complete the proof that the cost of all the edges added to the augmentation is equal to the sum of costs of the edges in $T$.
\end{proof}

\begin{claim} \label{c4}
The cost of any augmentation of $G'$ is at least the sum of costs of the edges of $T$.
\end{claim}

\begin{proof}
Let $A$ be an augmentation in $G'$. We map a subset of edges $E' \subseteq A$ to paths $\{P'_e\}_{e \in E'}$ in $T$ such that:
\begin{enumerate}[(I)]
\item The paths that correspond to different edges are disjoint, and their union is the entire tree $T$.
\item The weight of an edge $e \in E'$ is at least the sum of costs of tree edges on the path $P'_e$.
\end{enumerate}
We cover tree edges by edges from $A$ as follows. While there is a tree edge that is still not covered, we choose a tree edge $\{v,p(v)\}$ that is still not covered and is closest to the root $r$, where initially $p(v)=r$. Since $A$ is an augmentation, there is an edge $e=\{u,x\}$ in $A$ such that $u$ is an ancestor of $x$ in $T$ and $e$ covers $\{v,p(v)\}$. We map $e$ to the tree path $P'_e$ between $x$ and $p(v)$. The edge $e$ covers all the tree edges on this path, and may cover additional edges closer to the root that are already covered by other edges from $A$. We continue in the same manner until all the tree edges are covered. From the construction, the paths are disjoint and include all tree edges, proving (I).

From Claim \ref{c2}, it holds that $w(e) \geq \sum_{t' \in P} c(t')$ where $P=P'_e$ is the path of tree edges between $x$ and $p(v)$, proving (II).

To conclude, the cost of all the edges in $A$ is at least the sum of costs of all the edges of $T$. Note that there might be edges from $A$ that are not mapped to paths in $T$, which can only increase the cost of $A$.
\end{proof}

From Claims \ref{c3} and \ref{c4} we have that the cost of the augmentation obtained by the algorithm is smaller or equal to the cost of any augmentation of $G'$, hence the solution obtained by the algorithm is optimal. This completes the proof of Lemma \ref{wcor}.
\end{proof}

\subsection{Time analysis} \label{sec:app_analysis}

We next analyze the time complexity of the algorithm. In the second traversal of the tree, each parent sends to each of its children one message, which takes $O(h)$ rounds. In the first traversal of the tree, each vertex sends to its parent at most $h$ edges. 
If each vertex waits to receive all the messages from its children, before sending messages to its parent, it would result in a time complexity of $O(h^2)$ rounds. However, using pipelining we get a time complexity of $O(h)$ rounds. To show this, we carefully design each vertex $v$ to send the messages $(u,w_v(u))$ in increasing order of heights of its ancestors.

The main intuition is that although each vertex $v$ may receive $h$ different messages from each of its children during the algorithm, in order for $v$ to send to its parent $p(v)$ the message concerning an ancestor $u$, the vertex $v$ only needs to receive one message from each of its children concerning the ancestor $u$. Hence, if all the vertices send the messages according to increasing order of heights of their ancestors, we can pipeline the messages and get a time complexity of $O(h)$ rounds. We formalize this intuition in the next lemma.

\begin{lemma}
If all the vertices send the messages according to increasing order of heights of their ancestors, the following holds.
A vertex $v$ at height $i$ sends to its parent until round $i+j$ the message $(u,w_v(u))$ such that $u$ is an ancestor of $v$ at height $j$.
\end{lemma}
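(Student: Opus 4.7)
The plan is to prove the statement by induction on the height $i$ of $v$, assuming throughout that every vertex, at each round, releases to its parent the next pending message in increasing order of the heights of the ancestors it refers to, and does so as soon as that message is computable.

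For the base case $i = 0$, the vertex $v$ is a leaf, so it already knows $w_v(u)$ for every ancestor $u$ from its initialization. I would order the non-parent ancestors of $v$ by increasing height $j_1 < j_2 < \cdots$ and have $v$ transmit the $k$-th of them in round $k$. Since the heights of the ancestors of $v$ strictly increase along the root-path, the $k$-th non-parent ancestor has height at least $k + 1$, so the message about an ancestor at height $j$ is sent in round at most $j - 1 \leq j = i + j$, as required.

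For the inductive step $i \geq 1$, fix an ancestor $u$ of $v$ at height $j$. To form $(u, w_v(u))$ with $w_v(u) = \min_{v' \in A_v} w_{v'}(u)$, the vertex $v$ needs its own initial value $w(e_{v,u})$ (known from round $0$) and the corresponding message from every child. Each child $v'$ has height $i' \leq i - 1$, and by the induction hypothesis it delivers $(u, w_{v'}(u))$ to $v$ by round $i' + j \leq (i-1) + j$. Hence all the inputs needed to compute $w_v(u)$ are available to $v$ by the end of round $(i-1) + j$, so the message $(u, w_v(u))$ is ready to be forwarded by round $i + j$ at the latest. Because the heights of $v$'s ancestors are distinct and the schedule is increasing in height, the message about an ancestor at height $j'$ is assigned the dedicated round $i + j'$; in particular all messages for heights $j' < j$ fall into strictly earlier rounds, so round $i + j$ is free for the message about $u$. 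Therefore $v$ sends $(u, w_v(u))$ to $p(v)$ by round $i + j$, completing the induction.

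The main obstacle I anticipate is confirming that the pipelining schedule is actually feasible: one must check that no round is assigned two different messages (ensured by the distinctness of the ancestor heights) and that each message is ready by its assigned round (ensured precisely by the strict inequality $i' \leq i - 1$ between $v$ and each of its children, which is what the induction exploits). Once these two points are verified the rest is a level-by-level unfolding of the induction and carries no further difficulty.
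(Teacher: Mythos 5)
Your proof follows essentially the same pipelining strategy as the paper's: induct on the height $i$ of the vertex, use the induction hypothesis to show that children deliver the message about a height-$j$ ancestor by round $(i-1)+j$, and argue that sending in increasing height order leaves round $i+j$ free. Your explicit ``dedicated round'' schedule replaces the paper's secondary induction on $j$, but the two are equivalent.

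There is, however, a genuine gap: you treat the transmitted value as $w_v(u) = \min_{v' \in A_v} w_{v'}(u)$, i.e.\ the value computed at line~11 of Algorithm~\ref{alg}. But before $v$ sends anything, line~14 overwrites $w_v(u)$ with the \emph{altered} weight $w_v(u) - \min_v$, where $\min_v = w_v(p(v))$. So to emit the message $(u, w_v(u))$ the vertex $v$ must also have computed $\min_v$, and your argument never establishes when that happens. This is not automatic: $\min_v$ depends on receiving $w_{v'}(p(v))$ from every child $v'$, which is a separate set of messages from the ones you track. The fix is short but requires an observation you don't make: let $i'$ be the height of $p(v)$; since $i' > i$, the induction hypothesis on the children gives $\min_v$ to $v$ by round $(i-1)+i'$, and since every non-parent ancestor $u$ has height $j > i'$, we have $(i-1)+i' < i+j$, so $\min_v$ is known in time to alter the weight of every message that is actually sent. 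Without this step the proof does not account for what the algorithm actually transmits, and the lemma as stated is not established.
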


\begin{proof}
We prove the lemma by induction. For a vertex at height 0 (a leaf) the claim holds since $v$ sends the messages according to increasing order of heights. We assume that the claim holds for each vertex at height at most $i-1$, and show that it also holds for each vertex $v$ at height $i$. 

If $j \leq i$ the claim holds trivially, since $v$ does not have ancestors at height $j$. 
We assume that the claim holds for $i$ and $j-1$ and we show that it also holds for $i$ and $j$. Let $v$ be a vertex at height $i$, and let $u$ be an ancestor of $v$ at height $j$. Note that by the induction hypothesis, by round $i-1+j$ all the children $v'$ of $v$ already sent to $v$ the messages $(u,w_{v'}(u))$.
Therefore, $v$ can compute $w_v(u) \gets min_{v' \in A_v} {w_{v'}(u)}$.
Note that by round $i+j-1$, $v$ already sent all the messages concerning ancestors at height at most $j-1$ and sends the message concerning $u$ to its parent until round $i+j$ as needed (in the case that $u=p(v)$ no message is sent in the algorithm). Note that $v$ also knows and sends the new weight $w_v(u)$: denote by $i'$ the height of the parent of $v$ ($i < i'$), then each other ancestor of $v$ is at height greater than $i'$. Until round $i+i'$, $v$ knows $min_v=w_v(p(v))$, so for all the relevant values of $j$ ($i' \leq j$) it can compute the new weight $w_v(u) \gets w_v(u)-min_v$ until round $i+j$. 
\end{proof}

From the lemma we get that by round $2h$ all the children of $r$ learn about the minimum weight edge that covers the tree edge between them and $r$, so the first traversal is completed after $O(h)$ rounds. It follows that the overall time complexity of the algorithm is $O(h)$ rounds as needed, giving the following.

\begin{lemma} \label{timew}
Algorithm \ref{alg} completes in $O(h)$ rounds.
\end{lemma}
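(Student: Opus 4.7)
The plan is to bound the two traversals of Algorithm~\ref{alg} separately and then add the costs. The second traversal is straightforward: every internal vertex, upon receiving a single message from its parent, sends exactly one message to each of its children; so the information propagates one level per round from $r$ downward and finishes in $O(h)$ rounds. Therefore the entire task reduces to showing that the first (leaves-to-root) traversal also finishes in $O(h)$ rounds, which is where the pipelining is needed.

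For the first traversal I would instantiate the pipelining lemma proved just above, which states that if every vertex sends its outgoing messages in increasing order of the heights of the corresponding ancestors, then a vertex $v$ at height $i$ has sent the message $(u,w_v(u))$ to its parent by round $i+j$, where $j$ is the height of $u$. Applying this to a child of the root, whose height is $h-1$, and to the highest relevant ancestor $u=r$ (or the topmost ancestor distinct from $p(v)$), yields that all of its messages have been delivered to the root by round at most $(h-1)+h = O(h)$. Hence after $O(h)$ rounds every vertex has computed its final value of $min_v$ and $w_v(u)$ for every ancestor $u$, which is exactly what is needed to start the second traversal.

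Finally, I would remark that the initialization tasks in the algorithm are purely local (each vertex inspects its own incident non-tree edges in $G'$ and the labels of its ancestors, both of which are already available from the construction of $G'$ and the preliminary broadcast of ancestor ids), and hence do not contribute to the round complexity. Summing the $O(h)$ cost of the first (pipelined) traversal with the $O(h)$ cost of the top-down second traversal yields a total round complexity of $O(h)$, as claimed. The only subtle point, already absorbed into the preceding pipelining lemma, is verifying that when $v$ is ready to forward the message $(u,w_v(u))$ it has already received the corresponding messages from all its children and also knows $min_v$; this is precisely what the induction in that lemma guarantees, so no further argument is needed here.
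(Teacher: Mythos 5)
Your proof is correct and follows essentially the same route as the paper: bound the top-down traversal by $O(h)$ directly, and invoke the pipelining lemma for the bottom-up traversal to conclude that after roughly $2h$ rounds every vertex (in particular every child of $r$) has computed $\min_v$. The only cosmetic imprecision is the phrase about messages being ``delivered to the root''---a child $v$ of $r$ has no ancestor other than $p(v)=r$ and hence sends nothing upward; what matters is that it \emph{receives} $(r,w_{v'}(r))$ from each child $v'$ by round $2h$, which the lemma guarantees---but this does not affect the bound.
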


\wTAP*

\begin{proof}
By Lemma \ref{wcor}, Algorithm \ref{alg} finds an optimal augmentation in $G'$. Its time complexity is $O(h)$ rounds by Lemma \ref{timew}. This augmentation corresponds to an augmentation in $G$ with cost at most twice the cost of an optimal augmentation of $G$ by Lemma \ref{corr} (the proof is for the unweighted case, but the same proof shows it holds for the weighted case as well). Building $G'$ is the same as in the unweighted case and takes $O(h)$ rounds by Lemma \ref{timeb1}.
\end{proof}

\section{Applications} \label{sec:applic}

In this section, we discuss applications of our algorithms, and show they provide efficient algorithms for additional related problems. \\

\textbf{Minimum Weight 2-Edge-Connected Spanning Subgraph:}
In the minimum weight 2-edge-connected spanning subgraph problem (2-ECSS), the input is a 2-edge-connected graph $G$, and the goal is to find the minimum weight 2-edge-connected spanning subgraph of $G$. Using {\unTAP} we have the following.

\ECSS*

\begin{proof}
We apply {\unTAP} on $G$ and a BFS tree $T$ of $G$. Finding a BFS tree takes $O(D)$ rounds \cite{peleg2000distributed}, and {\unTAP} takes $O(D)$ rounds since $T$ is a BFS tree. The size of the augmentation $Aug$ is at most $n-1$ because in the worst case we add a different edge in order to cover each tree edge. Hence, $T \cup Aug$ is a 2-edge-connected subgraph with at most $2(n-1)$ edges. Note that any 2-edge-connected graph has at least $n$ edges, which implies a 2-approximation, as claimed.
\end{proof}

The above algorithm has a better time complexity compared to the algorithm of \cite{krumke2007distributed}, which finds a $\frac{3}{2}$-approximation to 2-ECSS in $O(n)$ rounds. In the algorithm of \cite{krumke2007distributed}, the augmented tree $T$ is a DFS tree rather then a BFS tree. The same proof of \cite{krumke2007distributed, khuller1994biconnectivity} gives that if we apply {\unTAP} on $G$ and a DFS tree we also obtain a $\frac{3}{2}$-approximation to 2-ECSS in $O(n)$ rounds.
For weighted 2-ECSS, using {\weTAP} gives the following.

\begin{theorem}
There is a distributed 3-approximation algorithm for weighted 2-ECSS in the CONGEST model that completes in $O(h_{MST}+\sqrt{n}\log^*{n})$ rounds, where $h_{MST}$ is the height of the MST.
\end{theorem}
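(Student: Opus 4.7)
The plan is to follow the same two-stage recipe as in the unweighted 2-ECSS result, but substituting an MST for the BFS tree and the weighted TAP algorithm {\weTAP} for the unweighted one. Concretely, the algorithm I propose is: (i) compute an MST $T$ of $G$ using the Kutten--Peleg / Garay--Kutten--Peleg algorithm, which runs in $O(D+\sqrt{n}\log^{*}{n})$ rounds; (ii) run {\weTAP} on $(G,T)$, which by Theorem~\ref{wTAP} produces, in $O(h_{MST})$ rounds, a set $Aug \subseteq E(G)$ of weight at most twice the optimal augmentation weight, and $T \cup Aug$ is 2-edge-connected; (iii) output $T \cup Aug$ as the 2-edge-connected spanning subgraph.

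For the approximation ratio, let $OPT$ denote an optimal weighted 2-ECSS of $G$ and let $OPT_{TAP}$ denote an optimal TAP solution for the instance $(G,T)$. I would argue the bound via two standard inequalities. First, $w(T)\le w(OPT)$, since $OPT$ is connected and spanning and hence contains a spanning tree whose weight is at least $w(T)$ by the MST property. Second, $w(OPT_{TAP}) \le w(OPT)$, because $OPT$ itself is already 2-edge-connected and spanning, so $T \cup OPT$ is 2-edge-connected, making $OPT \setminus T$ a feasible augmentation of $T$ with weight at most $w(OPT)$. Combining with the guarantee $w(Aug) \le 2\,w(OPT_{TAP})$ from Theorem~\ref{wTAP} gives
\[
w(T)+w(Aug)\;\le\; w(OPT)+2\,w(OPT_{TAP})\;\le\; 3\,w(OPT),
\]
which establishes the 3-approximation.

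For the round complexity, stage (i) costs $O(D+\sqrt{n}\log^{*}{n})$ and stage (ii) costs $O(h_{MST})$. Since any vertex in $G$ is within $h_{MST}$ hops of the root of $T$ along tree edges, $D \le 2h_{MST}$, so the total is $O(h_{MST}+\sqrt{n}\log^{*}{n})$ as claimed. The only mildly delicate point, which I would state explicitly, is the feasibility direction of the inequality $w(OPT_{TAP})\le w(OPT)$: one must observe that adding $T$ to an already 2-edge-connected subgraph keeps it 2-edge-connected, so $OPT\setminus T$ (rather than $OPT$ itself or some subset thereof) is the object that certifies $OPT_{TAP}$ is cheap. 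Everything else is a direct composition of known building blocks and the theorems proved earlier in the paper, so I do not anticipate any serious obstacle.
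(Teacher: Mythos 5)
Your proposal is correct and follows the same approach as the paper: build an MST with the Kutten--Peleg algorithm, augment it with {\weTAP}, and bound the output weight by $w(T)+w(Aug)\le w(OPT)+2\,w(OPT_{TAP})\le 3\,w(OPT)$ using the MST lower bound and the feasibility of $OPT\setminus T$ as an augmentation. You spell out the two lower-bound inequalities and the observation $D\le 2h_{MST}$ more explicitly than the paper does, but the argument is identical in substance.
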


\begin{proof}
We follow the same approach of \cite{krumke2007distributed}. We start by constructing an MST, which takes $O(D+\sqrt{n}\log^*{n})$ rounds \cite{kutten1995fast}, and then we augment it using {\weTAP} in $O(h_{MST})$ rounds.\footnote{We assume that the MST is unique. Otherwise, $h_{MST}$ is the height of the MST we construct.} Let $w(A)$ be the weight of an optimal solution $A$ to weighted 2-ECSS. Since both the MST and an optimal augmentation have weights at most $w(A)$, and since our algorithm for weighted TAP gives a 2-approximation, this approach gives a 3-approximation for weighted 2-ECSS. 
\end{proof}

This algorithm has a better time complexity compared to the algorithm of \cite{krumke2007distributed}, which takes $O(n \log{n})$ rounds, with the same approximation ratio. \\

\textbf{Increasing the Edge-Connectivity from 1 to 2:}
{\weTAP} is a 2-approximation algorithm for TAP, but can also be used to increase the connectivity of any spanning subgraph $H$ of $G$ from $1$ to $2$. In order to do so, we start by finding a spanning tree $T$ of $H$. Note that it is not enough to apply {\unTAP} on $T$ and take the augmentation obtained, since edges from $H$ can be added to the augmentation with no cost. Hence, we apply {\weTAP} on $G$ and $T$, where we set the weights of all the edges of $H$ to be $0$.
The augmentation $Aug$ we obtain is a set of edges such that $T \cup Aug$ is 2-edge-connected, which also implies that $H \cup Aug$ is 2-edge-connected. In addition, its cost is at most twice the cost of an optimal augmentation of $H$, because any augmentation of $H$ corresponds to an augmentation of $T$ with the same cost, and $Aug$ is a $2$-approximation to the optimal augmentation of $T$. The time complexity is $O(D_H)$ rounds where $D_H$ is the diameter of $H$, since finding a spanning tree $T$ of $H$ takes $O(D_H)$ rounds and applying {\weTAP} takes $O(D_H)$ rounds because it is the height of $T$. \\

\textbf{Verifying 2-Edge-Connectivity:}
The algorithm {\unTAP} can be used in order to verify if a connected graph $G$ is 2-edge-connected in $O(D)$ rounds, where at the end of the algorithm all the vertices know if $G$ is 2-edge-connected.\footnote{A verification algorithm with the same complexity can also be deduced from the edge-biconnectivity algorithm of Pritchard \cite{pritchard2005robust}.} We start by building a BFS tree $T$ of $G$ and then apply {\unTAP} to $G$ and $T$. Note that when we find an optimal augmentation in $G'$ by {\unAug}, each vertex $v$ is responsible to cover the tree edge $\{v,p(v)\}$. If the graph $G$ is 2-edge-connected, all the edges can be covered.
If the graph $G$ is not 2-edge-connected, then there is a tree edge $\{v,p(v)\}$ that is a bridge in the graph, and hence cannot be covered by any edge in $G$. In such a case, $v$ identifies that it cannot cover the edge and hence the graph is not 2-edge-connected. 
Therefore, after scanning the tree from the leaves to the root in {\unAug}, we can distinguish between these two cases, which takes $O(D)$ rounds. The root $r$ can distribute the information to all the vertices in $O(D)$ rounds as well.

\section{A 4-approximation for Unweighted TAP in $\widetilde{O}(D+\sqrt{n})$ rounds} \label{sec:faster}

The time complexity of {\unTAP} and {\weTAP} is linear in the height of $T$. When $h$ is large, we suggest a much faster $O(D+\sqrt{n}\log^*{n})$-round algorithm for unweighted TAP, proving Theorem \ref{uTAPtwo}.

\uTAPtwo*




The structure of the algorithm is the same as the structure of {\unTAP}. It starts by building the same virtual graph $G'$, and then it finds an augmentation in $G'$. 
However, now we do not necessarily obtain an optimal augmentation in $G'$, but rather a 2-approximation to the optimal augmentation of $G'$, which results in a 4-approximation to the optimal augmentation in $G$.
Since we want to reduce the time complexity, our algorithm cannot scan the whole tree anymore. Therefore, we can no longer use directly the LCA labeling scheme and the algorithm {\unAug} for finding an optimal augmentation. To overcome this, we break the tree $T$ into fragments, and we divide the algorithm into local parts, in which we communicate in each fragment separately, and to global parts, in which we coordinate between different fragments over a BFS tree. This approach is useful also in other distributed algorithms for global problems, such as finding an MST \cite{kutten1995fast} or a minimum cut \cite{nanongkai2014almost}.
The challenge is showing that this approach guarantees a good approximation.
Since our algorithm does not scan the whole tree $T$ it may add different edges in order to cover the same tree edges, which makes the analysis much more involved. \\

\textbf{Building $G'$ from $G$:}
To build $G'$ from $G$ we use the labeling scheme for LCAs that we used in {\unTAP}. However, applying this scheme directly takes $O(h)$ rounds. We show how to compute all the relevant LCAs more efficiently in $O(D + \sqrt{n})$ rounds. The idea is to apply the labeling scheme on each fragment separately to obtain \emph{local labels}, and to apply the labeling scheme on the tree of fragments to obtain \emph{global labels}. We show that using the local and global labels, and additional information on the structure of the tree of fragments, each vertex can compute all the edges incoming to it in $G'$. \\ 

\textbf{Finding an augmentation in $G'$:}
In order to find an augmentation in $G'$, we need to cover tree edges between fragments (\emph{global edges}) and tree edges in the same fragment (\emph{local edges}). 
We next give a high-level overview of our approach, the exact algorithm differs slightly from this description and appears in Section \ref{sec:app_aug}.
We start by computing all the maximal edges that cover the global edges. To cover all the global edges, one approach could be to add all these maximal edges to the augmentation. However, this cannot guarantee a good approximation.
Instead, we apply {\unAug} on the tree of fragments in order to cover all the global edges. Then, we apply it on each fragment separately in order to cover the local edges in the fragment that are still not covered. This algorithm requires coordination between different fragments, since each vertex $v$ needs to learn if the tree edge $\{v,p(v)\}$ is already covered after the first part of the algorithm. In addition, although the second part is applied on each fragment separately, a vertex $v$ may need to add an edge incoming to another fragment to cover the tree edge $\{v,p(v)\}$. For achieving an efficient time complexity, we show how to use only $O(\sqrt{n})$ different messages for the whole coordination of the algorithm. \\

We next provide full details of the algorithm. In Section \ref{frag}, we explain how we break the tree into fragments using the MST algorithm of Kutten and Peleg \cite{kutten1995fast}. In Section \ref{sec:app_lca}, we show how we build the graph $G'$, and in Section \ref{sec:app_aug} we explain how we find an augmentation in $G'$. The approximation ratio analysis appears in Section \ref{sec:app_approx}.

\subsection{Breaking $T$ into fragments} \label{frag}
We break the tree $T$ into fragments, such that each fragment is a tree with diameter at most $O(\sqrt{n})$ and there are at most $O(\sqrt{n})$ fragments. We do this by using the MST algorithm of Kutten and Peleg \cite{kutten1995fast} which has a time complexity of $O(D+\sqrt{n}\log^*{n})$ rounds. We say that a tree edge is a \textit{local} edge if its vertices are in the same fragment, and is a \textit{global} edge if it connects two fragments. The tree of fragments $T_F$ is the tree obtained by contracting each fragment $F$ into one vertex $v_F$ and having an edge between $v_{F_1}$ and $v_{F_2}$ if the two fragments are connected by a global edge. Since there are at most $O(\sqrt{n})$ fragments, $T_F$ is of size $O(\sqrt{n})$. Each fragment has a root, which is the vertex $v$ closest to $r$ in the fragment. 

Our algorithm is divided to local parts, in which we communicate in each fragment separately, which results in time complexity proportional to the fragments' diameter, $O(\sqrt{n})$, and to global parts, in which we coordinate between different fragments over a BFS tree rooted at $r$. Building a BFS tree rooted at $r$ takes $O(D)$ rounds \cite{peleg2000distributed}. Using the BFS tree we can distribute $k$ different messages from vertices in the tree to all the vertices in the tree in $O(D+k)$ rounds: we first collect all the messages in the root $r$ using upcast, and then $r$ broadcasts the messages to all the vertices in the tree. Each of these parts takes $O(D+k)$ rounds \cite{peleg2000distributed}. We show that it is enough to distribute $O(\sqrt{n})$ different messages for the coordination, which results in time complexity of $O(D + \sqrt{n})$ rounds. The overall time complexity of the algorithm is $O(D+\sqrt{n}\log^*{n})$ rounds.

\subsection{Building $G'$ from $G$} \label{sec:app_lca}

In order to build $G'$ from $G$, it is enough that each vertex knows all the edges incoming to it in $G'$. In order to obtain this, we use the labeling scheme for LCAs that we used for {\unTAP}. However, applying this scheme takes $O(h)$ rounds, and in order to avoid the dependence on $h$ we break the label to a local part and a global part in the following way:
\begin{itemize}
 \item We first apply the labeling scheme for LCAs on each fragment separately, to obtain local labels.
 \item Next, we apply the labeling scheme for LCAs on $T_F$, such that each fragment gets a label. This is the global label of all the vertices in the fragment. 
\end{itemize} 

The first part takes $O(h_F)$ rounds on a fragment $F$ of height $h_F$. Since the diameter of each fragment is  $O(\sqrt{n})$, it follows that this part takes $O(\sqrt{n})$ rounds.

In order to implement the second part efficiently, we first distribute information about the global edges to all the vertices. Note that each global edge connects two fragments. We assume that each fragment has an id known to all the vertices in the fragment, say, the id of the root of the fragment, which it can distribute to all the vertices in the fragment in $O(\sqrt{n})$ rounds. For each global edge $e=\{v,p(v)\}$, the vertex $v$ distributes the message $(id_1,id_2,\ell_1,\ell_2)$ where $id_1,id_2$ are the ids of the fragments of $v$ and $p(v)$, and $\ell_1$, $\ell_2$ are the local labels of $v$ and $p(v)$. Since there are $O(\sqrt{n})$ global edges, we can distribute this information over the BFS tree to all the vertices in $O(D + \sqrt{n})$ rounds.
After distributing the information about the global edges to all the vertices, they all learn the whole structure of $T_F$. Now each vertex can compute locally the labeling scheme for LCAs on $T_F$ and, in particular, learn its global label. Note that applying the labeling scheme does not require communication, so the total round complexity of the second part is $O(D + \sqrt{n})$.

We now explain how we use the local and global labels in order to compute LCAs in $T$. Assume the vertices $v,u$ have the local labels $\ell_v,\ell_u$ and the global labels $g_v,g_u$, respectively:
\begin{itemize}
 \item If $g_v=g_u$ then $v$ and $u$ are in the same fragment. It follows that their LCA is in this fragment, since the root of the fragment is an ancestor of both of them. In this case we use the local labels $\ell_v,\ell_u$ in order to compute the local label of their LCA in the fragment, whose global label is $g_v$.
 \item If $g_v \neq g_u$ then $v$ and $u$ are in different fragments $F_v, F_u$. They use the global labels in order to compute the global label $g$ of the fragment $F$ that is the LCA of $F_v,F_u$ in $T_F$. In this case it follows that the LCA of $v$ and $u$ in $T$ is in $F$, and its global label is $g$. If $F=F_v$ it follows that $v$ is the LCA of $v$ and $u$, so its local label is $\ell_v$. Similarly, if $F=F_u$ then its local label is $\ell_u$. Otherwise, in order to find the local label of the LCA, note that $v$ and $u$ know the whole structure of $T_F$. In particular, they can find the paths between $F_v$ to $F$ in $T_F$, and between $F_u$ to $F$ in $T_F$. The last edges on these paths are global edges of the form $e_1=\{v_1,p(v_1)\},e_2=\{v_2,p(v_2)\}$ where $p(v_1),p(v_2)$ are in $F$ ($e_1 \neq e_2$, otherwise we get a contradiction to the fact that $F$ is the LCA of $F_v,F_u$ in $T_F$). Note that $v$ and $u$ know the local labels of all the vertices in global edges, and in particular they know the local labels $\ell_1,\ell_2$ of $p(v_1),p(v_2)$. They can use $\ell_1,\ell_2$ in order to compute the local label of the LCA of $p(v_1),p(v_2)$ in $F$. This is the LCA of $v,u$ in $T$. In conclusion, using $g_v,g_u$, $v$ and $u$ can compute the local and global labels of their LCA in $T$. The computation is based on the information about global edges all vertices know, and does not require communiction.    
\end{itemize} 

We explained how all the vertices get local and global labels, and how they use these labels in order to compute LCAs in $T$. As in {\unTAP}, in one round each vertex can send its labels to all its neighbors in $G$, and get their labels. From these labels each vertex can compute the local and global labels of all the edges incoming to it in $G'$ by computing LCAs, which does not require communication. The overall time complexity of constricting $G'$ is $O(D + \sqrt{n})$ rounds, for applying the labeling scheme. This gives,

\begin{lemma} \label{timeb}
Building $G'$ from $G$ takes $O(D + \sqrt{n})$ rounds.
\end{lemma}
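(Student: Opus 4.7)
The plan is to account separately for the three phases described in the construction: (i) the local LCA labeling inside each fragment, (ii) the dissemination of the skeleton of the fragment tree $T_F$ so that every vertex learns its global label, and (iii) the single-round exchange of labels between neighbors in $G$ that lets each vertex list its incoming edges in $G'$.

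First, I would bound phase (i). By Lemma \ref{lca}, running the Alstrup et al.\ labeling scheme on a rooted tree of height $h_F$ costs $O(h_F)$ rounds, and since each fragment has diameter $O(\sqrt{n})$, the depth inside a fragment is $O(\sqrt{n})$. The labelings in different fragments are executed in parallel without causing any congestion, since all their communication is confined to local edges, so phase (i) costs $O(\sqrt{n})$ rounds.

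Next, I would bound phase (ii). For each of the $O(\sqrt{n})$ global edges we need to broadcast a single $O(\log n)$-bit message $(id_1,id_2,\ell_1,\ell_2)$, so that each vertex knows the endpoints of every global edge, together with their fragment ids and local labels. Using the BFS tree of $G$, upcast followed by broadcast moves $k$ distinct $O(\log n)$-bit messages in $O(D+k)$ rounds, giving $O(D+\sqrt{n})$ rounds for this phase. Once the whole topology of $T_F$ is stored at every vertex, computing the labels of the Alstrup et al.\ scheme on $T_F$ requires no further communication, so this provides the global labels for free. Phase (iii) is a single round in which each vertex sends its (local, global) label pair to all its neighbors in $G$; after that, every vertex locally computes the LCA in $T$ of itself with each neighbor via the case analysis above (same-fragment vs.\ cross-fragment), and hence determines the labels of all edges incoming to it in $G'$. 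Summing yields $O(\sqrt{n})+O(D+\sqrt{n})+O(1)=O(D+\sqrt{n})$ rounds.

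The only non-routine point, and what I would expect to be the main obstacle, is verifying that the split labels together with the broadcast description of $T_F$ really suffice to recover every LCA in $T$. The same-fragment case follows directly from correctness of the Alstrup et al.\ scheme inside the fragment, but in the cross-fragment case one has to argue that the LCA of $u$ and $v$ is always the endpoint in the common ancestor fragment $F$ of one of the last two global edges on the paths $F_u \to F$ and $F_v \to F$ in $T_F$; and that the local labels of those endpoints are precisely the pieces that were distributed in phase (ii). Once this is checked, the round count follows immediately, establishing Lemma \ref{timeb}.
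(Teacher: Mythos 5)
Your proof follows the same decomposition the paper uses: local LCA labeling within each fragment ($O(\sqrt{n})$ rounds), distribution of the $O(\sqrt{n})$ global-edge records $(id_1,id_2,\ell_1,\ell_2)$ over a BFS tree ($O(D+\sqrt{n})$ rounds), local computation of the global labels on $T_F$ (free, once $T_F$ is known to all), and a single round of neighbor label exchange followed by local LCA computation. The round counts match the paper's. One small imprecision in your remark about the cross-fragment case: the LCA of $u$ and $v$ is not in general one of the two endpoints $p(v_1),p(v_2)$ of the final global edges into $F$; rather it is the LCA \emph{within $F$} of those two endpoints, which is exactly why you need the local labels $\ell_1,\ell_2$ that were broadcast so you can run the local Alstrup et al.\ query inside $F$. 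With that correction, your argument is essentially the paper's.
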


\subsection{Finding an Augmentation in $G'$} \label{sec:app_aug}

We next explain how to find an augmentation in $G'$ in $O(D + \sqrt{n})$ rounds.
In {\unAug}, when we find an augmentation in $G'$, we scan $T$ from the leaves to the root, and whenever we get to a tree edge that is still not covered we cover it by the maximal edge possible. An edge $e$ is the maximal edge between $e=\{u,w\}$ and $e'=\{u',w'\}$, where $u,u'$ are ancestors of $w,w'$ respectively, if and only if $u$ is an ancestor of $u'$.

We define a variant of this algorithm, {\unAugTag}, whose input is the tree $T$, the graph $G$, and a set $T_0$ of tree edges from $T$ that are already covered. {\unAugTag} finds an augmentation in $G$ by applying {\unAug}, with the difference that now we cover only the tree edges that are not in $T_0$. 
When we cover an edge, we still cover it by the maximal edge possible. 

The general structure of the algorithm for finding an augmentation in $G'$ is as follows:
\begin{itemize}
\item Each leaf $v$ covers the tree edge $\{v,p(v)\}$ by the maximal edge possible.
\item We cover global edges that are still not covered by applying {\unAugTag} on $T_F$.
\item We cover local edges that are still not covered by applying {\unAugTag} on each fragment separately.
\end{itemize}

We next describe how to implement the above efficiently in a distributed way.

\subsubsection{Covering Leaf Edges}

For a leaf $v$ in $T$, we say that the tree edge $\{v,p(v)\}$ is a \textit{leaf edge}.
 
We start the algorithm by covering leaf edges: each leaf $v$ covers the tree edge $\{v,p(v)\}$ by the maximal edge possible. Since each vertex knows the labels of all the edges incoming to it in $G'$, it knows which is the maximal one as in {\unAug}. This computation does not require any communication. However, for the rest of the algorithm each vertex $u$ needs to know if the tree edge $\{u,p(u)\}$ is covered by one of the edges we added in order to cover leaf edges. In order to do that, we need coordination between the vertices. We divide this task into a local coordination at each fragment, and a global coordination between fragments.

\paragraph*{Local coordination:$\ $}
In this part, each vertex $v$ learns about the maximal edge that covers $\{v,p(v)\}$ among edges added to the augmentation by leaves in its fragment, if such exists.

In order to do this, we apply the following algorithm in each fragment separately: we scan the fragment from its leaves to its root by having each leaf $v$ of the fragment that is also a leaf in $T$ send to its parent the labels of the edge it added. Any leaf of the fragment that is not a leaf in $T$ sends to its parent an empty message. 

Each internal vertex $v$ gets messages from all its children. If at least one of the messages is an edge that covers $\{v,p(v)\}$, $v$ sends to its parent the labels of the maximal edge among those it received from its children. Otherwise, it sends an empty message. Note that using the labels of an edge $e=\{u,w\}$, where $u$ is an ancestor of $w$, a vertex $v$ knows if this edge covers $\{v,p(v)\}$ using LCA computations: it checks if $v$ is an ancestor of $w$ and if $u$ is an ancestor of $p(v)$. It can also learn which is the maximal edge by LCA computations.

Note that by the end of the algorithm each vertex $v$ learns if the tree edge $\{v,p(v)\}$ is covered by an edge that one of the leaves of the fragment adds to the augmentation, and the root of the fragment, $v'$, learns the labels of the maximal edge added to the augmentation by leaves of the fragment that covers the global edge $\{v',p(v')\}$, if such exists.
The round complexity of this part is proportional to the diameter of the fragment, and is bounded by $O(\sqrt{n})$.

\paragraph*{Global coordination:$\ $} 
Each vertex $v$ that is a root of a fragment, excluding $r$, sends over the BFS tree the labels of the maximal edge added to the augmentation by leaves of the fragment that covers $\{v,p(v\}$, if such exists. Since there are at most $O(\sqrt{n})$ fragments, there are at most $O(\sqrt{n})$ messages sent. So we can distribute these messages over the BFS tree to all the vertices in $O(D + \sqrt{n})$ rounds. 

Note that using the labels of an edge $e$, a vertex $v$ knows if this edge covers $\{v,p(v)\}$ using LCA computations. In particular, each vertex $v$ knows if the tree edge $\{v,p(v)\}$ is covered by one of the $O(\sqrt{n})$ edges sent to all the vertices. 

Note that although there may be $\omega(\sqrt{n})$ leaves in $T$, and each one adds an edge to the augmentation, after the local coordination and the global coordination, in which each vertex receives information about $O(\sqrt{n})$ edges, each vertex $v$ knows if the tree edge $\{v,p(v)\}$ is covered by an edge added by a leaf in $T$. This is proven in the next claim.

\begin{claim} \label{glob}
After the local and global coordination, each vertex $v$ knows if the tree edge $\{v,p(v)\}$ is covered by an edge added by a leaf in $T$. 
\end{claim}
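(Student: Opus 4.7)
The plan is to establish both directions of the equivalence. The ``only if'' direction is immediate from the protocol: every message propagated in either coordination phase carries the labels of a genuine leaf-added edge, and $v$ only declares $\{v,p(v)\}$ covered after an LCA-based check on such labels succeeds; so $v$ never wrongly claims coverage. For the forward direction, I would fix any leaf $\ell$ together with the edge $e=\{u,\ell\}$ (with $u$ an ancestor of $\ell$) that $\ell$ added to the augmentation, and assume $e$ covers $\{v,p(v)\}$. The preliminary observation, used throughout, is that coverage forces $v$ to lie on the tree path from $u$ down to $\ell$, so $v$ is an ancestor of $\ell$ and $u$ is an ancestor of $p(v)$.

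When $\ell$ and $v$ lie in the same fragment, I would induct upward along the in-fragment path from $\ell$ to $v$. The invariant is that at each intermediate vertex $w$ on this path, $w$ receives from at least one child the labels of some leaf-added edge whose upper endpoint is an ancestor of $p(v)$. The base case is $\ell$ itself sending $e$ to its parent. The inductive step follows because any such edge also covers $\{w,p(w)\}$ (its upper endpoint is an ancestor of $p(v)$, hence of $p(w)$, and its lower endpoint is a descendant of $w$), so the local-coordination rule forwards the maximal covering edge upward. When the chain of messages reaches $v$, $v$'s LCA check on the received labels confirms coverage.

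When $\ell$ lies in a different fragment $F_\ell$ with root $v_\ell$, I would first use that fragments are connected subtrees of $T$: since $v$ is an ancestor of $\ell$ but not in $F_\ell$, $v_\ell$ is a strict descendant of $v$, and $F_\ell$ sits entirely inside $v$'s subtree. Applying the same-fragment case inside $F_\ell$ shows that $v_\ell$ ends the local phase holding some leaf-added edge covering the global edge $\{v_\ell,p(v_\ell)\}$, and therefore broadcasts via the BFS tree the maximal such edge $e^{*}$. The heart of the argument is then to show that $e^{*}$ itself covers $\{v,p(v)\}$: its lower endpoint is a leaf inside $F_\ell$ and hence a descendant of $v$; and maximality among edges covering $\{v_\ell,p(v_\ell)\}$, combined with $e$ as a concrete witness (note $u$ is an ancestor of $p(v_\ell)$, so $e$ also covers $\{v_\ell,p(v_\ell)\}$), forces $e^{*}$'s upper endpoint to be at least as high as $u$, hence an ancestor of $p(v)$. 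Therefore $v$ lies on $e^{*}$'s tree path and passes the LCA check on the broadcast labels.

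The main obstacle is exactly this last ``maximality across fragment boundaries'' step: $v_\ell$ selects $e^{*}$ optimally with respect to $F_\ell$'s own parent edge $\{v_\ell,p(v_\ell)\}$ rather than with respect to $\{v,p(v)\}$, so one must verify that this purely local criterion is still strong enough to preserve coverage all the way up at $v$. The comparison of upper endpoints driven by the existence of the witness $e$ is precisely what bridges this gap, and it is where the proof has to track the fragment structure most carefully.
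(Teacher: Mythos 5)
Your proof is correct and follows essentially the same two-case structure as the paper's: when the leaf shares $v$'s fragment, the local coordination chain propagates coverage information up to $v$; when it lies in a different fragment $F_\ell$, the key step is that the maximal edge $e^{*}$ which $v_\ell$ broadcasts must cover $\{v,p(v)\}$ because $e$ itself witnesses that some covering edge (with upper endpoint reaching at least $u$, an ancestor of $p(v)$) reaches $v_\ell$. You spell out the induction and the ``maximality across fragment boundaries'' comparison more explicitly than the paper does, but the underlying argument is the same.
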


\begin{proof}
Let $v$ be a vertex and assume there is an edge $e=\{u,w\}$ added by a leaf $u$ in $T$, which covers the tree edge $\{v,p(v)\}$. If $u$ is in the same fragment as $v$, in the local coordination $v$ learns about the maximal edge added by a leaf in the fragment that covers $\{v,p(v)\}$, and in particular it learns that there is an edge that covers $\{v,p(v)\}$, as needed. Assume now that $u,v$ are in different fragments, $F_u, F_v$, and there is no leaf in $F_v$ that adds an edge that covers $\{v,p(v)\}$. Let $r_u$ be the root of $F_u$, and let $e_u$ be the edge $r_u$ sends over the BFS tree. Note that $e_u$ covers $\{v,p(v)\}$ because the edge $e$ covers $\{r_u,p(r_u)\}$ and covers $\{v,p(v)\}$, and $e_u$ is the maximal edge that covers $\{r_u,p(r_u)\}$. So $v$ learns there is an edge added by a leaf that covers $\{v,p(v)\}$, as needed. 
\end{proof}

\begin{claim} \label{glob2}
After the global coordination, each vertex knows if a global edge $\{v,p(v)\}$ is covered by an edge added by a leaf in $T$. 
\end{claim}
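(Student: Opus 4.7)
The plan is to show that for every global edge $t=\{v,p(v)\}$ (so $v$ is the root of its fragment $F_v$), every vertex in $G$ can decide coverage of $t$ by leaf-added edges using only the $O(\sqrt{n})$ broadcast messages from the global coordination step plus local LCA computations on the broadcast labels. My starting observation is that the message broadcast by $v$ itself already encodes the maximal leaf-added edge inside $F_v$ that covers $t$, since this is exactly the quantity that the local coordination inside $F_v$ aggregates up to the fragment root. Hence the only case that requires work is when the covering leaf-added edge $e=\{u,w\}$ is contributed by a leaf $u$ in a different fragment $F_u$, with $w$ an ancestor of $u$ in $T$.

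For this case, I plan to reuse the structural observation behind Claim \ref{glob}: since $u\in F_u$ while $v$ is a strict ancestor of $r_u$ (the root of $F_u$), the unique $u$-to-$w$ path in $T$ must traverse both the global edge $\{r_u,p(r_u)\}$ and $t$; in particular $e$ covers $\{r_u,p(r_u)\}$ as well. Let $e_u$ be the maximal leaf-added edge inside $F_u$ covering $\{r_u,p(r_u)\}$; this is precisely the edge whose labels $r_u$ broadcasts during global coordination. By maximality, the ancestor endpoint of $e_u$ is at least as high in $T$ as $w$, so the tree path associated with $e_u$ passes through $t$ as well, and hence $e_u$ covers $t$. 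Since every vertex receives the labels of $e_u$ via the BFS tree, each vertex can run an LCA check on those labels to certify that $e_u$ covers $t$.

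Combining the two cases, after global coordination every vertex has in hand $O(\sqrt{n})$ broadcast label-tuples; it runs LCA queries against each of them and declares $t$ covered iff at least one certifies coverage. The argument above shows this decision is sound (since each broadcast edge is, by construction, a leaf-added edge covering some global edge) and complete (since any leaf-added edge covering $t$ is dominated by some broadcast edge that also covers $t$). I expect the only real obstacle to be the bookkeeping in the different-fragment case, namely ensuring that the ``maximality'' captured by local coordination in $F_u$ is strong enough that the single broadcast message from $r_u$ subsumes every leaf-added edge in $F_u$ that could have covered $t$; this is exactly the step handled by the Claim \ref{glob} argument, applied to the pair $(t, \{r_u,p(r_u)\})$, so the proof essentially reduces to invoking it.
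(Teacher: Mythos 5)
Your proof is correct and follows essentially the same approach as the paper's, which is also a one-line application of the same maximality argument used in Claim~\ref{glob}: the edge $e_u$ broadcast by $r_u$ covers $\{r_u,p(r_u)\}$, is maximal among leaf-added edges in $F_u$ doing so, and hence also covers the higher global edge $\{v,p(v)\}$. The only thing you leave implicit (and the paper states up front) is that every vertex already knows the labels of the endpoints of every global edge from the earlier $T_F$-distribution step in Section~\ref{sec:app_lca}; without that, a vertex would have the broadcast label tuple for $e_u$ but not the labels of $v,p(v)$ against which to run the LCA coverage test. Your explicit split into the same-fragment and different-fragment cases is a cosmetic difference; the paper folds both into one sentence by always naming the leaf's fragment $F_u$, which may or may not equal $F_v$.
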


\begin{proof}
Note that all the vertices know the labels of all the global edges. If a global edge $\{v,p(v)\}$ is covered by an edge $\{u,w\}$, where $u$ is a leaf and $u$ is in the fragment $F_u$, then the edge $e_u$ sent by the root $r_u$ of $F_u$ covers $\{v,p(v)\}$ as well. Since all vertices learn about the labels of $e_u$, by LCA computations they can learn that there is an edge added by a leaf that covers $\{v,p(v)\}$.
\end{proof}

\subsubsection{Covering Global Edges}

The goal now is to cover global edges that are still not covered by applying {\unAugTag} to $T_F$. Note that the maximal edge that covers a global edge must be a maximal edge incoming to a fragment: assume that $e=\{v_{F_1},v_{F_2} \}$ is the maximal edge that covers the global edge $e'$ in $T_F$ and that $e$ is incoming to $F_1$, then the maximal edge $e_M$ incoming to $F_1$ covers $e'$ as well. Since $e$ is the maximal edge that covers $e'$, it follows that $e=e_M$.
Therefore, in order to apply {\unAugTag} it is enough to know the maximal incoming edge to each fragment in $T_F$ (they are the only edges that may be added to the augmentation), and which global edges are already covered. Note that all the vertices know which global edges are already covered after the global coordination, according to Claim \ref{glob2}. 

In order to learn the maximal edge incoming to a fragment, each fragment computes this edge by scanning the fragment from its leaves to its root. A leaf sends to its parent the labels of the maximal edge incoming to it.
Each internal vertex $v$, excluding the root of the fragment, sends to its parent the labels of the maximal edge covering $\{v,p(v)\}$ among the edges it receives from its children and the maximal edge incoming to it (it can compute the maximal edge by LCA computations using the labels of the edges). At the end, the root $v$ of each fragment learns about the maximal edge incoming to the fragment that covers the global edge $\{v,p(v)\}$, if such exists. 

The root of each fragment (excluding $r$) distributes over the BFS tree the (local and global) labels of the maximal edge $e$ incoming to its fragment. Note that the global labels of $e$ indicate which fragments are connected by $e$. Since there are $O(\sqrt{n})$ fragments, we can distribute all this information over the BFS tree to all the vertices in $O(D+\sqrt{n})$ rounds. 

The computation at each fragment takes $O(\sqrt{n})$ rounds and the communication between fragments takes $O(D + \sqrt{n})$ rounds. So the overall time complexity of this part is bounded by $O(D + \sqrt{n})$ rounds.

After all the vertices learn the maximal edge incoming to each fragment and which global edges are already covered, each vertex can apply {\unAugTag} on $T_F$ locally, without any communication. When a vertex covers a global edge, it covers it by the maximal edge possible with respect to $T$. This is also a maximal edge with respect to $T_F$, but there may be several edges in $T_F$ that connect the same fragments, in which case we use the local labels in order to choose the maximal between them.
Note that after applying {\unAugTag}, each vertex knows which of the maximal edges incoming to a fragment is added to the augmentation and, in particular, a vertex $v$ knows if the maximal edge incoming to it is added to the augmentation and if there is an edge added to the augmentation that covers the tree edge $\{v,p(v)\}$. The edges added to the augmentation cover all the global edges and some of the local edges. 

We next cover the local edges that are still not covered.

\subsubsection{Covering Local Edges}

In this part, we cover local edges that are still not covered by applying {\unAugTag} locally in each fragment. The idea is to scan the fragment from its leaves to its root, and each time we get to a tree edge that is still not covered, we cover it by the maximal edge possible. 

Note that the maximal edge covering a tree edge $\{v,p(v)\}$ may be the maximal edge incoming to any vertex in the subtree rooted at $v$. In particular, it may be incoming to a vertex in another fragment $F$. However, in this case it must be the maximal edge incoming to $F$. Since each vertex knows the maximal edges incoming to each fragment, we can compute the maximal edge covering a tree edge without communication with other fragments. Note that we also know which edges are already covered by edges already added to the augmentation. Denote by $T_0$ all the tree edges that are covered by edges added to the augmentation in order to cover leaf edges or global edges. 

The distributed implementation of {\unAugTag} is very similar to {\unAug}. However, there are several differences:
\begin{itemize}
\item We cover only tree edges that are not in $T_0$. Note that each vertex $v$ knows if the edge $\{v,p(v)\}$ is in $T_0$.
\item In order to apply the algorithm we need to compute for each edge the maximal edge that covers it. A leaf $v$ of the fragment computes this edge among the edges incoming to it and the maximal edges incoming to a fragment. An internal vertex computes it as in {\unAug}, using the edges it receives from its children and the edges incoming to it.
\item At the end of the algorithm, as in {\unAug}, each vertex knows if the maximal edge it sent to its parent is added to the augmentation. In particular, each vertex in the fragment learns if the maximal edge incoming to it is added to the augmentation by another vertex in the fragment. However, we may decide to add to the augmentation edges incoming to other fragments. We explain next how to distribute this information between fragments.
\end{itemize}

The computation on each fragment takes $O(\sqrt{n})$ rounds. In order to end the algorithm, each vertex needs to know if the maximal edge incoming to it is added to the augmentation, which we achieve using global coordination between the fragments.

\paragraph*{Global coordination:$\ $} 
Note that when we apply {\unAugTag}, a vertex may decide to add to the augmentation one of the $O(\sqrt{n})$ maximal edges incoming to a fragment. A vertex that decides to add such an edge sends the labels of this edge over the BFS tree. Since there are at most $O(\sqrt{n})$ such edges, there are at most $O(\sqrt{n})$ different messages sent over the BFS tree, and we can distribute this information over the BFS tree to all the vertices in $O(D + \sqrt{n})$ rounds. So, at the end each vertex knows if the maximal edge incoming to it is added to the augmentation as needed. 

Note that we covered all the edges that were still not covered, so the solution obtained is an augmentation. 
The overall time complexity of the algorithm for finding an augmentation in $G'$ is $O(D + \sqrt{n})$ rounds.

We next show that it is a 2-approximation to the optimal augmentation in $G'$. As in {\unTAP}, after we have an augmentation in $G'$ we can convert it to an augmentation in $G$ that is at most twice the size, which implies that we get a 4-approximation to the optimal augmentation in $G$. 

\begin{lemma} \label{time4}
The time complexity of the whole algorithm is $O(D+\sqrt{n}\log^*{n})$ rounds.
\end{lemma}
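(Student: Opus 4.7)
The plan is to verify the claimed bound by summing the round complexities of all phases of the algorithm, as they have been described in Sections~\ref{frag}, \ref{sec:app_lca}, and \ref{sec:app_aug}. Since each phase has already been argued to run within either $O(D+\sqrt{n})$ or $O(\sqrt{n})$ rounds, with the sole exception of the fragment construction step, the only nontrivial bookkeeping is to collect these bounds and identify the dominating term.

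I would begin with the preprocessing: invoking the Kutten--Peleg MST algorithm to produce $O(\sqrt{n})$ fragments of diameter $O(\sqrt{n})$ costs $O(D+\sqrt{n}\log^*{n})$ rounds, and constructing the BFS tree rooted at $r$ costs $O(D)$ rounds. Next, building $G'$ is $O(D+\sqrt{n})$ rounds by Lemma~\ref{timeb} (local LCA labeling inside each fragment contributes $O(\sqrt{n})$, distributing the $O(\sqrt{n})$ global-edge descriptors over the BFS tree contributes $O(D+\sqrt{n})$, and the computation of global labels is purely local).

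For finding the augmentation in $G'$, I would itemize the three covering subphases. Covering leaf edges requires a local scan of each fragment ($O(\sqrt{n})$ rounds) and then a global broadcast of at most $O(\sqrt{n})$ messages over the BFS tree ($O(D+\sqrt{n})$ rounds). Covering global edges requires computing the maximal edge incoming to each fragment by a local scan ($O(\sqrt{n})$ rounds) followed by broadcasting $O(\sqrt{n})$ such edges on the BFS tree ($O(D+\sqrt{n})$ rounds); once each vertex has this information, the execution of {\unAugTag} on $T_F$ is performed locally by every vertex with no communication. Covering local edges runs {\unAugTag} within each fragment in parallel ($O(\sqrt{n})$ rounds) and then distributes $O(\sqrt{n})$ messages on the BFS tree recording which maximal inter-fragment edges were actually added ($O(D+\sqrt{n})$ rounds). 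Finally, replacing each $G'$-edge of the augmentation by a corresponding $G$-edge is a single round since every vertex already knows the $G$-edges incident to it and the labels that identify the $G'$-edges.

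Summing, we obtain
\[
O(D+\sqrt{n}\log^*{n}) + O(D+\sqrt{n}) + O(D+\sqrt{n}) = O(D+\sqrt{n}\log^*{n}),
\]
and no step is the main obstacle: the $\log^*{n}$ factor enters solely from the fragment-decomposition/MST subroutine, while every other subphase is $O(D+\sqrt{n})$ by the fragment diameter bound, the bound of $O(\sqrt{n})$ on the number of fragments, and the standard upcast/broadcast cost of $O(D+k)$ for transmitting $k$ messages over a BFS tree. This establishes the lemma.
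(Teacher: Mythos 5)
Your proposal is correct and follows essentially the same approach as the paper's proof: sum the round complexities of the fragment decomposition ($O(D+\sqrt{n}\log^*n)$ via Kutten--Peleg), the construction of $G'$ ($O(D+\sqrt{n})$ by Lemma~\ref{timeb}), and the augmentation-finding phase ($O(D+\sqrt{n})$), with the $\log^* n$ factor coming only from the MST subroutine. Your version simply spells out the subphases of the augmentation-finding step that the paper defers to earlier discussion.
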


\begin{proof}
Breaking the tree $T$ into fragments takes $O(D+\sqrt{n}\log^*{n})$ rounds, using the MST algorithm of Kutten and Peleg \cite{kutten1995fast}. Building $G'$ from $G$ takes $O(D+\sqrt{n})$ rounds by Lemma \ref{timeb}, Finding an augmentation in $G'$ takes $O(D+\sqrt{n})$ rounds, as discussed throughout.
\end{proof}

\subsection{Approximation Ratio} \label{sec:app_approx}

\subsubsection*{Intuition for the analysis} 

We next show that the size of our solution is at most twice the size of an optimal augmentation in $G'$. Denote by $A$ the solution obtained by the algorithm and by $A^*$ an optimal augmentation in $G'$. 
In the correctness proof of {\unTAP} we show a one-to-one mapping from $A$ to $A^*$, but this mapping is no longer one to one here. However, if we could show that each edge in $A^*$ is mapped to by at most two edges from $A$, we can obtain a 2-approximation. Unfortunately, this does not hold either. 

Our approach is to divide the edges in $A$ to two types $A_1,A_2$ as follows. We map each edge $e \in A$ to a corresponding path $P_e$ in $T$. If $P_e$ contains an internal vertex with more than one child in the tree we say that $e \in A_1$, otherwise $e \in A_2$. Then, we show that $|A_1| \leq 2|A^*|$, and $|A_2| \leq 2|A^*|$. The main idea is that the number of edges in $A_1$ is related to the degrees of internal vertices in $T$, which affects the number of leaves in the tree. We use this in order to show that $|A_1| \leq 2\ell$ where $\ell$ is the number of leaves in $T$. Note that $\ell$ is a lower bound on the size of any augmentation in $G'$, since we need to add to the augmentation a different edge in order to cover each one of the leaves. This gives $|A_1| \leq 2|A^*|$.

In order to show that $|A_2| \leq 2|A^*|$, we use the fact that the edges of $A_2$ correspond to tree paths with a simple structure. This allows us to show a mapping between $A_2$ to $A^*$ in which each edge in $A^*$ is mapped to by at most two edges from $A_2$, giving $|A_2| \leq 2|A^*|$. 

In conclusion, $|A| = |A_1 \cup A_2| \leq 4|A^*|$. A more delicate analysis extending these ideas gives $|A| \leq 2|A^*|$. This gives a 2-approximation to the optimal augmentation of $G'$, which results in a 4-approximation to the optimal augmentation in $G$. 



\subsubsection*{Approximation ratio analysis}

Each edge $e \in A$ is added to $A$ in the algorithm in order to cover some tree edge that is still not covered, denote this edge by $t(e)$. Let $t(A)$ be all such tree edges. For each edge $t(e) \in t(A)$, we go up in the tree until we get to another tree edge $t' \in t(A)$, or to the root in case there is no such edge. If $t'$ exists, we denote it by $t_2(e)$.

\begin{claim} \label{obs1}
Let $e \in A$, such that $t(e)$ is a global edge and $t_2(e)$ exists. Then there is no edge that covers both $t(e)$ and $t_2(e)$. 
\end{claim}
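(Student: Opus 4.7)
The plan is to argue by contradiction: assume that some edge $e^* \in G'$ covers both $t(e)$ and $t_2(e)$, and show that this forces $e$ itself to cover $t_2(e)$. Combined with the observation that $v_2$, the deeper endpoint of $t_2(e)$, is a strict ancestor of $v_F$ (the deeper endpoint of $t(e)$) in $T$ and therefore is not a leaf of $T$, this will rule out the existence of any $e' \in A$ with $t(e') = t_2(e)$, contradicting $t_2(e) \in t(A)$.

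First I would split into cases based on where $e$ was added. Since $t(e)$ is global, $e$ must have been added either while covering leaf edges or during {\unAugTag} on $T_F$ when covering global edges. In the leaf-edge case, $v_F$ is a leaf of $T$, so the lower endpoint of every candidate edge (including $e^*$) must coincide with $v_F$. Thus $e^*$ is incoming to $v_F$ and $e$, being the maximal such edge, reaches at least as high as $e^*$; since $e$'s lower endpoint $v_F$ is a descendant of $v_2$, this gives that $e$ covers $t_2(e)$.

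For the more interesting case when $e = opt_F$ was chosen during {\unAugTag} on $T_F$ at the fragment $F$ rooted at $v_F$, let $F_w$ be the fragment containing the lower endpoint of $e^*$; necessarily $F_w$ lies in $F$'s subtree in $T_F$. Let $e^{\max}_{F_w}$ be the maximal edge incoming to $F_w$, which is at least as high as $e^*$ since $e^*$ itself is a candidate for this maximum. The main technical step, which I expect to be the hardest part, is to show by induction along the path from $F_w$ up to $F$ in $T_F$ that at $F$ either $nec_F$ or $opt_F$ has upper endpoint at least as high as that of $e^{\max}_{F_w}$. The induction needs to handle the two distinct ways that an edge can propagate upward in the scan: it may remain an optional edge, or it may be ``consumed'' as a necessary edge by being added to the augmentation at an intermediate fragment, in which case it continues upward as a necessary edge. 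Once this sub-claim is in hand, if $nec_F$ dominates then $nec_F$ already covers $t(e)$, contradicting the fact that $e$ was added at $F$ (which required $nec_F$ not to cover $t(e)$); otherwise $opt_F = e$ reaches at least as high as $e^*$, and since $e$'s lower endpoint is a descendant of $v_F$ and hence of $v_2$, we conclude that $e$ covers $t_2(e)$.

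Finally, I would close the contradiction by ruling out every possible origin of an edge $e' \in A$ with $t(e') = t_2(e)$. No such $e'$ can come from the leaf-edge sub-step because $v_2$ is not a leaf. If $t_2(e)$ is global, the only remaining candidate is the {\unAugTag} application on $T_F$ at the fragment $F_2$ above $F$; but $F$ is scanned before $F_2$, and the edge $e$ (which we have shown covers $t_2(e)$) is propagated upward as a necessary edge, so by the time $F_2$ is processed, $nec_{F_2}$ already covers $t_2(e)$ and no edge is added there. If $t_2(e)$ is local, it is placed in $T_0$ before the final sub-step that covers local edges (since $e$ covers it), and so the local-edges pass does not add anything for $t_2(e)$ either. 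Hence no such $e'$ exists, contradicting $t_2(e) \in t(A)$.
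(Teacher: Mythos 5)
Your proposal is correct and follows essentially the same route as the paper's one-paragraph proof: argue by contradiction, observe that since $t(e)$ is global it is reached before $t_2(e)$ and is therefore covered by the maximal edge possible, so the existence of a common covering edge $e^*$ would force $e$ to cover $t_2(e)$ as well, contradicting $t_2(e)\in t(A)$. The difference is only one of granularity: the paper simply invokes the fact (already established in the description of {\unAugTag}) that a global edge in $t(A)$ is covered by the maximal edge over all of $G'$, whereas you re-derive this by unrolling the leaf-phase/\unAugTag-on-$T_F$ case split and sketching the $nec/opt$ propagation induction, which is exactly the correctness argument of {\unAug} that the paper already proved in Lemma~\ref{opt} and reuses here.
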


\begin{proof}
Note that $t(e),t_2(e) \in t(A)$, i.e., when we get to them in the algorithm they are still not covered. Also, since $t(e)$ is a global edge and $t_2(e)$ is on the path between $t(e)$ to $r$, we get to $t(e)$ before $t_2(e)$ in the algorithm. When we get to $t(e)$ in the algorithm we cover it by the maximal edge possible. This edge does not cover $t_2(e)$, otherwise $t_2(e) \not \in t(A)$.
\end{proof}

Let $V_{>1}$ be the set of vertices with more than one child in $T$. We write $A=A_1 \cup A_2$ in the following way: let $e \in A$, $t(e)=\{v,p(v)\}$, and $t_2(e)=\{u,p(u)\}$ if it exists. Let $P(e)=\{p(v)=v_1,...,v_k\}$ be the vertices on the tree path between $p(v)$ and $v_k=u$ if $t_2(e)$ exists, or between $p(v)$ and $v_k=r$ otherwise. If there is a vertex $v' \in P(e)$ such that $v' \in V_{>1}$, we say that $e \in A_1$, and otherwise $e \in A_2$. 

\begin{claim} \label{obs2}
There is at most one edge $e \in A_2$ such that $t_2(e)$ does not exist.
\end{claim}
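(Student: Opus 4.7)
The plan is to assume for contradiction that there exist two distinct edges $e_1, e_2 \in A_2$ for which both $t_2(e_1)$ and $t_2(e_2)$ are undefined, and to derive a contradiction from the interplay between the tree positions of $t(e_i) = \{v_i, p(v_i)\}$ and the defining condition of $A_2$. The first observation I would record is that any two distinct edges of $A$ have distinct targets in $t(A)$: when an edge $e \in A$ is added, $t(e)$ is by definition still uncovered, and once $e$ is added $t(e)$ becomes covered, so no subsequent edge is added to cover it. In particular $t(e_1) \neq t(e_2)$ and hence $v_1 \neq v_2$.

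Next, I would show that $v_1$ and $v_2$ must be incomparable in $T$. If, say, $v_1$ were a proper ancestor of $v_2$, then the tree edge $t(e_1) = \{v_1, p(v_1)\}$ lies strictly above $t(e_2)$ on the upward path from $t(e_2)$ to $r$; this edge belongs to $t(A)$, so by the definition of $t_2$, the value $t_2(e_2)$ would be defined, contradicting our assumption. Symmetrically, $v_2$ cannot be a proper ancestor of $v_1$. Combined with $v_1 \neq v_2$, this forces $v_1$ and $v_2$ to be incomparable in $T$.

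The final step is to extract a vertex of $V_{>1}$ lying on $P(e_1)$. Let $w = \mathrm{LCA}(v_1, v_2)$ in $T$. Since $v_1, v_2$ are incomparable, $w$ is a strict ancestor of each of them, and $v_1, v_2$ lie in two distinct subtrees rooted at children of $w$; in particular $w$ has at least two children, so $w \in V_{>1}$. Because $w$ is an ancestor of $v_1$, it is either equal to $p(v_1)$ or lies strictly above $p(v_1)$, and in either case $w$ lies on the upward tree path from $p(v_1)$ to $r$. But since $t_2(e_1)$ is undefined, $P(e_1)$ is precisely this entire path; hence $w \in P(e_1) \cap V_{>1}$, contradicting $e_1 \in A_2$.

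The main (mild) obstacle is keeping the bookkeeping clean across the boundary subcase $w = p(v_1)$, together with the subcase $p(v_1) = p(v_2)$, where one must still verify that $w$ has at least two children — and this is automatic, since $v_1 \neq v_2$ are then two distinct children of $w$. With those cases absorbed into the LCA argument above, the contradiction follows, establishing that at most one edge $e \in A_2$ can fail to have $t_2(e)$ defined.
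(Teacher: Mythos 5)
Your proof is correct and follows essentially the same argument as the paper: both derive a contradiction by observing that $v_1,v_2$ are either comparable (in which case $t_2$ would be defined for the lower edge) or incomparable (in which case their meeting point lies in $V_{>1}$ on one of the paths $P(e_i)$). The paper states the second alternative tersely as ``one of $P_1,P_2$ must be contained in the other''; your explicit LCA computation of the divergence vertex $w$ supplies exactly the detail the paper leaves implicit.
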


\begin{proof}
Assume there are two edges $e_1,e_2 \in A_2$ such that $t_2(e_1),t_2(e_2)$ does not exist. Then, on the path $P_1$ between $t(e_1)$ to $r$ and on the path $P_2$ between $t(e_2)$ and $r$ there is no vertex in $V_{>1}$. It can only happen if one of $P_1,P_2$ is contained in the other. Assume without loss of generality that $P_1$ contains $P_2$. But then on the path $P_1$ there is another edge in $t(A)$, so $t_2(e_1)$ exists.
\end{proof}

\begin{claim} \label{path}
Let $t=\{v,p(v)\} \in t(A)$ and $e \in A_2$ such that $t_2(e)=\{u,p(u)\}$ where $u$ is an ancestor of $p(v)$. Then $t(e)$ is on the tree path $P'=\{v,p(v),...,u\}$ between $t$ to $t_2(e)$. 
\end{claim}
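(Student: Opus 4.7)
\textbf{Proof plan for Claim \ref{path}.} Write $t(e)=\{x,p(x)\}$, so that $P(e)=\{p(x),\ldots,u\}$ is the tree path from $p(x)$ up to $u$. Since $t_2(e)$ exists and is strictly above $t(e)$ on the path to the root, $x$ is a strict descendant of $u$. By hypothesis $u$ is an ancestor of $p(v)$, so $v$ is also a strict descendant of $u$. Thus both $x$ and $v$ live in the subtree rooted at $u$, and the claim reduces to showing that $x$ lies on the ancestor path from $v$ up to $u$.

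The plan is a case analysis on the relationship between $v$ and $x$ in $T$. If $v=x$ then $t(e)=t$, which is an endpoint of $P'$ and we are done. If $x$ is a strict ancestor of $v$, then since $x$ sits strictly between $v$ and $u$, the edge $t(e)=\{x,p(x)\}$ lies on $P'$ by definition, and again we are done. The remaining cases are the ones that must be ruled out.

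Suppose $v$ is a strict ancestor of $x$. Then $v\in\{p(x),\ldots,u\}=P(e)$; more precisely, $v$ is one of the vertices of $P(e)$, but $v\neq u$ (because $u$ is a strict ancestor of $v$ by the hypothesis that $u$ is an ancestor of $p(v)$). Therefore $t=\{v,p(v)\}$ is a tree edge lying on the path from $t(e)$ up to $t_2(e)$, with $t\neq t_2(e)$. But $t\in t(A)$, which contradicts the defining property of $t_2(e)$ as the first edge of $t(A)$ encountered when walking from $t(e)$ up towards the root. Finally, suppose $v$ and $x$ are incomparable in $T$. Let $w$ be their LCA. Since both $v$ and $x$ are descendants of $u$, so is $w$; and since $w$ has two distinct children on the paths down to $v$ and to $x$, we have $w\in V_{>1}$. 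Moreover $w$ is an ancestor of $x$ and a descendant of $u$, so $w\in P(e)$. This contradicts the assumption $e\in A_2$, which forbids any vertex of $P(e)$ from being in $V_{>1}$.

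The main subtlety, and the step to be careful about, is the third case: making sure that when $v$ is a strict ancestor of $x$, the vertex $v$ really does lie on $P(e)$ (and in particular that $v\neq u$, so that $t\neq t_2(e)$), so that the contradiction with the minimality of $t_2(e)$ is clean. Once that boundary is handled, the other cases are straightforward consequences of the definitions of $A_2$, $P(e)$, and $t_2(e)$.
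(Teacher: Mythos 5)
Your proof is correct and follows essentially the same argument as the paper's, just organized more explicitly. The paper phrases its case analysis in terms of how the two upward paths $P(e)$ (from $t(e)$ to $u$) and $P'$ (from $t$ to $u$) relate — either they diverge at some vertex, which must then lie in $V_{>1}\cap P(e)$, or $P'\subseteq P(e)$, which puts $t\in t(A)$ strictly between $t(e)$ and $t_2(e)$ — whereas you organize the same content by the four possible relations between $v$ and $x=$ the lower endpoint of $t(e)$ (equal, $x$ above $v$, $v$ above $x$, incomparable), explicitly naming the LCA $w$ in the incomparable case as the divergence point. The two contradictions you derive (a $V_{>1}$ vertex on $P(e)$; a $t(A)$ edge between $t(e)$ and $t_2(e)$) are exactly the paper's; your version just makes the bookkeeping around the boundary cases ($v\neq u$, $p(x)\in P'$, $w\neq x$) more visible, which is a reasonable expository choice but not a different proof.
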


\begin{proof}
If $t=t(e)$ we are done. Note that since $e \in A_2$, on the tree path $P(e)$ between $t(e)$ to $t_2(e)$ there are no vertices in $V_{>1}$ and no other edges in $t(A)$. If $t(e)$ is not on the path $P'$ between $t$ to $t_2(e)$, it follows that there is a vertex $v' \in P(e)$ such that $v' \in V_{>1}$, at the point where $P(e)$ and $P'$ diverge, or $t \in t(A)$ is in $P(e)$. Either case gives a contradiction.
\end{proof}

Let $A_1^*$ be the edges in $A^*$ that cover leaf edges. Let $\ell$ be the number of leaves in $T$. 

\begin{claim}
$|A_1^*| = \ell$.
\end{claim}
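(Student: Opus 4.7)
The plan is to establish $|A_1^*| = \ell$ by exhibiting a bijection between $A_1^*$ and the leaves of $T$. The key structural fact is that every non-tree edge $\widetilde{e} \in G'$ has the form $\{u,w\}$ where $u$ is a strict ancestor of $w$ in $T$, and $\widetilde{e}$ covers exactly the tree edges on the unique $w$-to-$u$ path. Since a leaf edge $\{v,p(v)\}$ has its leaf $v$ as the lower endpoint, $\widetilde{e}$ can cover $\{v,p(v)\}$ only when $w=v$; consequently, any single non-tree edge of $G'$ covers at most one leaf edge.

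For the lower bound $|A_1^*| \geq \ell$, I would observe that since $A^*$ is an augmentation of $G'$, each of the $\ell$ leaf edges must be covered by some edge of $A^*$. By the structural fact just stated, no single edge in $A^*$ can cover two distinct leaf edges, so distinct leaves force distinct elements of $A_1^*$, giving $|A_1^*| \geq \ell$.

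For the upper bound $|A_1^*| \leq \ell$, I would invoke the optimality of $A^*$. Suppose two distinct edges $e_1,e_2 \in A^*$ both cover the same leaf edge $\{v,p(v)\}$. By the structural fact, both must take the form $\{u_i, v\}$ for some ancestors $u_1, u_2$ of $p(v)$. Without loss of generality $u_1$ is an ancestor of $u_2$, so the $v$-to-$u_2$ path in $T$ is contained in the $v$-to-$u_1$ path, and hence every tree edge covered by $e_2$ is also covered by $e_1$. Therefore $A^* \setminus \{e_2\}$ is still an augmentation of $G'$ but strictly smaller than $A^*$, contradicting the optimality of $A^*$. Hence at most one edge of $A_1^*$ covers each leaf edge, and combined with the lower bound this yields $|A_1^*| = \ell$.

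The only subtlety is the tacit use of the fact that an optimal augmentation cannot contain inclusion-wise redundant edges, which is immediate from optimality; I do not anticipate any real obstacle beyond setting up the structural observation about $G'$ carefully.
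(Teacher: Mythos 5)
Your proof is correct and follows essentially the same approach as the paper: both rely on the structural fact that non-tree edges in $G'$ run between an ancestor and a descendant (so each covers at most one leaf edge, and distinct leaf edges need distinct covering edges), together with the observation that optimality of $A^*$ forces each leaf edge to be covered by exactly one edge of $A^*$. Your write-up just spells out the resulting bijection between $A_1^*$ and the leaves a bit more explicitly than the paper does.
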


\begin{proof}
Each leaf edge is covered by a different edge in $A^*$ since all the edges in $G'$ are between an ancestor and its descendant in the tree. Also, each leaf edge is covered by exactly one edge in $A^*$, because if there are two edges $e_1,e_2$ that cover the same leaf edge, and assume without loss of generality that $e_1$ is the maximal between them, then it covers all the edges covered by $e_2$, which contradicts the optimality of $A^*$.
\end{proof}

We divide the leaves to two types in the following way: we map each leaf $v$ to the edge $e_v \in A_1^*$ that covers the corresponding leaf edge. For each edge $e_v$ in $A_1^*$ we look at the corresponding path of tree edges that it covers. If one of the vertices in this path is in $V_{>1}$ we say that $v \in L_1$, otherwise $v \in L_2$. Let $\ell_1=|L_1|$ and $\ell_2=|L_2|$, giving $\ell=\ell_1+\ell_2$.

\begin{claim}
If there is an edge in $A_1^*$ of the form $\{v,r\}$ that covers a leaf $v \in L_2$ then the solution given by the algorithm is optimal.
\end{claim}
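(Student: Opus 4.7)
The plan is to show that the hypothesis forces the spanning tree $T$ to be a single path from $r$ to $v$, from which optimality of the algorithm is immediate. First I would unpack the hypothesis: if $e_v=\{v,r\}\in A_1^*$ covers the leaf edge of $v\in L_2$, then by the definition of $L_2$, no vertex on the tree path covered by $e_v$ lies in $V_{>1}$. That path traverses the vertices $v,p(v),p(p(v)),\ldots,r$, so each of these has at most one child in $T$.

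Next I would use this to conclude that $T$ is precisely the path $r,\ldots,p(v),v$. Walking down from $r$, every vertex on the $v$-to-$r$ path has at most one child, and by the path structure it has at least one child (the next vertex on the path) until we reach $v$. Since $T$ is a spanning tree rooted at $r$, this forces $T$ to consist of exactly these vertices and these edges, with $v$ the unique leaf.

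Now I would read off the optimum. Any augmentation must cover every tree edge, so $|A^*|\geq 1$. On the other hand, the single edge $\{v,r\}$ already covers the entire tree path, so $|A^*|=1$. For the algorithm, the first step has every leaf cover its leaf edge by the maximal incoming edge in $G'$; since $v$ is the only leaf, and $\{v,r\}$ is an edge of $G'$ incoming to $v$ from the root itself, this is the maximal edge, so the algorithm adds $\{v,r\}$. Because $\{v,r\}$ covers every tree edge, no further tree edge remains uncovered, so neither the global-edge nor the local-edge phase adds anything more. Hence $|A|=1=|A^*|$, and the algorithm is optimal. The only subtlety, which I would flag carefully, is that "vertex on the path" must be interpreted to include the endpoints $v$ and $r$ (so that $r\notin V_{>1}$ is genuinely forced); this is the reading that matches Claim~\ref{path} and the mapping used to define $L_1,L_2$.
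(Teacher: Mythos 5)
Your proof is correct and follows essentially the same approach as the paper: unpack the definition of $L_2$ to conclude that $T$ is a single path from $r$ to $v$, observe $|A^*|=1$, and note that the algorithm's leaf-covering phase already adds an edge reaching $r$, so it adds exactly one edge. You have filled in the details that the paper's one-sentence proof leaves implicit, and your flagged subtlety about whether the endpoints of the covered path are counted among "vertices on the path" (so that $r\notin V_{>1}$ is forced) is the right point to check and is indeed how the paper's mapping for $L_1,L_2$ is defined.
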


\begin{proof}
Note that if there is an edge in $A_1^*$ of the form $\{v,r\}$ that covers a leaf $v \in L_2$ it follows that $T$ is just the path between $v$ to $r$ and there is one edge that covers it. In such a case, our algorithm is optimal because it starts by adding the maximal edges that cover leaves, and hence it adds this edge and no other edge.
\end{proof}

We next assume that there are no edges in $A_1^*$ of the form $\{v,r\}$ that cover a leaf $v \in L_2$.
According to our assumption, each edge in $A_1^*$ that covers a leaf edge $e_v$ such that $v \in L_2$ is of the form $\{v,u\}$ where $u \neq r$. There are exactly $\ell_2$ tree edges of the form $\{u,p(u)\}$ for all such veritces $u$, denote them by $E_2$. Let $A_2^*$ be all the edges in $A^*$ that cover edges in $E_2$.

\begin{claim}
$A_1^* \cap A_2^* = \emptyset$.
\end{claim}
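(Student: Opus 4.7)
The plan is to argue by contradiction: I will suppose some edge $e$ lies in $A_1^* \cap A_2^*$ and derive a contradiction by combining the structural information implied by both memberships.

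First I would unpack the two memberships. Since $e \in A_1^*$, it covers some leaf edge $\{v,p(v)\}$. Because every non-tree edge of $G'$ joins an ancestor-descendant pair, the lower endpoint of $e$ must be $v$ itself, so $e = \{v,u\}$ for some proper ancestor $u$; moreover, by the earlier uniqueness claim that each leaf edge is covered by exactly one edge of $A^*$, this $e$ must coincide with $e_v$. Since $e \in A_2^*$, it also covers some $\{u',p(u')\} \in E_2$, and by definition of $E_2$ there is a leaf $v' \in L_2$ with $e_{v'} = \{v',u'\} \in A_1^*$ and $u' \neq r$ (the standing assumption having already excluded the degenerate case).

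The heart of my argument will be a structural observation about the subtree rooted at $u'$. Because $v' \in L_2$, no vertex on the tree path from $v'$ up to $u'$ lies in $V_{>1}$, so each such vertex has at most one child. Since each already has one child on the path toward $v'$, it must have exactly one child overall, with no siblings branching off. Walking down from $u'$ then shows that the entire subtree $T_{u'}$ collapses to the single unbranching path from $u'$ to $v'$; in particular, $v'$ is the unique leaf in $T_{u'}$.

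To close the argument I would observe that $e = \{v,u\}$ covers $\{u',p(u')\}$, which forces $u'$ to lie on the ancestry path from $v$ to $u$, so $v$ is a leaf descendant of $u'$; by the uniqueness above, $v = v'$. Then $e_v = e_{v'}$, forcing $u = u'$, and therefore $\{u',p(u')\} = \{u,p(u)\}$. But this tree edge lies strictly above $u$ and is not on the path from $v$ to $u$, so $e$ cannot cover it---the desired contradiction. I expect the main obstacle to be noticing the structural collapse in the third paragraph: the $L_2$ hypothesis constrains the path from $v'$ to $u'$, but the real leverage comes from realizing that this simultaneously forces the \emph{whole} subtree at $u'$ to be unbranching, which is what lets us pin down $v=v'$; once that is seen, the rest is bookkeeping.
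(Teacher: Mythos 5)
Your proof is correct and takes essentially the same route as the paper's: both rest on the observation that for a leaf $v' \in L_2$ with $e_{v'} = \{v',u'\}$, the subtree rooted at $u'$ collapses to an unbranching path with $v'$ as its unique leaf, which forces any edge of $A_1^*$ covering something in that subtree to equal $e_{v'}$ and hence miss $\{u',p(u')\}$. You phrase it as a contradiction starting from a hypothetical $e \in A_1^* \cap A_2^*$ while the paper argues directly that no $A_1^*$ edge covers an $E_2$ edge, but the structural ingredients and the pivotal uniqueness step are identical.
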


\begin{proof}
Let $e = \{u,p(u)\} \in E_2$, so there is a leaf $v \in L_2$ such that $\{v,u\} \in A_1^*$.
Note that $e$ is not covered by edges from $A_1^*$ because by the definition of $L_2$, the subtree rooted at $u$ is the path between $v$ to $u$, and the only edge from $A_1^*$ that covers edges on this path is $\{v,u\}$, which does not cover $\{u,p(u)\}$.
\end{proof}

\begin{claim} \label{A2_size}
$|A_2^*| \geq \ell_2$.
\end{claim}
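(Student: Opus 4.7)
The plan is to exhibit an injection $\phi : L_2 \to A_2^*$, which immediately gives $|A_2^*| \geq |L_2| = \ell_2$. For each $v \in L_2$ with associated edge $e_v = \{v, u_v\} \in A_1^*$, I would first use the definition of $L_2$ to observe that no vertex on the tree path between $v$ and $u_v$ belongs to $V_{>1}$. Consequently every such vertex has exactly one child, so the subtree $T(u_v)$ rooted at $u_v$ is precisely the path $v, p(v), \ldots, u_v$, with $v$ as its unique leaf.

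Next, since $A^*$ is an augmentation in $G'$, there exists some edge $f_v \in A^*$ covering the tree edge $t_v = \{u_v, p(u_v)\} \in E_2$. Every edge of $G'$ joins an ancestor to a descendant, so the descendant endpoint of $f_v$ lies in $T(u_v)$; in particular $f_v$ covers an edge of $E_2$, so $f_v \in A_2^*$. Choose one such $f_v$ for each $v \in L_2$ and set $\phi(v) = f_v$.

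The main step, and the only real content, is to show $\phi$ is injective. Suppose for contradiction that $\phi(v_1) = \phi(v_2) = f$ for distinct $v_1, v_2 \in L_2$. Then the descendant endpoint of $f$ lies simultaneously in $T(u_{v_1})$ and in $T(u_{v_2})$, so these two subtrees are not disjoint, and one contains the other. Assume without loss of generality that $T(u_{v_1}) \subseteq T(u_{v_2})$. Then $u_{v_1}$ lies on the path $v_2, p(v_2), \ldots, u_{v_2}$, and the subtree $T(u_{v_1})$ is the subpath $v_2, \ldots, u_{v_1}$, whose unique leaf is $v_2$. But by the same argument $T(u_{v_1})$ has $v_1$ as its unique leaf, forcing $v_1 = v_2$, a contradiction. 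The main obstacle is just this subtree disjointness step; once it is in place the injection yields $|A_2^*| \geq \ell_2$.
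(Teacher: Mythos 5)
Your proof is correct and follows essentially the same approach as the paper's: both exploit the fact that for $v \in L_2$ the subtree rooted at $u_v$ is a bare path with unique leaf $v$, so these subtrees are pairwise disjoint, and since each covering edge in $A_2^*$ has its descendant endpoint in exactly one such subtree, distinct edges of $E_2$ require distinct covering edges. Your version packages the argument a bit more formally as an explicit injection $L_2 \to A_2^*$, but the underlying disjointness observation is the same.
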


\begin{proof}
There are exactly $\ell_2$ edges in $E_2$. We show that each of them is covered by a different edge from $A_2^*$.
Note that if $\{u,p(u)\} \in E_2$ then the subtree rooted at $u$ is a path, in which all edges are covered by an edge from $A_1^*$. In particular, in this path there are no other tree edges from $E_2$. It follows that edges in $E_2$ cannot be on the same path between a leaf and $r$ in the tree, and cannot be covered by the same edge because all the edges in $G'$ are between an ancestor to its descendant in the tree. The claim follows.
\end{proof}

Let $A_3^* = A^* \setminus (A_1^* \cup A_2^*)$.
In order to show that $|A| \leq 2|A^*|$, we prove the following two lemmas:

\begin{lemma} \label{lem1}
$|A_1| \leq 2|A_1^*| -2$.
\end{lemma}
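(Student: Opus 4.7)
The plan is to leverage the branching structure of $T$. Recall that $|A_1^*| = \ell$, where $\ell$ is the number of leaves of $T$, so the target inequality $|A_1| \leq 2|A_1^*| - 2$ reads $|A_1| \leq 2\ell - 2$. This matches the classical tree identity $\sum_{v \in V_{>1}} (c(v)-1) = \ell - 1$ (where $c(v)$ is the number of children of $v$), which rearranges to $\sum_{v \in V_{>1}} c(v) = (\ell-1) + |V_{>1}| \leq 2(\ell - 1)$, using $|V_{>1}| \leq \ell - 1$. Consequently, it suffices to exhibit an injective map from $A_1$ into the set of tree edges $\{\{c, v'\} : v' \in V_{>1},\ c \text{ is a child of } v' \text{ in } T\}$, whose cardinality is exactly $\sum_{v' \in V_{>1}} c(v')$.

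\textbf{Construction of the map.} For each $e \in A_1$ with $t(e) = \{v, p(v)\}$, let $v'(e)$ be the lowest vertex in $P(e) \cap V_{>1}$, which is nonempty by definition of $A_1$. Define $c_e$ to be the child of $v'(e)$ lying on the upward chain from $p(v)$ to $v'(e)$ if $p(v) \neq v'(e)$, and set $c_e := v$ if $p(v) = v'(e)$. In either case $c_e$ is a child of $v'(e)$, so the tree edge $\phi(e) := \{c_e, v'(e)\}$ is well defined. By minimality of $v'(e)$, every vertex on the chain strictly between $p(v)$ (inclusive) and $v'(e)$ (exclusive) is outside $V_{>1}$, so each such vertex has exactly one child in $T$.

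\textbf{Injectivity, the main step.} Suppose $e_1, e_2 \in A_1$ have $\phi(e_1) = \phi(e_2) = \{c, v'\}$. If both fall under the case $p(v_i) = v'$, then both $t(e_i) = \{c, v'\}$, contradicting the injectivity of $e \mapsto t(e)$ on $A$. Otherwise, at least one of them, say $e_2$, satisfies $p(v_2) \neq v'$, so the chain from $p(v_2)$ up to $c$ is a $V_{>1}$-free chain in $T$. In the mixed case, $t(e_1) = \{c, v'\}$ lies on the upward walk from $t(e_2)$; in the case where both $p(v_i) \neq v'$, I argue that the two $V_{>1}$-free chains into $c$ cannot branch (else their meeting point would be in $V_{>1}$), so WLOG $p(v_1)$ is an ancestor of $p(v_2)$ and $t(e_1) = \{v_1, p(v_1)\}$ again lies strictly between $t(e_2)$ and $v'$ on the upward walk. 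In either subcase, since $t_2(e_2)$ is by definition the first $t(\cdot)$-edge met when walking up from $t(e_2)$, this forces $t_2(e_2)$ to sit at or below $t(e_1)$ (or below $\{c,v'\}$), so $P(e_2)$ terminates strictly below $v'$, contradicting $v'(e_2) = v'$.

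The main obstacle is this last step: carefully verifying that in every geometry the extra $t(\cdot)$-edge on the upward chain actually truncates $P(e_2)$ before it reaches $v'$. The key facts that drive this are (i) $t_2(e)$ is the \emph{first} element of $t(A)$ met going up, and (ii) the $V_{>1}$-freeness of the chains from $p(v_i)$ to $c$, which forces any two such chains into a single linear configuration. Once injectivity is in hand, the counting is immediate:
\[ |A_1| \;\leq\; |\phi(A_1)| \;\leq\; \sum_{v' \in V_{>1}} c(v') \;=\; (\ell - 1) + |V_{>1}| \;\leq\; 2(\ell - 1) \;=\; 2|A_1^*| - 2, \]
completing the proof.
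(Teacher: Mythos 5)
Your proposal is correct and follows essentially the same approach as the paper. The paper maps each $e \in A_1$ to the vertex $v'(e) \in V_{>1}$ and shows (Claim \ref{claim_V1}) that at most $c(u)$ edges map to each $u$ by arguing one edge per child-subtree; you refine the target of the map to the tree edge $\{c_e, v'(e)\}$ and prove injectivity directly, which is the same content, and both proofs then close with the identity $\sum_{u \in V_{>1}} c(u) = (\ell-1)+|V_{>1}| \leq 2(\ell-1)$. (One small point worth making explicit in your injectivity argument: when both $p(v_1),p(v_2) \neq v'$, the sub-case $p(v_1)=p(v_2)$ with $v_1\neq v_2$ must be ruled out separately by noting it would put $p(v_1)$ itself in $V_{>1}$, contradicting the $V_{>1}$-freeness of the chain; your "cannot branch, so WLOG one is an ancestor of the other" phrasing glosses over this.)
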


\begin{lemma} \label{lem2}
$|A_2| \leq 2|A_2^*|+2|A_3^*|+1$.
\end{lemma}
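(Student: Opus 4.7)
The plan is to build a map $\phi : A_2 \to A_2^* \cup A_3^*$ whose multiplicity is at most $2$, setting aside one ``exceptional'' edge to account for the $+1$ in the bound. By Claim~\ref{obs2}, at most one edge $e^\circ \in A_2$ has $t_2(e^\circ)$ undefined; I put $e^\circ$ aside, and for every other $e \in A_2$, I define $\phi(e)$ to be any edge in $A^*$ that covers $t_2(e)$ (such an edge exists because $A^*$ is an augmentation of $G'$, and in particular covers the tree edge $t_2(e)$).

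The first subtask is to check that $\phi(e) \in A_2^* \cup A_3^* = A^* \setminus A_1^*$. I argue by contradiction: if $\phi(e) \in A_1^*$, then $\phi(e)$ covers some leaf edge and hence has the form $\{v_\ell, y^*\}$ with $v_\ell$ a leaf. Since $\phi(e)$ also covers $t_2(e) = \{u, p(u)\}$, the leaf $v_\ell$ is a descendant of $u$ and $y^*$ is an ancestor of $p(u)$. By the uniqueness argument underlying $|A_1^*| = \ell$, the $A_1^*$-edge associated with any leaf must be the maximal edge in $G'$ from that leaf (otherwise one could substitute the maximal edge without increasing $|A^*|$, contradicting optimality). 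But our algorithm's phase~$1$ adds exactly the maximal edge from every leaf, so phase~$1$ already contains $\phi(e)$, which covers $t_2(e)$. Hence $t_2(e)$ is covered by the end of phase~$1$. Now $t_2(e)$ is not a leaf edge, because its lower endpoint $u$ lies inside $P(e)$ and so has exactly one child in $T$, namely $v_{k-1}$; consequently no phase-$1$ edge has $t(\cdot) = t_2(e)$, and since $t_2(e)$ is already covered before phases $2$ and $3$, no later edge does either. This contradicts $t_2(e) \in t(A)$.

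The second subtask, and the main obstacle, is the multiplicity bound. Suppose for contradiction that three distinct $e_1, e_2, e_3 \in A_2$ all map to the same $e^* = \{x^*, y^*\}$. Then $e^*$ covers the three tree edges $t_2(e_1), t_2(e_2), t_2(e_3)$, which are distinct because distinct $e \in A$ have distinct $t$-values and $t_2$ is determined by $t$. All three lie on the tree path from $x^*$ to $y^*$, so they may be ordered bottom-to-top, say $t_2(e_1) < t_2(e_2) < t_2(e_3)$. The structural tools I would use are: (i) the greedy maximality of the algorithm, which forces the upper endpoint of each $e_i$ to equal the lower endpoint of $t_2(e_i)$; (ii) the $A_2$-property, which forces $P(e_i)$ to contain no vertex of $V_{>1}$; and (iii) Claim~\ref{path}, which prohibits any $t(A)$-edge strictly between $t(e_i)$ and $t_2(e_i)$. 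Pushing these through, one shows that the top of $t_2(e_i)$ must lie in $V_{>1}$ for $i = 1, 2$ (else $e_{i+1}$ could reach higher than $t_2(e_{i+1})$'s lower endpoint, contradicting its maximality or clashing with Claim~\ref{path}), and then the bottom $v_{i+1}$ of $t(e_{i+1})$ cannot be placed consistently in the tree relative to these branching vertices and to the constraint that $x^*$ is not a descendant of $v_{i+1}$ on $e_1$'s spine, yielding a contradiction.

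Granting both subtasks, I conclude $|A_2 \setminus \{e^\circ\}| \leq 2\,|A_2^* \cup A_3^*|$, hence $|A_2| \leq 2|A_2^*| + 2|A_3^*| + 1$, where the $+1$ accounts for the exceptional edge $e^\circ$. I expect the multiplicity analysis to be the hardest step, as it requires a delicate case analysis on where each $v_i$ can reside relative to the branching ($V_{>1}$) vertices located above each $t_2(e_i)$ on $e^*$'s tree path.
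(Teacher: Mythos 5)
Your mapping (send every non-exceptional $e \in A_2$ to an $A^*$-edge covering $t_2(e)$) is a genuinely different route from the paper, which maps to an $A^*$-edge covering $t(e)$ when $t(e)$ is a local or leaf edge, and only to a $t_2(e)$-coverer when $t(e)$ is a global non-leaf edge. Your plan — showing $\phi$ avoids $A_1^*$ and has multiplicity at most $2$ — would indeed yield the stated bound, so the strategy is sound. However, the execution has two substantive gaps.

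First, in Subtask~1 you assert that because $A^*$ is optimal, the $A_1^*$-edge at each leaf ``must be the maximal edge in $G'$ from that leaf.'' This is false: replacing a non-maximal leaf edge by the maximal one preserves cardinality but does not contradict optimality, so an optimal $A^*$ is free to contain a non-maximal leaf coverer, and $\phi(e)$ need not be the edge phase~1 added. The fix is small but necessary: $\phi(e)$ is still incoming to the leaf $v_\ell$ (its lower endpoint must be a descendant of $v_\ell$, hence $v_\ell$ itself), so the maximal edge $e_M$ that phase~1 actually adds reaches at least as high as $\phi(e)$ and therefore covers $t_2(e)$; the contradiction with $t_2(e)\in t(A)$ then follows as you describe, once you observe $t_2(e)$ is not a leaf edge.

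Second, and more seriously, the multiplicity argument in Subtask~2 is not correct as sketched. The assertion that greedy maximality ``forces the upper endpoint of each $e_i$ to equal the lower endpoint of $t_2(e_i)$'' is wrong: the algorithm only guarantees $e_i$ does not cover $t_2(e_i)$, and $e_i$'s top can stop strictly below $t_2(e_i)$'s lower endpoint. Also, the claimed distinctness of $t_2(e_1),t_2(e_2),t_2(e_3)$ does not follow from ``$t_2$ is determined by $t$'' (that only gives functionality, not injectivity) — it is true for edges of $A_2$, but needs the no-$V_{>1}$ property of the $P(e_i)$'s to be proven. Most importantly, the whole obstruction to a naive multiplicity-$1$ bound is the \emph{phase ordering}: global edges (phase~2) are processed before local edges (phase~3), so when $t(e)$ is local and $t_2(e)$ is global, the maximal edge added for $t(e)$ may cover $t_2(e)$ with no contradiction — this is exactly why Claim~\ref{obs1} is stated only for global $t(e)$. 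Your sketch reasons via the placement of $V_{>1}$ vertices, which misses this. A correct proof must rule out three edges mapping to $e^*$ by showing that if both $e_b,e_c$ (with $t_2(e_a)<t_2(e_b)<t_2(e_c)$) survive the processing-order argument, then both $t(e_b),t(e_c)$ are local with $t_2(e_b),t_2(e_c)$ global, and then the phase-2 edge added for $t_2(e_b)$ (which reaches at least as high as $e^*$, since $e^*$ is a candidate via the maximal-edge-per-fragment mechanism) already covers $t(e_c)$, contradicting $t(e_c)\in t(A)$. Without this fragment/phase analysis, the argument does not close.
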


To prove Lemma \ref{lem1}, we map edges in $A_1$ to vertices in $V_{>1}$ in the following way:
Let $e \in A_1$, such that $t(e)=\{v,p(v)\}$. By definition of $A_1$, on the path $P(e)$ there is a vertex in $V_{>1}$. We map $e$ to a vertex $u \in V_{>1}$ that is closest to $v$ on this path. We need the following claim.

\begin{claim} \label{claim_V1}
If $u \in V_{>1}$ has $k$ children then it is mapped to by at most $k$ edges. 
\end{claim}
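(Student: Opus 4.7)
My plan is to define an injection from the set of edges of $A_1$ mapped to $u$ into the set of $k$ children of $u$; this immediately yields the bound. For such an $e$ with $t(e)=\{v,p(v)\}$, the mapping rule places $u$ on the path $P(e)$, so $u$ is an ancestor of $v$, and hence $v$ lies in the subtree rooted at a unique child $c(e)$ of $u$. Define $e \mapsto c(e)$; the task reduces to showing that no two distinct edges of $A_1$ mapped to $u$ share the same $c(e)$.

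The structural tool I will use is a ``forced-chain'' observation. Because $u$ is the $V_{>1}$-vertex on $P(e)$ closest to $v$, every vertex of $P(e)$ strictly below $u$ lies outside $V_{>1}$, and so has at most one child; being an interior vertex of the upward path, each such vertex has at least one child as well, and thus exactly one. When $p(v) \neq u$, this means $c(e)$ together with every vertex on the tree path from $c(e)$ down to $p(v)$ has a unique child, so the descending path from $c(e)$ to $v$ is forced to follow the unique maximal chain of single-child vertices starting at $c(e)$.

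Now assume for contradiction that distinct $e_1, e_2 \in A_1$ mapped to $u$ satisfy $c(e_1) = c(e_2) = c$, and write $t(e_j)=\{v_j,p(v_j)\}$. Since $t$ is injective on $A$ (once an edge is added to cover an uncovered tree edge, that tree edge will never again trigger an addition), $v_1 \neq v_2$. Applying the forced-chain observation to whichever of the two edges satisfies $p(v_j)\neq u$ (the other case simply puts $v_j=c$ at the top of the chain) identifies a single canonical descending chain from $c$ on which both $v_1$ and $v_2$ must lie; WLOG $v_1$ is a strict ancestor of $v_2$ along it. Then $t(e_1)\in t(A)$ lies strictly above $t(e_2)$ in $T$, so $t_2(e_2)$ exists and sits at or below $t(e_1)$. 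Consequently $P(e_2)$ terminates at or below $v_1$, which is itself a strict descendant of $u$; hence $u\notin P(e_2)$, contradicting the assumption that $e_2$ is mapped to $u$.

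The main obstacle is the small boundary analysis around whether $p(v_j)=u$ (so $v_j=c$, with the strictly-below portion of $P(e_j)$ empty) or $p(v_j)\neq u$ (where the chain argument applies nontrivially). What needs to be checked is that in every combination of these two boundary possibilities, $v_1$ and $v_2$ really inhabit a common canonical chain descending from $c$, so that the ``strict ancestor'' comparison is meaningful; once that is settled, the contradiction drawn from the upper endpoint of $P(e_2)$ is routine.
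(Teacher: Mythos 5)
Your proof is correct and takes essentially the same approach as the paper's: both map each edge assigned to $u$ to the unique child of $u$ whose subtree contains $t(e)$ and then argue this assignment is injective. The only deviation is cosmetic: where the paper case-splits on whether the two paths up to $u$ nest or diverge (and shows a divergence point would itself be a $V_{>1}$ vertex below $u$), you rule out divergence up front by observing that every vertex of $P(e)$ strictly below $u$ has exactly one child, so $v_1$ and $v_2$ must lie on a common forced chain descending from $c$; the remaining (nested) case is settled by the same $t_2$-endpoint contradiction the paper uses.
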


\begin{proof}
The edges $e$ mapped to $u$ are such that $t(e)$ is in the subtree rooted at $u$. We divide this subtree to $k$ parts according to its children. 
Let $u'$ be a child of $u$, let $T_{u'}$ be the subtree rooted at $u'$, and let $T' = T_{u'} \cup \{u,u'\}$. We show that there is at most one edge $e$ where $t(e) \in T'$ that is mapped to $u$. Assume there are 2 edges $e_1,e_2 \in A_1$ such that $t(e_1),t(e_2) \in T'$ that are mapped to $u$. Let $P_1,P_2$ be the paths between $t(e_1)$ and $u$, and between $t(e_2)$ and $u$ respectively. 

If one of $P_1,P_2$ is contained in the other, and assume without loss of generality that $P_1$ contains $P_2$, then $t(e_2)$ is on the path between $t(e_1)$ to $u$. From the definition of $A_1$ there is a vertex $v' \in V_{>1}$ between $t(e_1)$ to $t(e_2)$, which is closer to $t(e_1)$ than $u$, a contradiction to the fact that $e_1$ is mapped to $u$. 
Otherwise, $P_1$ and $P_2$ diverge in some vertex $v'$ in $T_{u'}$, but then $v'$ is a vertex in $V_{>1}$ that is closer to $t(e_1)$ and $t(e_2)$, a contradiction.  
\end{proof}

Using Claim \ref{claim_V1}, we prove Lemma \ref{lem1}.

\begin{proof} [Proof of Lemma \ref{lem1}]
For each internal vertex (including $r$) we choose one child and call it the \emph{main child}, and we call the other children \emph{extra children}. Note that all the vertices in $T$ except $r$ are children of some parent, so there are $n-1$ children in $T$. Denote by $x$ the number of extra children in $T$. There are $n-\ell$ internal vertices, so there are $n-\ell$ main children, giving $x=n-1-(n-\ell)=\ell-1$.

\remove{
We map edges in $A_1$ to vertices in $V_{>1}$ in the following way:
Let $e \in A_1$, such that $t(e)=\{v,p(v)\}$. By definition of $A_1$, on the path $P(e)$ there is a vertex in $V_{>1}$. We map $e$ to a vertex $u \in V_{>1}$ that is closest to $v$ on this path. 

\begin{claim}
If $u \in V_{>1}$ has $k$ children then it is mapped to by at most $k$ edges. 
\end{claim}

\begin{proof}
The edges $e$ mapped to $u$ are such that $t(e)$ is in the subtree rooted at $u$. We divide this subtree to $k$ parts according to its children. 
Let $u'$ be a child of $u$, let $T_{u'}$ be the subtree rooted at $u'$, and let $T' = T_{u'} \cup \{u,u'\}$. We show that there is at most one edge $e$ where $t(e) \in T'$ that is mapped to $u$. Assume there are 2 edges $e_1,e_2 \in A_1$ such that $t(e_1),t(e_2) \in T'$ that are mapped to $u$. Let $P_1,P_2$ be the paths between $t(e_1)$ and $u$, and between $t(e_2)$ and $u$ respectively. 

If one of $P_1,P_2$ is contained in the other, and assume without loss of generality that $P_1$ contains $P_2$, then $t(e_2)$ is on the path between $t(e_1)$ to $u$. From the definition of $A_1$ there is a vertex $v' \in V_{>1}$ between $t(e_1)$ to $t(e_2)$, which is closer to $t(e_1)$ than $u$, a contradiction to the fact that $e_1$ is mapped to $u$. 
Otherwise, $P_1$ and $P_2$ diverge in some vertex $v'$ in $T_{u'}$, but then $v'$ is a vertex in $V_{>1}$ that is closer to $t(e_1)$ and $t(e_2)$, a contradiction.  
\end{proof}
}

By Claim \ref{claim_V1}, if $u \in V_{>1}$ has $k$ children then it is mapped to by at most $k$ edges. 
It follows that if $u$ has $k-1$ extra children, we map to it at most $k$ edges from $A_1$. In the worst case, the number of edges in $A_1$ is twice the number of extra children. In conclusion, $|A_1| \leq 2x=2\ell-2=2|A_1^*|-2$, which completes the proof.
\end{proof}

We next prove Lemma \ref{lem2}. According to Claim \ref{obs2}, there is at most one edge $e' \in A_2$ such that $t_2(e')$ does not exist. For the proof of Lemma \ref{lem2}, we map all the edges in $A_2$ except $e'$ to edges in $A^*$ in the following way: let $e \in A_2$. If $t(e)$ is a leaf edge or a local edge, we map $e$ to an edge in $A^*$ that covers $t(e)$. Otherwise, we map $e$ to an edge in $A^*$ that covers $t_2(e)$. We need the following claims.

\begin{claim} \label{global_claim}
If $t(e_1),t(e_2)$ are both global edges that are not leaf edges then $e_1,e_2$ cannot be mapped to the same edge $e \in A^*$.
\end{claim}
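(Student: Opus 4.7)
The plan is to argue by contradiction. Assume $e_1,e_2\in A_2$ with $t(e_1),t(e_2)$ both global non-leaf edges, and that both are mapped to the same $e^*\in A^*$. By the mapping rule for this branch, $e^*$ covers both $t_2(e_1)$ and $t_2(e_2)$ (in particular both $t_2$'s exist). Since every edge of $G'$ joins an ancestor to a descendant in $T$, the tree edges covered by $e^*$ form a single ancestor-to-descendant path in $T$, so $t_2(e_1)$ and $t_2(e_2)$ lie on one such path. Write $t_2(e_i)=\{u'_i,p(u'_i)\}$ and $t(e_i)=\{v_i,p(v_i)\}$, and choose indices so that $t_2(e_1)$ is at least as high as $t_2(e_2)$ on this path.

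The key structural input is what $e_i\in A_2$ buys us: the path $P(e_i)$ from $p(v_i)$ up to $u'_i$ contains no vertex of $V_{>1}$, so every vertex on this segment (including the endpoints) has at most one child in $T$. Consequently, the portion of $T$ from $u'_i$ down to $p(v_i)$ is a chain in which each vertex has a unique child, and $v_i$ is the unique child of $p(v_i)$.

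If $t_2(e_1)=t_2(e_2)=\{u^*,p(u^*)\}$, then both chains $P(e_1)$ and $P(e_2)$ descend from $u^*$ through unique-child vertices, so one chain is a prefix of the other. WLOG $p(v_1)$ is at or above $p(v_2)$. If $p(v_1)=p(v_2)$, then since $p(v_1)\notin V_{>1}$ it has only one child, forcing $v_1=v_2$; since $t\colon A\to t(A)$ is injective this gives $e_1=e_2$, a contradiction. Otherwise $p(v_1)$ is a strict ancestor of $p(v_2)$, so $v_2$ is a descendant of $v_1$, and walking up from $t(e_2)$ we meet $t(e_1)\in t(A)$ strictly below $\{u^*,p(u^*)\}$, forcing $t_2(e_2)=t(e_1)\ne t_2(e_1)$, a contradiction. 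If instead $t_2(e_1)$ is strictly higher than $t_2(e_2)$, I examine where $u'_2$ sits below $u'_1$: if $u'_2$ lies on the chain from $u'_1$ down to $p(v_1)$, then $t_2(e_2)$ is an edge in $t(A)$ strictly between $t(e_1)$ and $t_2(e_1)$ on the upward path from $v_1$, contradicting the definition of $t_2(e_1)$ as the \emph{first} such edge; otherwise $u'_2$ is $v_1$ or a descendant of $v_1$, and then $t_2(e_2)$ either coincides with $t(e_1)$ or lies strictly below it, so in either case $e^*$'s ancestor-descendant path contains $t(e_1)$, meaning $e^*$ covers both $t(e_1)$ and $t_2(e_1)$, contradicting Claim \ref{obs1}.

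The main obstacle I expect is the case analysis itself, and in particular cleanly isolating the linear chain below $u'_i$ that the hypothesis $e_i\in A_2$ guarantees, and handling the boundary cases, especially $u'_2=v_1$ (where $t_2(e_2)=t(e_1)$ and the contradiction is routed through Claim \ref{obs1}) and $p(v_1)=p(v_2)$ (where the injectivity of $t\colon A\to t(A)$ closes the argument).
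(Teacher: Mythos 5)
Your proof is correct and uses the same key ingredients as the paper's: the ancestor-descendant path covered by $e^*$, the unique-child chain structure that $A_2$-membership provides along $P(e_i)$, and Claim~\ref{obs1} for the final contradiction. The paper packages the chain argument as Claim~\ref{path} and applies it in one line (to the $e_i$ with the higher $t_2$), whereas you re-derive it inline via the explicit case split on where $u'_2$ falls, and you additionally separate out the degenerate case $t_2(e_1)=t_2(e_2)$, which the paper's one-line invocation of Claim~\ref{path} does not literally cover since that claim's hypothesis forces $t \neq t_2(e)$; the result is a somewhat longer but slightly more careful version of the same argument.
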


\begin{proof}
Assume that $t(e_1),t(e_2)$ are both global edges that are not leaf edges. In this case, $e$ covers both $t_2(e_1),t_2(e_2)$, so the path $P$ between them is between a descendant to its ancestor in the tree. Assume without loss of generality that $t_2(e_2)$ is closer to the root in $P$. By Claim \ref{path}, $t(e_2)$ is in $P$, and it follows that $e$ covers $t(e_2),t_2(e_2)$ where $t(e_2)$ is a global edge, a contradiction to Claim \ref{obs1}.   
\end{proof}

\begin{claim} \label{local}
If $t(e_1),t(e_2)$ are both local or leaf edges then $e_1,e_2$ cannot be mapped to the same edge $e \in A^*$.
\end{claim}

\begin{proof}
Assume that $t(e_1),t(e_2)$ are both local or leaf edges. In this case, $e$ covers both $t(e_1),t(e_2)$. Assume without loss of generality that $t(e_2)$ is closer to the root. Note that $t(e_1),t(e_2)$ cannot be in the same fragment, and $t(e_1)$ cannot be a leaf edge, because when we get to $t(e_1)$ in the algorithm, we cover it by the maximal edge possible, which covers $t(e_2)$ because the edge $e$ covers $t(e_1)$ and $t(e_2)$. If $t(e_1)$ is a leaf edge or is in the same fragment as $t(e_2)$, it follows that $t(e_2) \not \in t(A)$. The same argument shows that $t(e_1),t_2(e_1)$ are not in the same fragment ($t_2(e_1)$ is on the path between $t(e_1)$ and $t(e_2)$ and is covered by $e$ also). 

Hence, there is a global edge on the path $P$ between $t(e_1)$ and $t_2(e_1)$. Let $g$ be a global edge in $P$ that is closest to $t(e_1)$. If $g \in t(A)$, then when we get to $g$ in the algorithm it is still not covered, and we add the maximal edge possible in order to cover it. This edge covers $t_2(e_1)$ because the edge $e$ covers both $g$ and $t_2(e_1)$. This contradicts the fact that $t_2(e_1) \in t(A)$. 

Hence, $g \not \in t(A)$, and when we get to it in the algorithm it is already covered by an edge $\widetilde{e}$ added in order to cover a tree edge $g'$. The edge $g'$ may be a leaf edge or a global edge, so $g' \neq t(e_1)$. Note that $t(e_1)$ is on the path $P$ between $g'$ and $g$, otherwise we have a vertex in $V_{>1}$ on the path $P'$ between $t(e_1)$ to $g$ (and in particular between $t(e_1)$ and $t_2(e_1)$) at the point where $P$ and $P'$ diverge, or another global edge $g'$ between $t(e_1)$ and $g$ (if $g'$ is a leaf edge it cannot be on the path between $t(e_1)$ and $g$). Either case gives a contradiction. Hence, $\widetilde{e}$ covers $t(e_1)$, but then $t(e_1) \not \in t(A)$.    
\end{proof}

\begin{proof} [Proof of Lemma \ref{lem2}]
Our proof is based on the following claims:

\begin{enumerate}[(I)]
\item There are at most $\ell_2$ edges in $A_2$ that are mapped to edges in $A_1^*$. \label{I}
\item Each edge in $A^*$ is mapped to by at most two edges. \label{II}
\item \label{III} Each edge in $A_2^*$ is mapped to by at most one edge. 
\end{enumerate}

From the above three claims we get that the number of edges in $A_2$ is bounded by $1 + \ell_2 + |A_2^*| + 2|A_3^*|$ as follows: there is at most one edge $e'$ that is not included in the mapping, there are at most $\ell_2$ edges in $A_2$ that are mapped to edges in $A_1^*$, at most $|A_2^*|$ edges that are mapped to edges in $A_2^*$, and at most $2|A_3^*|$ edges that are mapped to edges in $A_3^*$. Note that by Claim \ref{A2_size}, $|A_2^*| \geq \ell_2$. It follows that $|A_2| \leq 2|A_2^*| + 2|A_3^*| + 1$ as needed.

\begin{proof} [Proof of (\ref{I})]
Let $e^* \in A_1^*$. Then $e^*$ covers a leaf edge $t=\{v,p(v\}$. Let $P$ be the path of tree edges that $e^*$ covers. Note that $t$ is the only edge in $P$ such that $t \in t(A)$: since we start the algorithm by covering each leaf edge by the maximal edge possible, then if $e \in A$ is added in the algorithm in order to cover $t$, it covers also all the edges in $P$. Since the only edges that may be mapped to $e^*$ are edges $\widetilde{e}$ such that $t(\widetilde{e})$ or $t_2(\widetilde{e})$ are in $P$, it follows that the only edge in $A$ that may be mapped to $e^*$ is the edge $e$. 
Note that if $e \in A_2$, in the path $P(e)$ there are no vertices in $V_{>1}$, it follows that in $P$ there are no vertices in $V_{>1}$, so $v \in L_2$ by definition. It follows that there are at most $\ell_2$ edges in $A_2$ that are mapped to edges in $A_1^*$.
\end{proof}

\begin{proof} [Proof of (\ref{II})]

Assume that there are two edges $e_1,e_2$ in $A_2$ that are mapped to the same edge $e \in A^*$. 
\remove{
\begin{claim}
If $t(e_1),t(e_2)$ are both global edges that are not leaf edges then $e_1,e_2$ cannot be mapped to the same edge $e \in A^*$.
\end{claim}

\begin{proof}
Assume that $t(e_1),t(e_2)$ are both global edges that are not leaf edges. In this case, $e$ covers both $t_2(e_1),t_2(e_2)$, so the path $P$ between them is between a descendant to its ancestor in the tree. Assume without loss of generality that $t_2(e_2)$ is closer to the root in $P$. By Claim \ref{path}, $t(e_2)$ is in $P$, and it follows that $e$ covers $t(e_2),t_2(e_2)$ where $t(e_2)$ is a global edge, a contradiction to Claim \ref{obs1}.   
\end{proof}

\begin{claim} \label{local}
If $t(e_1),t(e_2)$ are both local or leaf edges then $e_1,e_2$ cannot be mapped to the same edge $e \in A^*$.
\end{claim}

\begin{proof}
Assume that $t(e_1),t(e_2)$ are both local or leaf edges. In this case, $e$ covers both $t(e_1),t(e_2)$. Assume without loss of generality that $t(e_2)$ is closer to the root. Note that $t(e_1),t(e_2)$ cannot be in the same fragment, and $t(e_1)$ cannot be a leaf edge, because when we get to $t(e_1)$ in the algorithm, we cover it by the maximal edge possible, which covers $t(e_2)$ because the edge $e$ covers $t(e_1)$ and $t(e_2)$. If $t(e_1)$ is a leaf edge or is in the same fragment as $t(e_2)$, it follows that $t(e_2) \not \in t(A)$. The same argument shows that $t(e_1),t_2(e_1)$ are not in the same fragment ($t_2(e_1)$ is on the path between $t(e_1)$ and $t(e_2)$ and is covered by $e$ also). 

Hence, there is a global edge on the path $P$ between $t(e_1)$ and $t_2(e_1)$. Let $g$ be a global edge in $P$ that is closest to $t(e_1)$. If $g \in t(A)$, then when we get to $g$ in the algorithm it is still not covered, and we add the maximal edge possible in order to cover it. This edge covers $t_2(e_1)$ because the edge $e$ covers both $g$ and $t_2(e_1)$. This contradicts the fact that $t_2(e_1) \in t(A)$. 

Hence, $g \not \in t(A)$, and when we get to it in the algorithm it is already covered by an edge $\widetilde{e}$ added in order to cover a tree edge $g'$. The edge $g'$ may be a leaf edge or a global edge, so $g' \neq t(e_1)$. Note that $t(e_1)$ is on the path $P$ between $g'$ and $g$, otherwise we have a vertex in $V_{>1}$ on the path $P'$ between $t(e_1)$ to $g$ (and in particular between $t(e_1)$ and $t_2(e_1)$) at the point where $P$ and $P'$ diverge, or another global edge $g'$ between $t(e_1)$ and $g$ (if $g'$ is a leaf edge it cannot be on the path between $t(e_1)$ and $g$). Either case gives a contradiction. Hence, $\widetilde{e}$ covers $t(e_1)$, but then $t(e_1) \not \in t(A)$.    
\end{proof}
}
From Claim \ref{global_claim}, if $t(e_1),t(e_2)$ are both global edges that are not leaf edges then $e_1,e_2$ cannot be mapped to the same edge $e \in A^*$. From Claim \ref{local}, if $t(e_1),t(e_2)$ are both local or leaf edges then $e_1,e_2$ cannot be mapped to the same edge $e \in A^*$.
It follows that there is no edge in $A^*$ that is mapped to by three or more edges. Assume there are three edges $e_1,e_2,e_3$ that are mapped to the same edge $e \in A^*$. At least two of $t(e_1),t(e_2),t(e_3)$ are local or leaf edges, or at least two of them are global edges that are not leaf edges. Either case gives a contradiction. It follows that each edge in $A^*$ is mapped to by at most two edges as needed.
\end{proof}

\begin{proof} [Proof of (\ref{III})]
Let $e^* \in A_2^*$, and let $P$ be the path of tree edges covered by $e^*$. By definition, there is an edge $t=\{u,p(u)\} \in E_2$ that is covered by $e^*$, and the subtree $T_u$ rooted at $u$ is a path which is covered by an edge $e_1=\{u,v\} \in A_1^*$ where $v$ is a leaf. Note that there is only one edge $e_2$ such that $t(e_2) \in T_u$, which is the edge $e_2$ that covers $\{v,p(v)\}$, since all other edges in $T_u$ are already covered by $e_2$ (it is the maximal edge possible, and in particular covers all tree edges covered by $e_1$), and $e_2$ is mapped to $e_1 \not \in A_2^*$.

The only edges that may be mapped to $e^*$ are edges $e$ such that $t(e)$ or $t_2(e)$ are in $P$. There may be at most one edge $e$ mapped to $e^*$ such that $t(e)$ is a local edge, according to Claim \ref{local}. So, if there is another edge $e_3$ mapped to $e^*$ it must be a global edge such that $t_2(e_3)$ is in $P$. Note that $t(e_3)$ cannot be in $P$, otherwise we have a contradiction to Claim \ref{obs1}, and it cannot be in $T_u$ as explained above. Let $v'$ be the first vertex in $P$ on the path $P(e_3)$ between $t(e_3)$ to $t_2(e_3)$. Note that $v' \in V_{>1}$, since it has a child not in $P$ on the path $P(e)$ and another child in $P$ because it is an ancestor of $u$. In such a case $e_3$ cannot be in $A_2$. Hence, there is at most one edge in $A_2$ that is mapped to each edge in $A_2^*$ as needed. 
\end{proof}
This completes the proof of Lemma \ref{lem2}. 
\end{proof}

\uTAPtwo*

\begin{proof}
By Lemma \ref{lem1} and Lemma \ref{lem2}, we have:
$$|A|=|A_1 \cup A_2| \leq 2|A_1^*|+2|A_2^*|+2|A_3^*| - 1 \leq 2|A^*|.$$
Hence, $A$ is an augmentation in $G'$, whose size is at most twice the size of an optimal augmentation in $G'$. It corresponds to an augmentation in $G$ whose size is at most 4 times the size of an optimal augmentation in $G$ according to Lemma \ref{corr}. The running time is $O(D+\sqrt{n}\log^*{n})$ rounds by Lemma \ref{time4}.
\end{proof}

\section{Lower Bounds} \label{sec:lower}

\subsection{An $\Omega(D)$ Lower Bound for TAP in the LOCAL model} \label{sec:app_lproof}

We show that TAP is a global problem, which admits a lower bound of $\Omega(D)$ rounds, even in the LOCAL model where the size of messages is unbounded. In the LOCAL model, a vertex can learn in $r$ rounds its $r$-neighborhood, which consists of all the vertices and edges at distance at most $r$ from it. In addition, if the $r$-neighborhood of a vertex is the same in two different graphs it cannot distinguish between them in any algorithm that takes at most $r$ rounds. Based on this, we show the following.

\local*

\begin{proof}
Let $k$ be an even integer, and consider the graph $G_1$ that consists of a path $P$ of $n=2k+1$ vertices $\{v_0,v_1,...,v_{2k}\}$, and the additional edges $\{v_{2i},v_{2(i+1)}\}$ for $0 \leq i < k$. Consider also the graph $G_2=G_1 \cup \{v_0,v_{2k}\}$. Both graphs have diameter $D=\Theta(k)$. Consider an instance for TAP where $T$ is the path $P$ for both graphs $G_1$ and $G_2$. It is easy to verify that an optimal augmentation in $G_1$ includes all the edges $\{v_{2i},v_{2(i+1)}\}$ for $0 \leq i < k$, as this is the only way to cover all the edges. However, in $G_2$ an optimal augmentation includes only the edge $\{v_0,v_{2k}\}$.

Note that the $(\frac{k}{2}-1)$-neighborhood of $v_k$ is the same in both $G_1$ and $G_2$, so it cannot distinguish between them in any algorithm that takes at most $\frac{k}{2}-1$ rounds. Hence, $v_k$ must have the same output in both cases. However, in $G_1$, both of the edges $\{v_{k-2},v_k\}$, $\{v_k,v_{k+2}\}$ are included in an optimal augmentation, and in $G_2$ they are not, so any distributed algorithm that solves TAP \emph{exactly} must take $\Omega(\frac{k}{2}-1)=\Omega(D)$ rounds. 

This lower bound holds also for approximation algorithms for the weighted problem: give the weight $1$ to the edge $\{v_0,v_{2k}\}$ and the weight $\alpha + 1$ to the edges $\{v_{2i},v_{2(i+1)}\}$ for $0 \leq i < k$. Any algorithm that adds at least one of the edges $\{v_{2i},v_{2(i+1)}\}$ to the augmentation has weight at least $\alpha + 1$, and hence is not an $\alpha$-approximation to weighted TAP. Therefore, any distributed $\alpha$-approximation algorithm for weighted TAP must take $\Omega(D)$ rounds.

A similar proof shows that approximating unweighted TAP takes $\Omega(D)$ rounds for appropriate values of $\alpha$. In the unweighted case, an algorithm that adds all the edges $\{v_{2i},v_{2(i+1)}\}$ for $0 \leq i < k$, gives a $k$-approximation to the optimal augmentation in $G_1$. However, if we want a better approximation we need $\Omega(D)$ rounds. Assume that $c>1$ is a constant and we want an $\alpha$-approximation where $\alpha < \frac{k}{c}=\frac{n-1}{2c}$. Consider the $\left \lceil \frac{k}{c} \right \rceil$ edges $\{v_{2i},v_{2(i+1)}\}$ that are closest to $v_k$. Each of the vertices on these edges is at distance $\Omega(k)=\Omega(D)$ from the vertices $v_0,v_{2k}$. Hence, they cannot distinguish between $G_1,G_2$ in less than $\Omega(D)$ rounds. It follows that any distributed $\alpha$-approximation algorithm for unweighted TAP must take $\Omega(D)$ rounds.
\end{proof}

\subsection{A Lower Bound for weighted TAP in the CONGEST model} \label{sec:app_congest}

By Theorem \ref{local-lb}, when $h=O(D)$ our algorithms {\unTAP}, {\weTAP} are optimal up to a constant factor. But what about the case of $h=\omega(D)$ for the CONGEST model? We next show a family of graphs where $h=\omega(D)$, in which $\Omega(h)$ rounds are needed in order to approximate weighted TAP, where $h=O(\sqrt{n})$.
The lower bound is proven using a reduction from the 2-party set-disjointness problem, in which there are two players, Alice and Bob. Each player gets a binary input string of length $k$: $a=(a_1,...,a_k), b=(b_1,...,b_k)$, and the players have to decide whether their inputs are disjoint, i.e., whether there is an index $i$ such that $a_i = b_i = 1$ or not. It is known that in order to solve this problem, Alice and Bob have to exchange at least $\Omega(k)$ bits, even when using random protocols \cite{razborov1992distributional}. 
Our construction is based on a construction presented in \cite{sarma2012distributed, elkin2006unconditional}. 
In order to use this construction for showing lower bounds for TAP, we add to it additional parallel edges\footnote{We also show a construction with no parallel edges.} and give weights to the edges in such a way that all the edges of the input tree $T$ can be covered by parallel edges of weight 0, except for $k$ edges, $\{e_i\}_{i=1}^k$. The edge $e_i$ may be covered either by a corresponding parallel edge $e_i^A$, or by a distant edge $e_i^B$ that closes a cycle that contains $e_i$. 
However, the weights of the edges $e_i^A$ and $e_i^B$ depend on the $i$'th bit in the input strings of Alice and Bob, such that there is a light edge that covers $e_i$ if and only if this bit equals 0 at least in one of the input strings. It follows that all the $k$ edges can be covered by light edges if and only if the input strings of Alice and Bob are disjoint. 

We next describe the construction.
We start by presenting a construction that includes parallel edges, and later explain how to change it to a similar construction that does not include parallel edges.
 
\subsubsection{Construction with Parallel Edges}

We follow the constructions presented in \cite{sarma2012distributed, elkin2006unconditional}.
Let $G_1=G(k,d,p)$ be a graph that consists of $k$ paths $P_1,...,P_k$ of length $d^p$, where the vertices on the path $P_i$ are denoted by $v^i_j$, for $0 \leq j \leq d^p-1$, and a tree $S$ of depth $p$, where each internal vertex has degree $d$, so it has $d^p$ leaves denoted by $u_j$, for $0 \leq j \leq d^p-1$. In addition, there is an edge between $u_j$ to $v^i_j$ for $1 \leq i \leq k$, $0 \leq j \leq d^p-1$. 

Let $G_2$ be a weighted graph with the same structure as $G_1$, and with parallel edges on the paths and in the tree. That is, there are two parallel edges between $v^i_j$ and $v^i_{j+1}$, for $0 \leq j<d^p-1$, and there are two parallel edges between a parent to each one of its $d$ children in $S$. All of the above parallel edges have weight $0$. In addition, there are two parallel edges between $u_0$ to $v^i_0$, one of them with weight $0$. The edges between $u_j$ to $v^i_j$ for $0 < j < d^p-1$ have weight $x=\alpha k+1$. Given two binary input strings of length $k$: $a=(a_1,...,a_k), b=(b_1,...,b_k)$, the second edge between $u_0$ and $v^i_0$ has weight $x$ if $a_i=1$ and has weight $1$ otherwise. Similarly, the edge between $u_{d^p-1}$ and $v^i_{d^p-1}$ has weight $x$ if $b_i=1$ and has weight $1$ otherwise. 

The input to the TAP problem is the graph $G_2$ with a spanning tree $T_{G_2}$ rooted at $r=u_0$ (see Figure \ref{pic2}). $T_{G_2}$ includes one copy of all the path edges, and one copy of all the edges of $S$, and the edges between $r=u_0$ and $v^i_0$ that have weight $0$ for $1 \leq i \leq k$. Since we can cover all the path edges and the edges of $S$ by their parallel edges having weight $0$, in order to cover all tree edges in $T_{G_2}$ optimally we need to cover the edges between $r$ and $v^i_0$ optimally. 

\setlength{\intextsep}{0pt}
\begin{figure}[h]
\centering
\setlength{\abovecaptionskip}{-10pt}
\setlength{\belowcaptionskip}{2pt}
\includegraphics[scale=0.35]{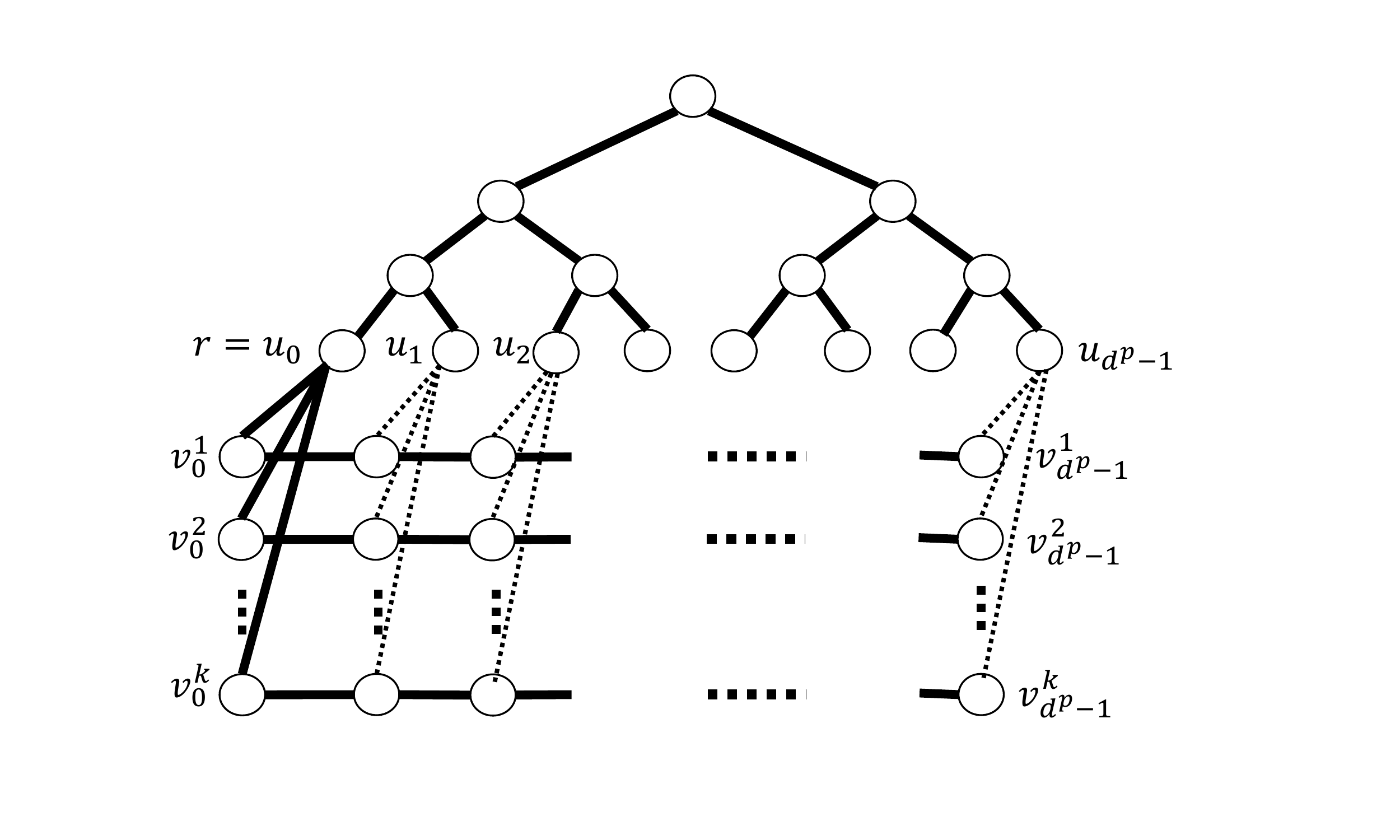}
 \caption{The structure of the graph $G_2$. The edges of $T_{G_2}$ are marked with solid lines, other edges are marked with dashed lines.} \label{pic2} 
\end{figure}

\begin{claim} \label{disj}
The cost of an optimal augmentation is $k$ if the input strings $a$ and $b$ are disjoint, and it is at least $x=\alpha k+1$ otherwise.
\end{claim}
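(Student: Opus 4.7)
The plan is to reduce the claim to an analysis of how the $k$ special tree edges $\{r, v^i_0\}$ can be covered, since all other edges of $T_{G_2}$ can be covered for free. First I would observe that every edge of a path $P_i$ and every edge of $S$ has a parallel copy of weight $0$ in $G_2$, so including all these weight-$0$ parallel copies in the augmentation takes care of every tree edge except the $k$ edges $\{r, v^i_0\}$ (for $1 \le i \le k$) at zero cost. Thus the cost of an optimal augmentation of $T_{G_2}$ equals the minimum total weight of a set of non-tree edges covering exactly these $k$ ``critical'' edges.

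Next I would enumerate the non-tree edges of $G_2$ that cover $\{r, v^i_0\}$. A non-tree edge $e$ covers $\{r, v^i_0\}$ iff the unique cycle in $T_{G_2} \cup \{e\}$ contains $\{r, v^i_0\}$. Weight-$0$ parallel copies of path edges in some $P_{i'}$ induce cycles inside $P_{i'}$, and weight-$0$ parallel copies of $S$-edges induce cycles inside $S$; none of these touches a $\{r, v^i_0\}$ edge. The remaining non-tree edges are precisely the cross edges $\{u_j, v^{i'}_j\}$ (and the second copy of $\{u_0, v^{i'}_0\}$), and such an edge induces the cycle $u_j \!\rightsquigarrow\! r \!\to\! v^{i'}_0 \!\rightsquigarrow\! v^{i'}_j \!\to\! u_j$ through $S$ and $P_{i'}$. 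This cycle contains $\{r, v^{i'}_0\}$ and \emph{only} that one among the $k$ critical edges. Consequently each non-tree edge covers at most one critical edge, so any augmentation must use at least one distinct covering edge per index $i$, and the total optimal cost decomposes as $\sum_{i=1}^k c_i$, where $c_i$ is the minimum weight of a non-tree edge covering $\{r, v^i_0\}$.

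For fixed $i$, the covering candidates are the second copy of $\{u_0, v^i_0\}$ (weight $1$ if $a_i = 0$, else $x$), the cross edges $\{u_j, v^i_j\}$ for $0 < j < d^p - 1$ (weight $x$), and $\{u_{d^p-1}, v^i_{d^p-1}\}$ (weight $1$ if $b_i = 0$, else $x$). Hence
\[
c_i \;=\; \begin{cases} 1 & \text{if } a_i = 0 \text{ or } b_i = 0,\\ x & \text{if } a_i = b_i = 1.\end{cases}
\]
If $a$ and $b$ are disjoint, every $c_i$ equals $1$ and the optimum is $\sum_i c_i = k$; also this cost is achieved by picking the weight-$1$ edge for each $i$. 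If $a$ and $b$ are not disjoint, there is some $i$ with $a_i = b_i = 1$, forcing $c_i = x = \alpha k + 1$ and hence total cost at least $x$, as claimed.

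The only real obstacle is the cycle analysis in the second step: one must verify carefully, using the explicit structure of $T_{G_2}$, that (i) no non-tree edge simultaneously covers two different $\{r, v^i_0\}$ and $\{r, v^{i'}_0\}$, and (ii) the enumeration of covering candidates for each $\{r, v^i_0\}$ is complete. Both follow from the fact that the unique $T_{G_2}$-path between any two vertices of $P_i$ lies in $P_i$, while the unique $T_{G_2}$-path from any $u_j$ to any $v^{i'}_{j'}$ goes through $S$ and then through $\{r, v^{i'}_0\}$ and along $P_{i'}$, so the only index-$i'$ path edge incident to $r$ that can appear in such a cycle is $\{r, v^{i'}_0\}$.
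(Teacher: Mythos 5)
Your proof is correct and follows essentially the same approach as the paper: cover all non-critical tree edges for free using the weight-$0$ parallel copies, observe that each non-tree edge covers at most one of the $k$ critical edges $\{r,v^i_0\}$, and compute the minimum covering cost per index from the weights determined by $a_i,b_i$. You simply make explicit (via the cycle analysis) the paper's assertion that the cross edges and the second $\{u_0,v^i_0\}$ copy are the only covering candidates, and that the optimum decomposes as a sum over $i$.
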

\begin{proof}
In order to cover the tree edge $\{r,v^i_0\}$ we can use any other edge between $u_j$ to $v^i_j$. Each such edge has weight $x$ unless at least one of $a_i$ or $b_i$ is equal to $0$, in which case the second edge between $r=u_0$ to $v^i_0$ or the edge between $u_{d^p-1}$ and $v^i_{d^p-1}$ has weight $1$. These are the only edges that cover the tree edge $\{r,v^i_0\}$. All the other edges in $T_{G_2}$ can be covered with parallel edges of weight $0$. It follows that if $a$ and $b$ are disjoint then we can cover all the edges in $T_{G_2}$ with cost $k$, otherwise the cost is at least $x$ because we need at least one edge of weight $x$. 
\end{proof}

By Claim \ref{disj}, an $\alpha$-approximation algorithm that computes the weight of an optimal augmentation on the graph $G_2$ with spanning tree $T_{G_2}$ can be used in order to solve the set-disjointness problem: if the input strings are disjoint the weight of an optimal augmentation is $k$, in which case the output of the algorithm is at most $\alpha k$. Otherwise, the output of the algorithm is at least $x=\alpha k + 1$. 

Note that if $A$ is a distributed $\alpha$-approximation algorithm for weighted TAP that takes $R$ rounds, then there is a distributed $\alpha$-approximation algorithm $A_1$ for computing the weight of the optimal augmentation that completes in $O(R+D)$ rounds, where at the end of $A_1$ all the vertices know the weight of an optimal augmentation. This done by having $A_1$ simulate $A$ and then collect the weight of the augmentation over a BFS tree and distribute it to all the vertices. Since $R=\Omega(D)$ by Theorem \ref{local-lb}, it follows that the time complexity of $A_1$ is $O(R)$ rounds, so a lower bound on the time complexity of $A_1$ gives a lower bound on the time complexity of $A$.

Our algorithms work in the CONGEST model where the maximal message size is bounded by $\Theta(\log{n})$ bits, however the proof of the lower bound is based on the proof in \cite{sarma2012distributed} which works in a more general model where the maximal message size is bounded by $B$ bits. Hence, the lower bound we show holds for this generalized model as well.

\begin{claim} \label{sim}
If there is a distributed (even randomized) $\alpha$-approximation algorithm for computing the weight of an optimal augmentation in $G_2$ that has time complexity of $R$ rounds where $R \leq \frac{d^p-1}{2}$, then set-disjointness can be solved by exchanging $O(dpBR)$ bits. 
\end{claim}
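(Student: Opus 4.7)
The plan is a communication-complexity reduction. Given the hypothesized $\alpha$-approximation distributed algorithm $A$ for the augmentation weight, Alice and Bob will jointly simulate $A$ on $G_2$, exchanging $O(dpBR)$ bits in total. Since the algorithm is assumed to have every vertex learn the output weight, at the end both players can read the reported weight $W$ off from some vertex they simulate. By Claim~\ref{disj}, $W \le \alpha k$ iff $a$ and $b$ are disjoint (the true optimum is at most $k$ in that case and at least $\alpha k + 1$ otherwise), so the disjointness answer follows.

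For the partition, I would place into $V_A$ the vertex $u_0$ together with the subtree of $S$ containing $u_0$ (after cutting $S$ along the length-$2p$ tree path between $u_0$ and $u_{d^p-1}$) and the first-half path vertices $v^i_j$ for $0 \le j \le \lfloor (d^p-1)/2 \rfloor$ and all $i$; the complementary $V_B$ then contains $u_{d^p-1}$. This ensures every $a$-dependent edge (all incident to $u_0$) is inside $V_A$ and every $b$-dependent edge (all incident to $u_{d^p-1}$) is inside $V_B$, so Alice can initialize her simulation using $a$ and the public topology alone, and Bob using $b$ alone. The cut between $V_A$ and $V_B$ then has two types of edges: at most $O(dp)$ internal edges of $S$ along the staircase-shaped frontier (at each of the $p$ depths along the $u_0$-to-$u_{d^p-1}$ path in $S$, at most $d$ sibling edges cross), plus the $k$ mid-path edges $\{v^i_{\lfloor (d^p-1)/2\rfloor}, v^i_{\lfloor (d^p-1)/2\rfloor + 1}\}$.

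For the simulation, Alice and Bob exchange $B$-bit messages along cut edges in each of the $R$ rounds. The critical step is to argue that the $k$ mid-path cut edges require \emph{no} bits of actual communication, so the per-round cost is only $O(dpB)$. I would prove this by induction on the round number $r \le R$: for every vertex $v$ at graph distance strictly greater than $r$ from both $u_0$ and $u_{d^p-1}$, the local state of $v$ at the end of round $r$ is a deterministic function of the public topology and the shared randomness alone, independent of $a$ and $b$. Because $R \le (d^p-1)/2$, the mid-path endpoints $v^i_{\lfloor (d^p-1)/2\rfloor}$ and $v^i_{\lfloor (d^p-1)/2\rfloor + 1}$ are too far from $u_0$ and $u_{d^p-1}$ (via both the direct path, which has length at least $(d^p-1)/2$, and the strictly longer path-plus-tree detour) for any input-dependent data to reach them within $R$ rounds. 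Hence their transmissions across the mid-path cut edges can be computed locally by both Alice and Bob with no exchange, and only the $O(dp)$ tree cut edges contribute, giving the target $O(dpBR)$ total bits. The main obstacle is formalizing this locality argument in the randomized CONGEST model, which I would handle by having Alice and Bob share a public random tape and then invoking the inductive distance bound edge-by-edge; once that is done, the bit count is immediate from cut size times rounds times bandwidth.
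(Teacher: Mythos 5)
Your overall plan — reduce to two-party set-disjointness by simulating the distributed algorithm — is the right high-level idea and is the same one the paper uses (citing Theorem~3.1 of Das Sarma et al.). However, the specific simulation you propose contains a genuine error that breaks the bit count.

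The critical claim, that the mid-path endpoints $v^i_{\lfloor (d^p-1)/2\rfloor}$ and $v^i_{\lfloor (d^p-1)/2\rfloor + 1}$ are at graph distance greater than $R$ from both $u_0$ and $u_{d^p-1}$, is false. Each $v^i_j$ is adjacent to the leaf $u_j$ of $S$, and $S$ is a depth-$p$ tree, so $\mathrm{dist}(v^i_j, u_0) \le 1 + 2p$ and $\mathrm{dist}(v^i_j, u_{d^p-1}) \le 1 + 2p$ for \emph{every} $j$. Indeed, the whole point of the tree $S$ is to make the diameter of $G_1$ (and $G_2$) equal to $2p+2$, so input-dependent information can reach the entire graph, including the middle of every path, within $O(p)$ rounds. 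Since $R$ can be as large as $\Theta(d^p) \gg p$, your light-cone induction fails well before round $R$: after $\Theta(p)$ rounds, the state of $v^i_{mid+1}$ may depend on $b$, and since $v^i_{mid+1} \in V_B$ is not simulated by Alice, she cannot compute the message $v^i_{mid+1} \to v^i_{mid}$ locally. Your simulation would therefore require communicating on all $k$ mid-path cut edges as well, giving $\Theta((dp+k)BR)$ bits. With $k = d^{p+1}pB$, this is much larger than the $O(dpBR)$ target, and the derived lower bound on $R$ collapses.

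The reason the Das Sarma et al. simulation succeeds is precisely that it does \emph{not} use a fixed vertex partition: each player's simulated set shrinks by one vertex per path per round (and includes all of $S$), so the frontier between Alice's and Bob's simulated regions moves inward along the long paths. Because $R \le (d^p-1)/2$, the two shrinking fronts never meet, and only the $O(dp)$ tree edges at the frontier ever need to carry real communication. This dynamic retreat is essential and cannot be replaced by a static cut plus a distance-from-source argument in this graph. (A minor additional omission: you do not address the parallel edges of $G_2$; the paper notes these at most double the per-round communication, which the asymptotic bound absorbs.)
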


\begin{proof}
The proof of the claim follows from the proof of Theorem 3.1 in \cite{sarma2012distributed}, in which it is shown how Alice and Bob can simulate a distributed algorithm on the graph $G_1$ by exchanging at most $2dpBR$ bits, where at the end of the simulation each player knows the output of one of the vertices $r, u_{d^p-1}$. In the algorithm for computing the weight of an optimal augmentation all the vertices know the weight at the end, so it is enough that each of Alice and Bob knows the output of one vertex. Note that the graphs $G_1$ and $G_2$ have the same structure, but in $G_2$ there may be two parallel edges between vertices $v,u$ that have only one edge between them in $G_1$. It follows that $v,u$ can exchange $2B$ bits between them in a round in each direction, instead of $B$ bits. Therefore, in order to simulate a distributed algorithm on $G_2$, Alice and Bob can use the same simulation but may need to exchange twice as many bits in order to simulate one round, and $4dpBR$ bits for the whole simulation, which is still $O(dpBR)$ bits, as claimed. At the end of the simulation, both Alice and Bob know an $\alpha$-approximation to the weight of an optimal augmentation, and can deduce if their input strings are disjoint according to Claim \ref{disj}.  
\end{proof}

\begin{theorem} (equivalent to Theorem 7.1 in \cite{sarma2012distributed}) \label{lowerbound2}
For any polynomial function $\alpha(n)$, integers $p > 1$, $B \geq 1$ and $n \in \{2^{2p+1}pB, 3^{2p+1}pB,...\}$, there is a $\Theta(n)$-vertex graph of diameter $2p+2$ for which any (even randomized) distributed $\alpha(n)$-approximation algorithm for weighted TAP with an instance tree $T \subseteq G$ of height $h$ requires $\Omega((n/(pB))^{\frac{1}{2}-\frac{1}{2(2p+1)}})$ rounds which is $\Omega(h)$.
\end{theorem}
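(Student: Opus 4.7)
The plan is to instantiate the construction $G_2$ from the previous subsection with carefully chosen parameters $d$ and $k$, and then chain together the information-theoretic lower bound $\Omega(k)$ for 2-party set-disjointness with the simulation bound provided by Claim~\ref{sim}. The two competing constraints are: (a) the simulation overhead $O(dpBR)$ must reach $\Omega(k)$ bits, giving $R=\Omega\bigl(k/(dpB)\bigr)$; and (b) the total number of vertices is $\Theta(kd^p)$, which must equal $\Theta(n)$. These two constraints, coupled with the auxiliary requirement $R \le (d^p-1)/2$ under which Claim~\ref{sim} applies, completely determine the optimal choice of parameters.

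Concretely, I would set
\[
d = \bigl(n/(pB)\bigr)^{1/(2p+1)}, \qquad k = d^{p+1}\,pB.
\]
Then $kd^p = d^{2p+1}\,pB = n$, so the vertex count of $G_2$ is $\Theta(n)$, matching the set of admissible $n$ in the theorem. Plugging into $R=\Omega(k/(dpB))$ yields
\[
R \;=\; \Omega\!\left(\frac{d^{p+1}\,pB}{d\,pB}\right) \;=\; \Omega(d^p) \;=\; \Omega\!\left((n/(pB))^{p/(2p+1)}\right),
\]
and since $p/(2p+1)=\tfrac{1}{2}-\tfrac{1}{2(2p+1)}$ this matches the exponent in the statement. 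The side condition $R\le (d^p-1)/2$ is consistent with our conclusion $R=\Theta(d^p)$ up to constants, so one checks that if an algorithm beats the claimed bound it is well within the regime where Claim~\ref{sim} applies, yielding the contradiction.

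Next I would verify the two remaining geometric claims. The diameter of $G_2$ is exactly $2p+2$: the tree $S$ has depth $p$, and any two vertices $v^i_j, v^{i'}_{j'}$ on the paths reach each other in the route $v^i_j \to u_j \to \text{root}(S) \to u_{j'} \to v^{i'}_{j'}$, giving at most $1+p+p+1=2p+2$ hops (and this is tight for $j\ne j'$ and $i\ne i'$). For the height of $T_{G_2}$, the tree is rooted at $r=u_0$ and the deepest vertex is $v^i_{d^p-1}$, reached by going through $S$ (contributing up to $p+1$ edges from $r$ to $u_{d^p-1}$ and then one more edge down to the path), and then along the path $P_i$ of length $d^p-1$. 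Thus $h=\Theta(d^p)=\Theta((n/(pB))^{p/(2p+1)})$, which matches $R$, proving the lower bound is $\Omega(h)$ as asserted.

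The main subtle point is the parameter tuning: the exponent $p/(2p+1)$ is forced by simultaneously equating the vertex budget $kd^p=\Theta(n)$ with the bit-communication lower bound $dpBR=\Omega(k)$, and any different choice of $d$ and $k$ gives a strictly weaker bound. Beyond this, the argument is essentially a reduction bookkeeping exercise: the substantive work of showing disjointness reduces to approximating TAP weight is already done in Claim~\ref{disj}, and the substantive work of translating a distributed algorithm into a two-party protocol is already encapsulated in Claim~\ref{sim}, so the present theorem amounts to choosing parameters and invoking Razborov's $\Omega(k)$ randomized communication lower bound for set-disjointness.
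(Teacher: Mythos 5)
Your proposal is correct and follows essentially the same route as the paper: choose $k=d^{p+1}pB$, invoke Claim~\ref{sim} together with Razborov's $\Omega(k)$ bound to get $R=\Omega(\min(d^p,k/(dpB)))=\Omega(d^p)$, and observe that $n=\Theta(d^{2p+1}pB)$ and $h=\Theta(d^p)$, which gives the stated exponent. One small slip in your height verification: in $T_{G_2}$ the only non-path, non-$S$ tree edges are $\{u_0,v^i_0\}$, so the deepest vertex $v^i_{d^p-1}$ is reached along $r=u_0\to v^i_0\to\cdots\to v^i_{d^p-1}$ (not through $S$ to $u_{d^p-1}$), giving $h=d^p+1$; this does not affect the asymptotic $h=\Theta(d^p)$ you need.
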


\begin{proof}
By Claim \ref{sim} and the lower bound on set-disjointness \cite{razborov1992distributional} we have $R=\Omega(min(d^p,\frac{k}{dpB}))$. Choosing $k=d^{p+1}pB$ gives $\Omega(min(d^p,\frac{k}{dpB}))=\Omega(d^p)$. As in \cite{sarma2012distributed,elkin2006unconditional}, $G_1$ and $G_2$ have $n=\Theta(kd^p)=\Theta(d^{2p+1}pB)$ vertices and diameter $2p+2$. In addition, $h = d^p + 1$ since the height of $T_{G_2}$ is determined by the length of the paths. Hence, we have $R = \Omega(d^p) = \Omega(h)$ where $h = \Theta(d^p) = \Theta((n/(pB))^{\frac{1}{2}-\frac{1}{2(2p+1)}})$.
\end{proof}

Choosing $B=p=\Theta(\log{n})$ in Theorem \ref{lowerbound2} gives the following.

\congest*

\subsubsection{Construction without Parallel Edges}

We next explain how to modify the above construction to avoid parallel edges. 
We define $G_3$ as follows: 
\begin{itemize}
\item If there is a single edge between the vertices $v$ and $u$ in $G_2$, then this edge is in $G_3$ and has the same weight as it has in $G_2$.
\item For every pair of vertices $v,u$ which have two parallel edges between them in $G_2$, we add in $G_3$ a new vertex $vu$ and replace one of the two parallel edges between $v$ and $u$ which has weight $0$ by two edges $\{v,vu\}$ and $\{vu,u\}$, both with weight $0$.\footnote{Notice that at least one of the two parallel edges indeed has weight $0$.}
\end{itemize}

The tree $T_{G_3}$ in the TAP problem in $G_3$ is constructed according to the tree $T_{G_2}$ in $G_2$, such that if $\{v,u\}$ is a tree edge in $T_{G_2}$, then $\{v,vu\},\{vu,u\}$ are tree edges in $T_{G_3}$. Note that the edge $\{v,u\}$ covers both $\{v,vu\},\{vu,u\}$. Since all the edges on the paths and in the tree $S$ in $G_2$ have weight 0, all the edges on the corresponding paths and tree $S_{G_3}$ in $T_{G_3}$ can be covered by edges of weight $0$. In order to cover all tree edges in $T_{G_3}$ optimally we need to cover the edges $\{r,rv^i_0\},\{rv^i_0,v^i_0\}$ optimally. Similarly to the case in $G_2$, we can cover them by any one of the edges between $u_j$ and $v^i_j$. All those edges have weight $x$ unless at least one of $a_i$ or $b_i$ is equal to $0$, so Claim \ref{disj} holds for $G_3$ as well.

If $n$ is the number of vertices in $G_2$, then in $G_3$ the number of vertices is $2n-1=\Theta(n)$ because we add one vertex for each edge of $T_{G_2}$ (the parallel edges in $G_2$ are only on the tree $T_{G_2}$). Similarly, the height of $T_{G_3}$ is $2h=\Theta(h)$ where $h$ is the height of $T_{G_2}$, and the diameter of $G_3$ is $\Theta(D)$ where $D$ is the diameter of $G_2$.

Assume that $A$ is an $\alpha$-approximation algorithm for weighted TAP that takes $R$ rounds in $G_3$, then there is an $\alpha$-approximation algorithm $A_1$ for weighted TAP that takes $R$ rounds in $G_2$. $A_1$ simulates $A$: all the vertices that are both in $G_2$ and in $G_3$ simulate themselves. For each vertex $vu$, one of the vertices $v,u$ simulates $vu$, and assume w.l.o.g that $v$ simulates $vu$. Note that there are two parallel edges between $v$ and $u$ in $G_2$. One of them is used in order to simulate the messages sent on the edge $\{v,u\}$ in $A$, and the other is used in order to simulate the messages sent on the edge $\{vu,u\}$ in $A$. Note that there is no need for communication in order to simulate messages sent on the edge $\{v,vu\}$ because the vertex $v$ simulates both $v,vu$. It follows that the simulation of $A$ in $G_2$ takes $R$ rounds. In addition, from the correspondence between $G_2$ and $G_3$, any augmentation in $G_3$ is an augmentation in $G_2$, and vice versa. 

The above implies that the lower bound holds for $G_3$ (which has no parallel edges) as well, and hence Theorem \ref{lowerbound2} holds also for simple graphs.

\section{Discussion} \label{sec:dis}

In this paper, we present the first distributed approximation algorithms for TAP.  Many intriguing problems remain open. First, can  we get efficient distributed algorithms for TAP with an approximation ratio better than 2? In the sequential setting, achieving an approximation better than 2 for weighted TAP is a central open question. However, there are several recent algorithms achieving better approximations
for unweighted TAP \cite{kortsarz2016simplified, cheriyan2015approximating, DBLP:conf/stoc/0001KZ18} or for weighted TAP with bounded weights \cite{DBLP:journals/corr/FioriniGKS17, adjiashvili2017beating}.


Second, there are many additional connectivity augmentation problems, such as increasing the edge connectivity from $k$ to $k+1$ or to some function $f(k)$, as well as augmentation for increasing the vertex connectivity. Such problems have been widely studied in the sequential setting, and a natural question is to design distributed algorithms for them.

Finally, it is interesting to study TAP and additional connectivity problems also in other distributed models, such as the dynamic model where edges or vertices may be added or removed from the network during
the algorithm. An interesting question is how to maintain highly-connected backbones when the network can change dynamically.

\bibliographystyle{spmpsci}
\bibliography{TAP}

\end{document}